\long\def\ca#1\cb{} 
\newcommand{\ketbra}[2]{| \hspace{1pt} #1 \rangle \langle #2 \hspace{1pt} |}
\newcommand{\outp}[1]{| \hspace{1pt} #1 \rangle \langle #1 \hspace{1pt} |}
\newcommand{\ketbraq}[1]{\ketbra{#1}{#1}}
\newcommand{\bramatket}[3]{\langle #1 \hspace{1pt} | #2 | \hspace{1pt} #3 \rangle}
\newcommand{\bramatketq}[2]{\bramatket{#1}{#2}{#1}}
\newcommand{\nbox}[2][9]{\hspace{#1pt} \mbox{#2} \hspace{#1pt}}
\newcommand{\ket}[1]{|#1\rangle}               
\newcommand{\bra}[1]{\langle #1|}              
\renewcommand{\geq}{\geqslant}
\renewcommand{\leq}{\leqslant}
\DeclareMathOperator{\betamax}{\beta_{max}}
\DeclareMathOperator{\calP}{\mathcal{P}}
\DeclareMathOperator{\calM}{\mathcal{M}}
\DeclareMathOperator{\Herm}{Herm}
\DeclareMathOperator{\Pos}{Pos}
\newcommand{\calE}{\mathcal{E}}
\DeclareMathOperator{\calT}{\mathcal{T}}
\newcommand{\calR}{\mathcal{R}}
\newcommand{\calI}{\mathcal{I}}
\newcommand{\calD}{\mathcal{D}}
\newcommand*{\id}{\openone}
\newcommand*{\Hmin}{H_{\min}}
\newcommand*{\Hmax}{H_{\max}}
\newtheoremstyle{defi}
     {10pt}
     {10pt}
     {\sl}
     {}
     {\bfseries}
     {:}
     {.5em}
     {}
\theoremstyle{defi}
\newtheorem{thm}{Theorem}
\newtheorem{prop}[thm]{Proposition}
\newtheorem{lemma}[thm]{Lemma}
\newtheorem{cor}[thm]{Corollary}
\newtheorem*{notation}{Notation}
\newtheorem*{lprog}{Linear Program}
\newtheorem{defi}[thm]{Definition}
\newcommand{\dt}[1]{\textbf{#1}}
\newtheorem{rmk}[thm]{Remark}
\newtheorem{ex}[thm]{Example}
\newenvironment{linprog}[2]{
\smallskip
\begin{tabular}{ll}
#1 & #2\\
subject to
}
{
\end{tabular}
\smallskip
}
\crefname{prop}{Proposition}{Propositions}
\crefname{defi}{Definition}{Definitions}
\crefname{lemma}{Lemma}{Lemmas}
\crefname{thm}{Theorem}{Theorems}
\crefname{section}{Section}{Sections}
\crefname{exercise}{Exercise}{Exercises}
\crefname{rmk}{Remark}{Remarks}
\crefname{figure}{Figure}{Figures}
\crefname{equation}{equation}{equations}
\crefname{lprog}{Linear Program}{Linear Programs}
\crefname{ex}{Example}{Examples}
\newcommand{\tr}{{\mathsf{Tr}}}
\newcommand{\R}{\mathcal{R}}
\newcommand{\hmin}{{{\rm H}_{\rm min}}}
\newcommand{\dec}{{{\rm Dec}}}
\newcommand{\Dec}{{{\rm Dec}}}
\newcommand{\calH}{\mathcal{H}}
\newcommand{\calS}{\mathcal{S}}
\newcommand{\abs}[1]{\left\vert #1 \right\vert}
\begin{document}
\title{Understanding nature from experimental observations: a theory independent test for gravitational decoherence}
\author{C. Pfister}
\affiliation{QuTech, Delft University of Technology, Lorentzweg 1, 2628 CJ Delft, Netherlands}
\affiliation{Centre for Quantum Technologies, 3 Science Drive 2, 117543 Singapore}
\author{J. Kaniewski}
\affiliation{QuTech, Delft University of Technology, Lorentzweg 1, 2628 CJ Delft, Netherlands}
\affiliation{Centre for Quantum Technologies, 3 Science Drive 2, 117543 Singapore}
\author{M. Tomamichel}
\affiliation{University of Sydney, School of Physics, NSW 2006 Sydney, Australia}
\affiliation{Centre for Quantum Technologies, 3 Science Drive 2, 117543 Singapore}
\author{A. Mantri}
\affiliation{Centre for Quantum Technologies, 3 Science Drive 2, 117543 Singapore}
\author{R. Schmucker}
\affiliation{Centre for Quantum Technologies, 3 Science Drive 2, 117543 Singapore}
\author{N. McMahon}
\affiliation{ARC Centre for Engineered Quantum Systems,
School of Mathematics and Physics,
The University of Queensland, St Lucia, QLD 4072, Australia}
\author{G. Milburn}
\affiliation{ARC Centre for Engineered Quantum Systems,
School of Mathematics and Physics,
The University of Queensland, St Lucia, QLD 4072, Australia}
\author{S. Wehner}
\email{steph@locc.la}
\affiliation{QuTech, Delft University of Technology, Lorentzweg 1, 2628 CJ Delft, Netherlands}
\affiliation{Centre for Quantum Technologies, 3 Science Drive 2, 117543 Singapore}

\begin{abstract}
Quantum mechanics and the theory of gravity are presently not compatible. A particular question is whether gravity causes decoherence - an unavoidable source of noise.
Several models for gravitational decoherence have been proposed, not all of which can be described quantum mechanically \cite{Dio89,Penrose1996,Diosi2011}. In parallel, several experiments have been proposed to test some of these models \cite{Bouwmeester2012,Bouwmeester2003,ignacioGrav}, where the data obtained by such experiments is analyzed assuming quantum mechanics. Since we may need to modify quantum mechanics to account for gravity, however, one may question the validity of using quantum mechanics as a calculational tool to draw conclusions from experiments concerning gravity.

Here we propose an experiment to estimate gravitational decoherence whose conclusions hold even if quantum mechanics would need to be modified.
We first establish a general information-theoretic notion of decoherence which reduces to the standard measure within quantum mechanics. Second, drawing on ideas from quantum information, we propose a very general experiment that allows us to obtain a \emph{quantitative} 
estimate of decoherence of any physical process for any physical theory satisfying only very mild conditions.
Finally, we propose a concrete experiment using optomechanics to estimate gravitational decoherence in any such theory, including quantum mechanics as a special case.

Our work raises the interesting question whether other properties of nature could similarly be established from experimental observations alone - that is, without already having a rather well formed theory of nature like quantum mechanics to make sense of experimental data. 
\end{abstract}
\maketitle

Experiments aiming to test the presence - and amount - of gravitational decoherence generally go beyond established theory. Many theoretical models for gravitational decoherence have been proposed~\cite{Diosi2011,diosi1, diosi2, Dio89, KTM14,hu1,hu2,hu3,kay,breuer,wang}, and it is wide open if one of these proposals is correct. As such, experiments are of a highly exploratory nature, aiming to establish data points to which one may tailor future theoretical proposals.
This task is made even more difficult by the fact that quantum mechanics and gravity do not go hand in hand, and indeed quantum mechanics may need to be modified in a yet unknown way in order to account for gravitational effects such as decoherence.
We are thus compelled to design an experiment that provides a guiding light for the search for the right theoretical model - or indeed new physical theory - whose conclusions do not rely on quantum mechanics. 

\section*{Decoherence in QM}

As an easy warmup, let us first focus on the concept of decoherence \emph{within} quantum mechanics. We first show how the protocol given in Figure~\ref{fig:abstract} allows us to estimate quantum mechanical decoherence without knowing the decoherence process, and without doing quantum tomography to determine it. 
Traditionally, the presence of decoherence within quantum mechanics is related to the change of state due to measurement and the ''collapse of the wavefunction''. There are two complimentary ways to view this based on unconditional and conditional states. Given some pure quantum state, $\alpha \ket{0} + \beta \ket{1}$, and an arbitrarily accurate measurement of the variable diagonal in this basis,  the post-measurement conditional states are   $\ket{0}$ or $\ket{1}$ conditioned on the measurement outcome. On the other hand if this measurement has taken place but the results are unknown, the resulting unconditional state is given by the a quantum density matrix $\rho=|\alpha|^2|0\rangle\langle 0|+|\beta|^2|1\rangle\langle 1|$. The vanishing of the off-diagonal matrix elements in the measurement basis for the post measurement unconditional state forms decoherence. If the measurement is not arbitrarily accurate (i.e. weak) the off-diagonal matrix elements are reduced but do not vanish.  More general forms of decoherence correspond to a decay of off-diagonal terms in the density operator $\rho$ with respect to any basis, and can occur due to the interaction of the system with an environment that may not be a measurement procedure of any kind. It is clear that this way of thinking about decoherence is entirely tied to the quantum mechanical matrix formalism, and also offers little in the way of quantifying the amount of decoherence in an operationally meaningful way.

The modern way of understanding decoherence in quantum mechanics in a quantitative way is provided by quantum information theory. One thereby thinks of a decoherence process as a interaction of a system $A'$ with an environment as described in Figure~\ref{fig:deco}, resulting in a quantum channel $\Gamma_{A' \rightarrow B}$. 
The amount of of decoherence can now be quantified by the channel's ability to transmit quantum information, i.e., its quantum capacity (see e.g.~\cite{Wil13} and the appendix for further background). Concretely, one considers $n$ uses of the channel given by $\Gamma_{A'\rightarrow B}^{\otimes n}$, and asks how many qubits $k$ we can send in relation to $n$ using an error-correcting encoding. Of particular interest is thereby the so-called single shot capacity, which determines the largest rate $k/n$ up to a given error parameter for any choice of $n$~\footnote{The asymptotic quantity often considered in information theory arises in the limit of $n\rightarrow \infty$, but the single-shot capacity gives more refined statements which are valid for any $n$.}. This single-shot capacity is determined by the so-called min-entropy $\hmin(A|E)$~\cite{OneShotDecouple, nilanjana}.
\begin{figure}
\includegraphics{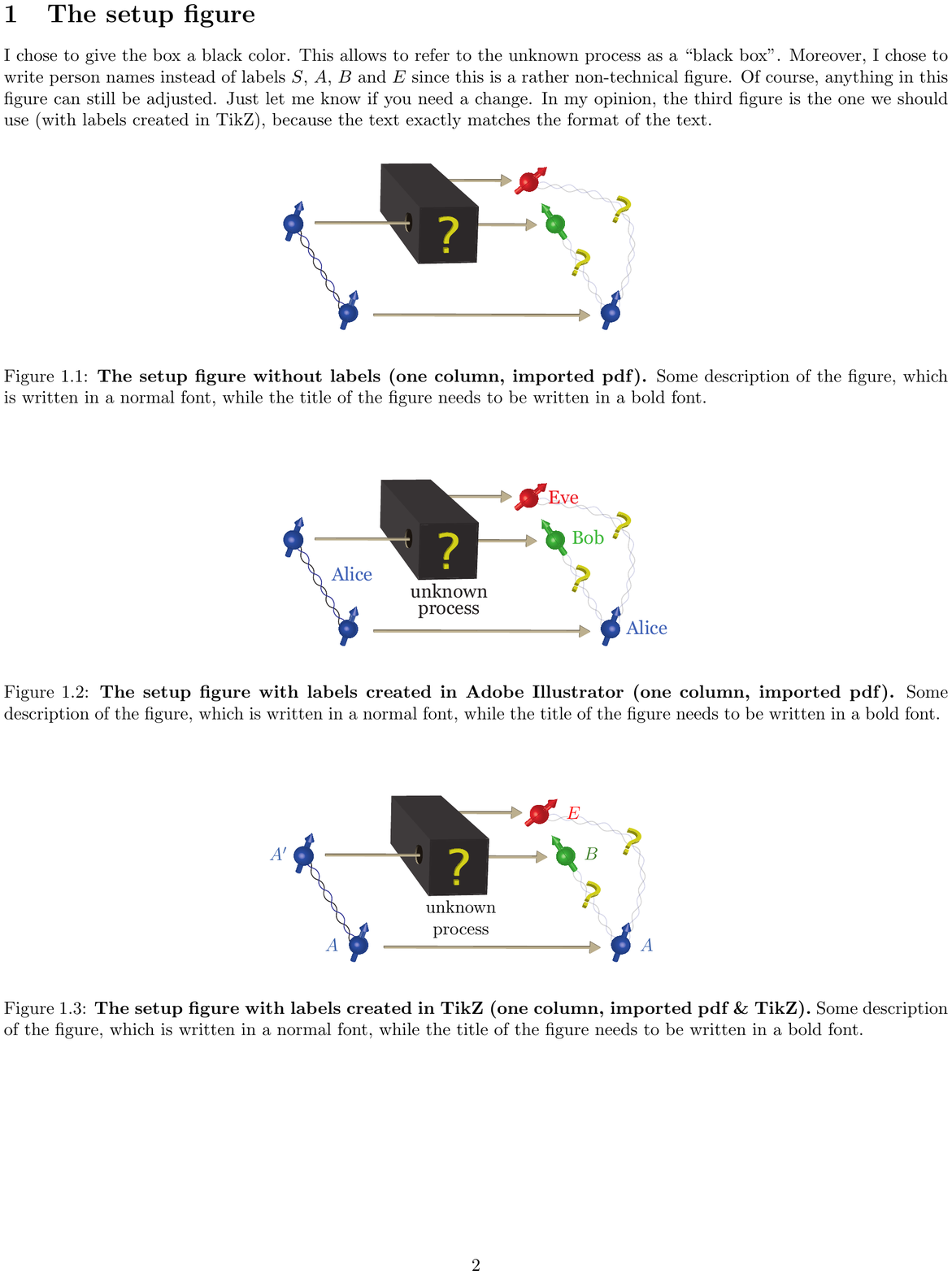}
\caption{Any decoherence process - also known as a \emph{(quantum) channel} - can be thought of as an interaction $U_I$ of the system $A'$ with an environment $E_{\rm in}$. 
In quantum mechanics, the resulting state is the output of the channel $\rho_B = \Gamma_{A'\rightarrow B}(\rho_{A'}) = \tr_{E}[U_I \rho_{A'} \otimes \outp{\Psi_{E_{\rm in}}} U_I^\dagger]$. In general, $B$ may be a smaller or larger system than $A'$. In the examples below, however, we will focus on the case where $A'$ and $B'$ have the same dimension, corresponding to the case where a fixed system $A'=B$ experiences some interaction with another system $E_{\rm in} = E$. The channels ability to preserve quantum information - that is, the amount of decoherence - can be characterized by how well it preserves entanglement between an outside system $A$ and $A'$.
We note that our treatment of theories that go beyond standard quantum mechanics makes no statement
 whether the environment is an actual physical system, or merely a mathematical Gedankenexperiment possibly used to describe an intrinsic decoherence process.
In full generality, the experiment consists of a Bell experiment in which a source of decoherence is introduced deliberately. For simplictly, we consider an experiment for the CHSH inequality, although our analysis could easily be extended to any other Bell inequality. In each run, a source prepares the maximally entangled state $\Phi_{AA'}$, where $A'$ is subsequently exposed to the decoherence process to be tested. We then perform the standard CHSH measurements: system $A$ is measured with probability $1/2$ using observables $A_0=\sigma_X$ and $A_1=\sigma_Z$ respectively. System $B$ is measured using observables $B_0=(\sigma_X-\sigma_Z)/\sqrt{2}$ and $B_1=(\sigma_X+\sigma_Z)/\sqrt{2}$ with probability $1/2$ each. 
Performing the experiment many times allows an estimate of $\beta = \tr[\rho_{AB} (A_0\otimes B_0 + A_0\otimes B_1 + A_1\otimes B_0 - A_1 \otimes B_1)]$.
}
\label{fig:deco}
\label{fig:abstract}
\end{figure}

Apart from its information-theoretic significance, the min-entropy has a beautiful operational interpretation that also makes its role as a decoherence measure intuitively apparent. 
Very roughly, the amount of decoherence can be understood as a measure of how correlated $E$ becomes with $A$. 
Suppose we start with a maximally entangled test state $\Phi_{AA'}$ where the decoherence process is applied to $A'$. This results in a state $\ket{\Psi_{ABE}}$ (see Figure~\ref{fig:deco}).
If no decoherence occurs, the output state will be of the form $\Phi_{AB} \otimes \outp{0}_E$ where $A'=B$. That is, $A$ and $B$ are maximally entangled, but $A$ and $E$ are completely uncorrelated. 
The strongest decoherence, however, produces an output state of the form $\Phi_{AE_1} \otimes \rho_{E_2} \otimes \outp{0}_B$ where $A'=E_1$ and $E=E_1E_2$. That is, $A$ is now maximally entangled with $E_1$, whereas $A$ and $B$ are completely uncorrelated. What about the intermediary regime?
The min-entropy can be written as 
\begin{align}
\hmin(A|E) = - \log d_A\ \dec(A|E)\ ,
\end{align}
where $d_A$ is the dimension of $A$, and~\cite{KRS09} 
\begin{align}
\dec(A|E) = \max_{\R_{E\rightarrow A'}} F^2(\Phi_{AA'},\id_A \otimes \R_{E\rightarrow A'}(\rho_{AE}))\ .
\end{align}
The maximization above is taken over all quantum operations $\R_{E\rightarrow A'}$ on the system $E$, which aim to bring the state $\rho_{AE}$ as close as possible to the maximally entangled state $\Phi_{AA'}$. Intuitively, $\dec(A|E)$ can thus be understood as a measure of how far the output $\rho_{AE}$ is from the setting of maximum decoherence (where $\rho_{AE} = \Phi_{AE}$ is the maximally entangled state).
If there is no decoherence, we have $\rho_{AE} = \id/d_A \otimes \rho_E$ giving $\dec(A|E) = 1/d_A^2$ and $\hmin(A|E) = \log d_A$.
If there is maximum decoherence, we have $\rho_{AE_1} = \Phi_{AA'}$ giving $\dec(A|E) = 1$ and $\hmin(A|E) = - \log d_A$ where $\R_{E\rightarrow A'} = \tr_{E_2}$ is simply the operation that discards the remainder of the environment $E_2$.
A larger value of $\dec(A|E)$ thus corresponds to a larger amount of decoherence.
In the quantum case, $\dec(A|E)$ can be computed using any semi-definite programming solver~\cite{Ren05, sedumi}.

We hence see that in quantum mechanics, the relevant measure of decoherence is simply $\dec(A|E)$. 
How can we estimate it an experiment? Our goal in deriving this estimate will be to rely on concepts that we can later extend beyond the realm of quantum theory, deriving a universally valid test.
It is clear that to estimate $\dec(A|E)$ we need to make a statement about the entanglement between $A$ and $E$ - yet $E$ is inaccesible to our experiment. 
A property of quantum mechanics known as the monogamy of entanglement~\cite{terhal:monogamy} nevertheless allows such an estimate: if $\rho_{AB}$ is highly entangled, then $\rho_{AE}$ is necessarily far from highly entangled. Since low entanglement in $\rho_{AE}$ means that $\dec(A|E)$ is low, a test that is able to detect entanglement between $A$ and $B$ should help us bound $\dec(A|E)$ from above. 
We note that whereas any experimental proposal demands that we specifiy concrete measurements to be performed, our conclusions remain valid even if we do not have full control over the measurements, possibly because they are also somehow affected by an gravitational interaction in an unknown fashion. Dealing with unknown states and measurements is the essence of so-called device independence~\cite{ABGM07} in quantum cryptography. Allowing arbitrary measurements again forms a crucial stepping stone, enabling us to extend our results beyond quantum mechanics. 

\begin{figure*}[htb]
  \begin{center}
    \includegraphics[scale=0.9]{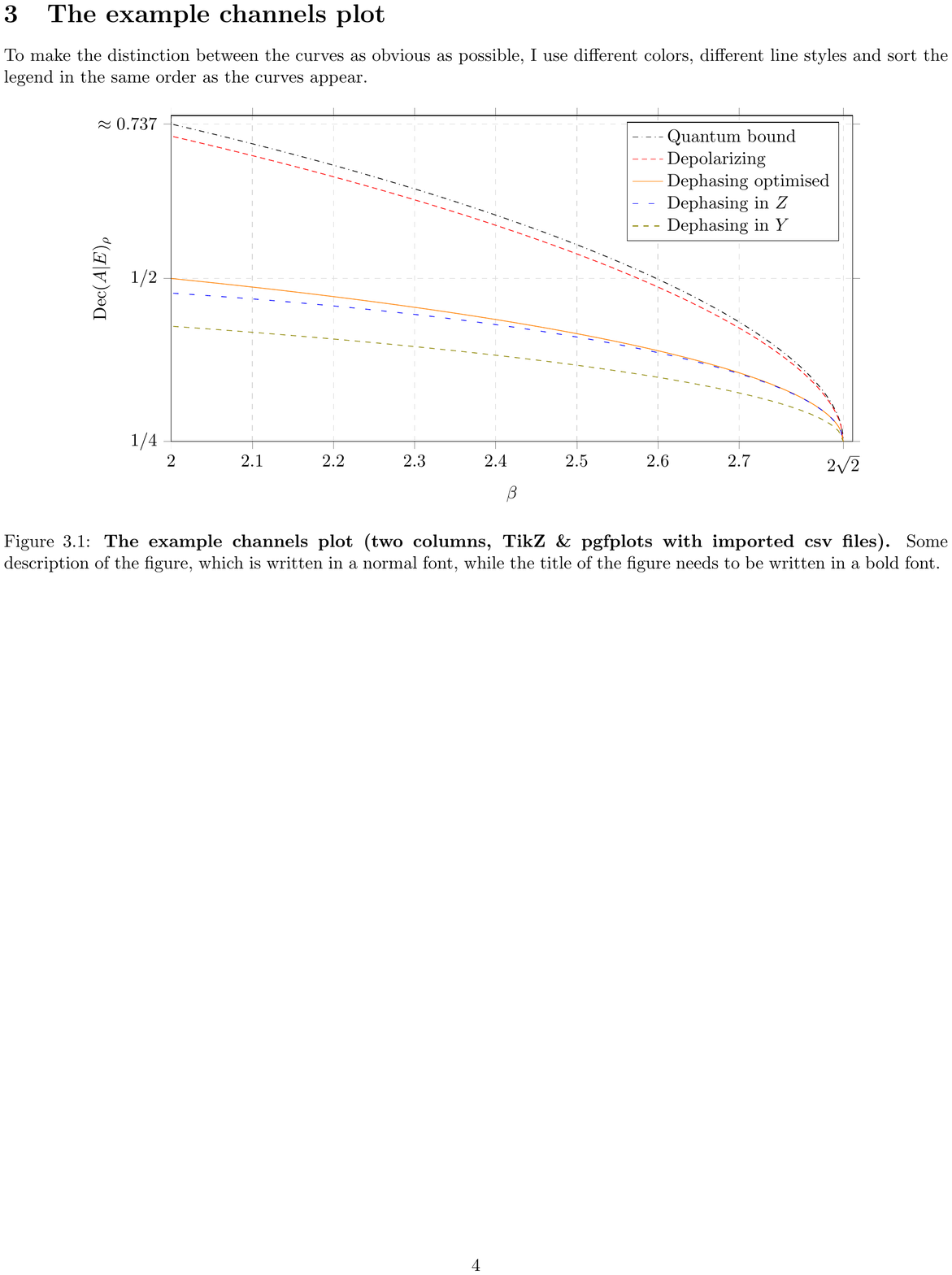}
    \caption{\textbf{Comparison of the quantum bound with the actual values of 
      $\Dec(A|E)_\rho$ for some example channels and measurements} (colors online). 
      The black dashdotted line on top shows the quantum bound, i.e. the maximal value of 
      $\Dec(A|E)_\rho$ that is compatible with a measured CHSH value $\beta$ in quantum 
      theory. The other four plots are parametric plots: The parameter that is varied is the 
      noise parameter of the channel. For each noise parameter, the value of 
      $\Dec(A|E)_\rho$ of the resulting state is calculated, as well as the CHSH value 
      $\beta$ that one would meausre for this state using the standard measurements in the 
      $X$-$Z$-plane that would be optimal for an EPR pair. This measurement happens to be 
      optimal for the resulting state for the depolarizing channel, but not for the 
      dephasing channels. The orange solid line also shows such a parametric plot for the 
      dephasing channel, but for that line, the CHSH value $\beta$ is not calculated for the 
      standard measurement for the EPR pair but for the measurement that is optimal for the 
      actual resulting state. This is done using a formula found in \cite{HHH95}. 
      The resulting curve is independent of the dephasing direction.
      \label{fig:channels-imported}}
  \end{center}
\end{figure*}

\section*{Beyond QM}
The real challenge is to show that the conclusions of our test remain valid even outside of quantum mechanics. 
Since we want to make as few assumptions as possible, we consider the most general probabilistic theory, in which we are only given a set of possible states $\Omega$ and measurements on these states. Every measurement is thereby a collection of effects $e_a: \Omega \rightarrow [0,1]$ satisfying $e_a(\omega) \geq 0$ and $\sum_a e_a(\omega) = 1$ for all $\omega \in \Omega$. We also refer to a measurement as an instrument $M = \{e_a\}_a$.  The label $a$ corresponds to a measurement outcome '$a$'. The notion of separated systems $A$, $B$ and $E$ is in general difficult to define uniquely. We thus again make the most minimal assumption possible in which we identify ''systems'' $A$, $B$ and $E$ by sets of measurements 
that can be performed. For simplicitly, we take measurements and operations in the sets $A$,$B$, and $E$ to commute, but do not impose any other strucuture. 
We thus merely use labels $A$ and $B$ and $E$ for commuting measurements.
This means that for maps going from a system $E$ to an output system $A'$ like $\R_{E \rightarrow A'}$ the map is really from $E$ to $E$ and we use $A'$ merely 
to remind ourselves we consider a restricted class of measurements on the output. Again, this is analogous to quantum mechanics where such measurements consist of operators on $A'$ and the identity elsewhere (see appendix for a discussion). 

The first obstacle consists of defining a general notion of decoherence. We saw that quantumly decoherence can be quantified by how well correlations between $A$ and $A'$ are preserved, 
and this can be measured by how well the decoherence process preserves the maximally correlated state. Indeed, we can also quantify classical noise in terms of how well it preserves correlations, where the maximally correlated state takes on the form $(1/d_A) \sum_a \outp{a}_A \otimes \outp{a}_{A'}$ for some classical symbols $a$. We hence start by defining the set of maximally correlated states, by observing a crucial and indeed defining property of the maximally correlated in quantum mechanics. Concretely, $A$ and $A'$ are maximally entangled if and only if for any von Neumann measurement on $A$, there exists a corresponding measurement on $A'$ giving the same outcome. Again, the same is also true classically but made trivial by the fact that only one measurement is allowed. In analogy, we thus define the set of maximally correlated states as
\begin{align}
\Psi_{AA'} &= \left\{\Phi \in \Omega_{AA'} \mid \forall M^A = \{e_a^A\}_a\ \exists M^B=\{e_a^B\}_a \right. \nonumber\\
&\qquad \left. \mbox{\ such that\ } \sum_a e_a^A e_a^B (\omega) = 1\right\}\ .
\end{align}
This set coincides with the set of maximally entangled states in quantum mechanics, where $A'$ can potentially contain an additional component $\sigma_{A'}$ in $\Phi_{AA'} \otimes \sigma_{A'}$
which is irrelevant to our discussion. 
We thus define
\begin{align}\label{eq:genDec}
\dec(A|E)_{\omega} = \sup_{\R_{E\rightarrow A'}} \sup_{\Phi_{AA'} \in \Psi_{AA'}} F^2 (\Phi_{AA'},\R_{E\rightarrow A'}(\omega_{AE}))\ ,
\end{align}
where $\omega_{AE}$ is the state shared between $A$ and $E$ according to the general physical theory.
The fidelity between two states $\omega_1$ and $\omega_2$ is thereby defined in full analogy to the quantum case~\cite{NC00} as
\begin{align}
F(\omega_1,\omega_2) = \inf_M F(M(\omega_1),M(\omega_2))\ ,
\end{align}
where the minimization is taken over all possible measurements $M$, and $M(\omega)$ denotes the probability distribution over the measurement outcomes of $M$. 
That is, the fidelity can be expressed as the minimum fidelity between probability distributions of classical measurement outcomes. 
How about the transformation $\R_{E \rightarrow A'}$?  In general, it is difficult to characterize the set of allowed transformations $\R_{E\rightarrow A'}$ in arbitrary physical 
theories, however we will not need make $\R_{E \rightarrow A'}$ explicit in order to bound $\dec(A|E)$.
Equation~\eqref{eq:genDec} 
gives us the familiar quantity within quantum mechanics, but provides us with an a very intuitive way to \emph{quantify} decoherence in any physical theory that admits 
maximally correlated states. We emphasize that with our general techniques the latter demand could be weakend to allow 
all theories, even those who only have (weak) approximations of maximally correlated states. However, as we are not aware of any physically motivated example of 
such a theory, we leave such an extension to future study for clarity of exposition.

The second challenge is to prove that our test actually provides a bound on $\dec(A|E)_{\omega}$. 
Note that without quantum mechanics to guide us, all that we could reasonably establish by performing measurements on $A$ and $B$ are the probabilities of outcomes $a$ and $b$ given measurement 
settings $x$ and $y$. That is, the probability 
\begin{align}
\Pr[a,b|x,y]_{\omega} = e_a^A e_b^B(\omega_{AB})\ ,
\end{align}
where $e_a^A \in M_x^A$ and $e_b^B \in M_y^B$.
Yet, given the system $E$ is entirely inaccessible to us we have no hope of measuring $\Pr[a,b,c|x,y,z]_{\omega}$ directly, 
where $z$ denotes a measurement setting on $E$ with outcome $c$. 
Nevertheless, similar to quantum entanglement, it is known that no-signalling distributions are again monogamous~\cite{Ton09} - and it is this fact that allows us to draw conclusions about $E$ by measuring only 
$A$ and $B$.
We will therefore make a non-trivial assumption about the physical theory, namely that no-signalling holds between $A$, $B$ and $E$. We emphasize weaker constraints on the amount of signalling could also lead to a bound - but we are not aware of any other concrete example to consider. 
Mathematically, no signalling means that the marginal distributions obey 
\begin{align}
\forall a,x,y,y',z,z'\ \Pr[a|x,y,z]_{\omega} = \Pr[a|x,y',z']_{\omega}\ ,
\end{align}
that is, the choice of measurement settings $y,y'$ and $z,z'$ does not influence the distribution of the outcomes $a$. 
A set of distributions is no-signalling if such conditions hold for all marginal distributions.

\begin{figure}[htb]
  \begin{center}
    \includegraphics[scale=0.9]{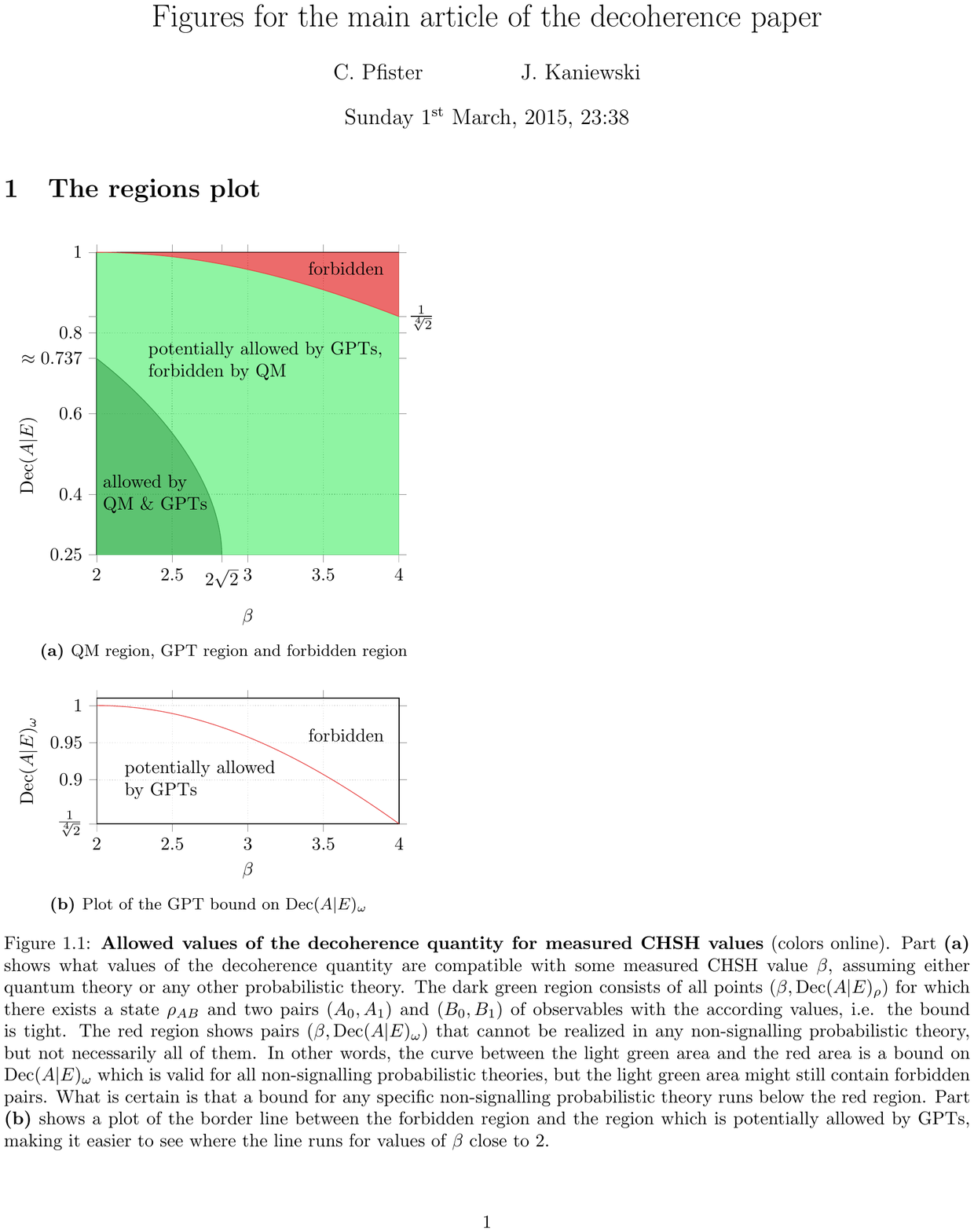}
    \caption{\textbf{Allowed values of the decoherence quantity for measured CHSH values} 
      (colors online).  
      Part \textbf{(a)} shows what values of the decoherence quantity are compatible with 
      some measured CHSH value $\beta$, assuming either quantum theory or any other 
      probabilistic theory. The dark green region consists of all points 
      $(\beta, \Dec(A|E)_\rho)$ for which there exists a quantum state $\rho_{AB}$ and two pairs 
      $(A_0, A_1)$ and $(B_0, B_1)$ of observables with the according values, i.e. the bound 
      is tight. The red region shows pairs $(\beta, \Dec(A|E)_\omega)$ that cannot be 
      realized in any non-signalling probabilistic theory, but not necessarily all of them. 
      In other words, the curve between the light green area and the red area is a bound on 
      $\Dec(A|E)_\omega$ which is valid for all non-signalling probabilistic theories, but 
      the light green area might still contain forbidden pairs. What is certain is that a 
      bound for any specific non-signalling probabilistic theory runs below the red region.
      Part \textbf{(b)} shows a zoomed-in plot of the border line between the forbidden 
      region and the region which is potentially allowed by GPTs, making it easier to see 
      where the line runs for values of $\beta$ close to 2.
      In a world constrained only by no-signalling, $\beta = 4$ is possible~\cite{PR,PR1,PR2,PR3}.
      \label{fig:regions}}
  \end{center}
\end{figure}

\section*{Abstract experiment}

Our method is fully general and can in principle be used to measure the decoherence of any physical process. Figure~\ref{fig:abstract} illustrates the general procedure. We create an entangled pair, and use half of this entangled pair to probe the unknown decoherence process. To estimate $\dec(A|E)$ we will make use of the fact that in QM entanglement is monogamous, or more generally - when considering theories beyond QM - that no-signalling correlations are monogamous. This allow us to make statements about the correlations between $A$ and $E$, even though we can only perform measurements on $A$ and $B$. A test that allows us to bound $\dec(A|E)$ from observations made on $A$ and $B$ alone is given by a Bell inequality~\cite{Bell64,survey}. For the purpose of illustration, we consider creating an entangled state $\Phi_{AA'}$ and perform a test based on the CHSH inequaltity~\cite{CHSH69} (see Figure~\ref{fig:abstract}). We emphasize that our methods are fully general and could be used in conjunction with other inequalities and higher dimensional entangled states.

As an easy warmup, let us first again consider what happens in quantum mechanics. For now, we assume that the measurement devices have no memory. That is, the experiment behaves
the same in each round, independent on the previous measurements. It is relatively straight forward to obtain an upper bound on $\dec(A|E)$ 
by extending techniques from quantum key distribution (QKD)~\cite{ABGM07}.
In essence, we maximize $\dec(A|E)$ over all states that are consistent with the observed CHSH correlator $\beta$ (see Figure~\ref{fig:abstract}). This maximization problem
is simplified by the inherent symmetries of the CHSH inequality, allowing us to reduce this optimization problem to consider only states that are diagonal in the Bell basis.
We proceed to establish properties of min and max entropies for Bell diagonal states, leading to an upper bound.
Concretely, we show in the Appendix (Theorem~\ref{thm:feasible-region}) that
\begin{align}
\dec(A|E) \leq h(\beta)\ ,
\end{align}
where $h(\beta)$ is an easy optimization problem that can be solved using Lagrange multipliers. We have chosen not to weaken this bound by an analytical bound that is strictly larger, as it is indeed easily evaluated (see Figure~\ref{fig:regions}).
If the devices are allowed memory, then a variant of this test and some more sophisticated techniques from QKD nevertheless can nevertheless be shown to give a bound.

\begin{figure}[h!] 
   \includegraphics[scale =0.7]{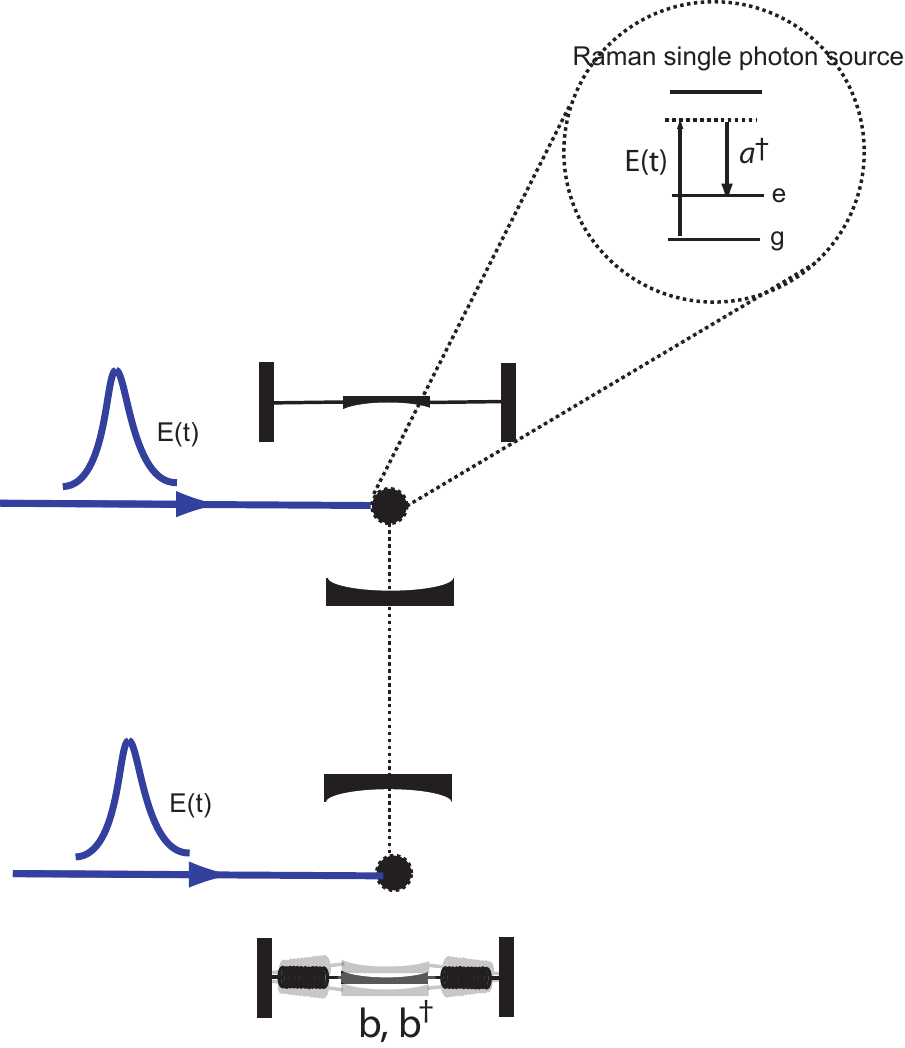} 
      \centering
   \caption{{\textbf{Probing an optomechanical system.}} Our goal is to create two entanglement between two opto-mechanical cavities. One cavitiy thereby has a movable
mirror that introduces gravitational decoherence. Two cavities each contain a Raman single photon source controlled by an external laser `write field' $E(t)$~\cite{NDLK11}. This write-field is used to map excitations in the atomic sources to single photon excitations in the cavities.  The top cavity has fixed end mirrors while the bottom cavity has one mirror that is harmonically bound along the cavity axis and can move in response to the radiation pressure force of light in the cavity. The Raman sources are first prepared in an entangled 
state. This setup is a modification of the one proposed by Bouwmeester \cite{Bouwmeester2003} in which an itinerant single photon pulse is injected into a cavity rather than created intra-cavity as here. Our modification avoids the problem that the time over which the photons interact with the mechanical element is stochastic and determined by the random times at which the photons enter and exit the cavity through an end mirror. In the new scheme, the cavities are assumed to have almost perfect mirrors --- very narrow line width~\cite{NIST} (see appendix for details). 
}
   \label{fig:exp}
\end{figure}

How can we hope to attain an estimate outside of quantum mechanics? Let us first give a very loose intuition, why performing a Bell experiment on $A$ and $B$, may  
allows us to bound $\dec(A|E)_\omega$. It is well known~\cite{Ton09} that non-signalling correlations are also monogamous. That is, if we observe a violation 
of the CHSH inequality as captured by the measured parameter $\beta$, then we know that the violation between $A$ and $E$ and also between $E$ and $B$ must be low. 
Note that the expectation values $\tr[\rho_{AB} (A_x \otimes B_y)]$ in terms of quantum observables $A_x$ and $B_y$ can be expresssed in terms of probabilities as
\begin{align}
&\tr[\rho_{AB}(A_x \otimes B_y)]\nonumber\\
&\qquad = \sum_{a \in \{\pm 1\}} \Pr[a,a|x,y]_{\omega} - \Pr[a,-a|x,y]_{\omega}\ ,
\end{align}
where
we have again used $\omega_{AB}$ in place of $\rho_{AB}$ to remind ourselves that we may be outside of QM.
In fact, if $\beta$ is larger than what a classical theory allows ($\beta > 2$), then $E$ and $B$ cannot violate the CHSH inequality at all. Let us now assume by contradiction
that the state $\omega_{AE}$ shared between $A$ and $E$ would be close to maximally correlated. Then by definition of the maximally correlated state, for every measurement
on $A$, there exists some measurement on $E$ which yields (almost) the same outcome. Hence, if $\omega_{AE}$ would be close to maximally correlated, then we would expect
that $E$ and $B$ can achieve a similar CHSH violation than $A$ and $B$ - because $E$ can make measurements that reproduce the same correlations that $A$ can achieve with $B$.
Yet, we know that this cannot be since CHSH correlations are monogamous.

In the appendix, we make this rough intuition precise. While we do not follow the steps suggested by this intuition, we employ a technique that has also been used for
studying monogamy of CHSH correlations~\cite{Ton09}. Specifically, we use linear programming as a technique to obtain bounds.
We thereby first relate the fidelity to the statistical distance, which is a linear
functional. We are then able to optimize this linear functional over probability distributions $\Pr[a,b,c|x,y,z]_{\omega}$ satisfying linear constraints.
The first such constraint is given by the fact that we consider only no-signalling distribtions. The second by the fact that the marginal distribution $\Pr[a,b|x,y]_{\omega}$
leads to the observed Bell violation $\beta$. The last one stems from the fact that maximal correlations can also be expressed using a linear constraint.
Solving this linear program for an observed violation $\beta$ leads to Figure~\ref{fig:regions}.

\begin{figure*}[htb]
  \begin{center}
    \includegraphics[scale=0.9]{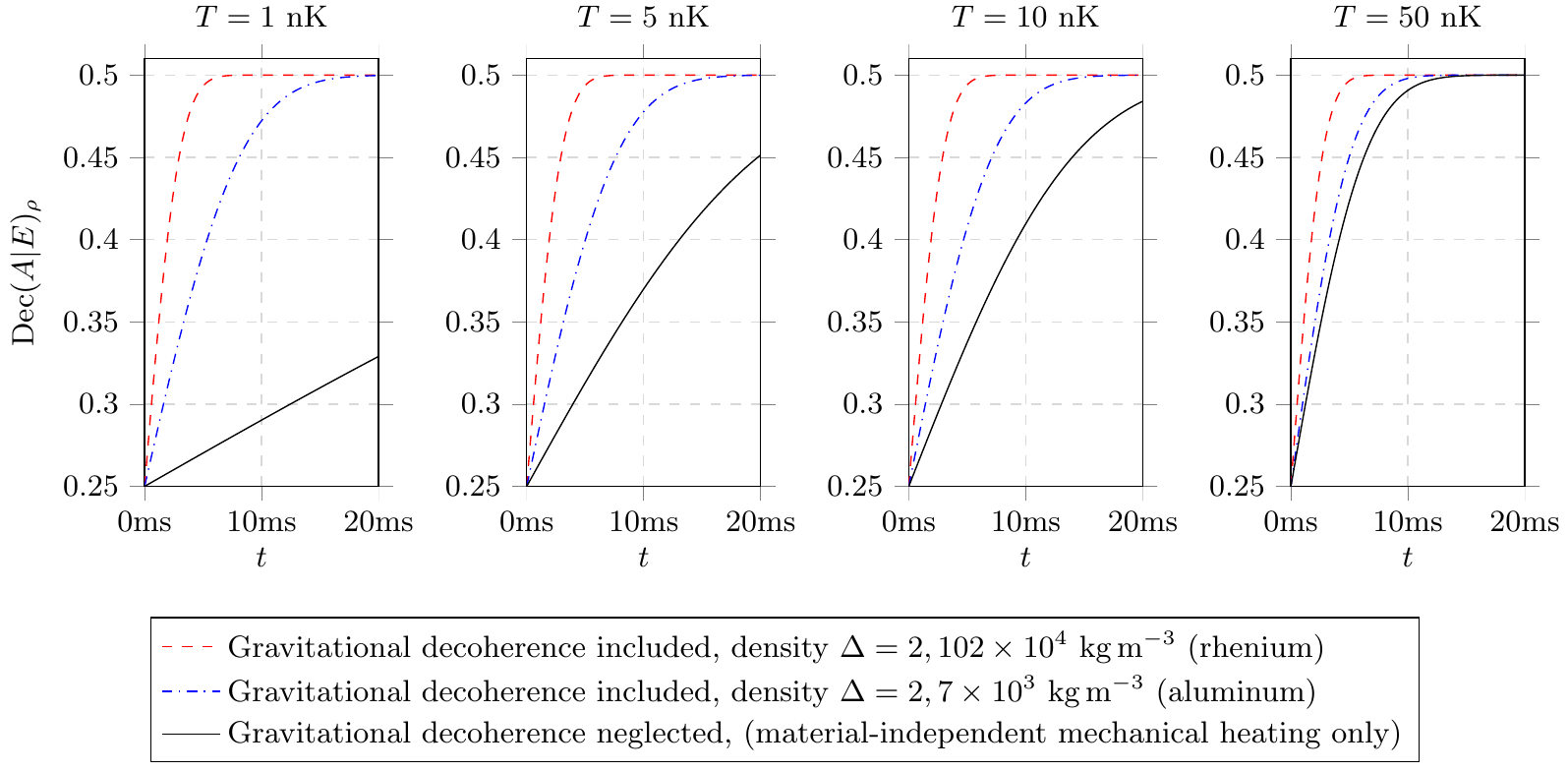}
    \caption{\textbf{Predicted values of the decoherence quantity in the optomechanical 
      experiment.} 
      This figure shows the predicted values of $\Dec(A|E)_\rho$ as a function of the 
      running time of the optomechanical experiment for different temperatures 
      and for different materials of the mechanical element as calculated in the proposed 
      model for gravitational decoherence. In addition, $\Dec(A|E)_\rho$ is plotted for the 
      case where gravitational decoherence is not taken into account. When the gap between 
      the predicted values with and without gravitational decoherence is large enough, the 
      decoherence estimation formalism allows for a test that potentially falsifies the 
      proposed model for gravitational decoherence.
      The calculations have been made for the example experimental parameters $g_0 = 1 \ \text{s}^{-1}$, $\omega_m = 1 \ \text{s}^{-1}$ and $\gamma_m = 10^{-10} \ \text{s}^{-1}$. 
      \label{fig:opto-decs}}
  \end{center}
\end{figure*}

\section*{Optomechanics experiment}

To gain insights into the significance of gravitational decoherence, we examine
Diosi's theory of gravitational decoherence~\cite{Dio89}. This is equivalent to the decoherence model introduced in Kafri et al. \cite{KTM14}.
It can be applied to an optomechanical cavity in which one mirror is free to move in a harmonic potential with frequency $\omega_m$ as in Figure~\ref{fig:exp}. 
The master equation for a massive particle moving in a harmonic potential, including gravitational decoherence is 
\begin{equation}
\frac{d\rho}{dt}=-i\omega_m[b^\dagger b,\rho]-\Lambda[b+b^\dagger,[b+b^\dagger,\rho]]
\end{equation}
where 
\begin{equation}
b=\sqrt{\frac{m\omega_m}{2\hbar}}\hat{x}+i\frac{1}{\sqrt{2\hbar m\omega_m}}\hat{p}
\end{equation}
with $\hat{x},\hat{p}$ the usual canonical position and momentum operators for the moving mirror. 
We have that
\begin{align}
\Lambda = \Lambda_{\rm grav} + \Lambda_{\rm heat}\ ,  
\end{align}
where the gravitational decoherence rate $\Lambda_{\text{grav}}$ is given by 
\begin{equation}
\Lambda_{\text{grav}}=\frac{2\pi}{3} \frac{G\Delta}{\omega_m}
\end{equation}
with $G$ the Newton gravitational constant and $\Delta$ the density of the moving mirror. As one might expect $\Lambda_{\text{grav}}$ is quite small, of the order of $10^{-8}$ s$^{-1}$ for suspended mirrors with $\omega_m\sim 1$. The term 
\begin{align}
\Lambda_{\rm heat} = \frac{k_B T}{\hbar Q}\ ,
\end{align}
with $Q = \omega/\gamma_m$ corresponds to mechanical heating. To see effect of the gravitational term stand out next to the mechanical heating
we thus need to make the temperature $T$ low. 
A calculation shows that this model leads to a dephasing channel $\Gamma(\rho) = p \rho + (1-p) Z \rho Z^\dagger$ where $p$ is a function of the density $\Delta$, 
and the other parameters. In the appendix, we show that for this model 
\begin{widetext}
\begin{align} 
  \Dec(A|E)_\rho = \frac{1}{4} \left( 1+\sqrt{1-\exp\left(-4\left(1+2\left(\frac{4\pi G}{3} 
  \frac{1}{\gamma_m \omega_m} \Delta + \frac{2k_B}{\hbar} \frac{1}{\omega_m} T\right)\right)
  \frac{g_0^2}{\omega_m^2} \sin^2 \left(\frac{\omega_m t}{2}\right) \right)} \right)\ ,
\end{align}
\end{widetext}
where $G$ is the Newton gravitational constant, $k_B$ is the Boltzman constant, and $\hbar$ the Planck constant (see Figure~\ref{fig:opto-decs} for the other parameters).
(see Figure~\ref{fig:opto-decs} for parameters)

\section*{Discussion}

What have we actually learned when performing such an experiment?
We first observe that the measured $\beta$ always gives an upper bound on the amount of decoherence observed - for \emph{any} no-signalling theory.
This means that even if quantum mechanics would indeed need to be modified we can still draw conclusions from the data we obtain. As such, the observations made
in such an experiment establish a fundamental limit on decoherence no matter what the theory might actually look like in detail. 
It is clear, however, that the bound thus obtained is much weaker than if we had assumed quantum mechanics. No-signalling is but one of many 
principles obeyed by quantum mechanics, and these other features put stronger bounds on the values that $\dec(A|E)$ can take. Our motivation
for considering theories which are only constrained by no-signalling is to demonstrate even such weak demands still allow us to draw meaningful conclusions
from such an experiment. One can easily adapt our approach by introducing further constraints on the probabilities $\Pr[a,b,c|x,y,z]$ - but not all of quantum mechanics - 
in order to get stronger bounds. In this case, one can similarly obtain an upper bound on $\dec(A|E)$ from the measured data - this time for the more constrained theory. 
Also in a fully quantum mechanical world, our approach yields to a bound (see Figure~\ref{fig:regions}). If we assume quantum mechanics, we may of course also try and perform
process tomography in order to determine the decoherence process, and indeed any experiment should try and perform such a tomographic analysis whenever possible. The appeal of our
approach is rather that we can draw conclusions from the experimental data while making only very minimal assumptions about the underlying physical theory.

One may wonder, why we only upper bound $\dec(A|E)$. Note that from our experimental statistics we can only make statements about the overall decoherence observed in the experiment, namely the gravitational decoherence (if it exists) as well as any other decoherence introduced due to experimental imperfections. Finding that the Bell violation is low (and thus maybe $\dec(A|E)$ might be large) can thus not be attributed conclusively to the gravitational decoherence process, making a lower bound on $\dec(A|E)$ meaningless if our desire is to make statements about a particular decoherence process such as gravity. 

Second, we observe that our approach can rule out models of gravitational decoherence but not verify a particular one. It is important to note that a model for gravitational decoherence does not stand on its own, but is always part of a theory on what states, evolutions and measurements behave like. Given such a physical theory and a model for gravitational decoherence, we know enough to compute $\dec(A|E)$. In addition, we can compute an upper bound $b(\beta)$ on $\dec(A|E)$ specific to that theory, which may give a much stronger bound 
than no-signalling alone. Indeed, we see from Figure~\ref{fig:regions} that this is the case for quantum mechanics. Given the calculated $\dec(A|E)$ and the experimentally 
observed value for $b(\beta)$, we can then compare: If $\dec(A|E) > b(\beta)$, then the model (or indeed theory) we assumed must be wrong. However, if $\dec(A|E) \leq b(\beta)$, then
we know that the model and theory would be consistent with out experimental observations. We discuss this in more detail in the appendix with a candidate decoherence model that has been proposed and which - if it is valid - may be observed in the experiment suggested above. 

Our approach thus provides a guiding light in the search for gravitational decoherence models. It is very general, and could in principle be used in conjunction with other proposed experimental setups and decoherence models. In particular, it could also be used to probe decoherence models conjectured to arise from decoherence affecting macroscopic objects, where
there exist proposals to bring such objects into superposition~\cite{ignacioGrav}. 
Clearly, however, probing such models using entanglement is extremely challenging. 

It is a very interesting open question to improve our analysis and to apply it to other physical theories that are more constrained than no-signalling, but yet do not quite yield quantum mechanics. Candidates for this may come from the study of generalized probabilistic theories where e.g.~\cite{MM11,MMAPG12,CDP11,DB11,Udu12,PW13} introduced further constraints in order to recover quantum mechanics, but also from suggested ways to modify the Schr{\"o}dinger equation in order to account for non quantum mechanical noise. 
Since our approach could also be applied to higher dimensional systems, and other Bell inequalities, it is a very interesting open question whether other Bell inequalities could be used
to obtain stronger bounds on $\dec(A|E)$ from the resulting experimental observations.

\acknowledgments
We thank Markus P. M\"uller, Matthew Pusey, Tobias Fritz and Gary Steele for insightful discussions.
CP, JK, MT, AM, RS and SW were supported by MOE Tier 3A grant ''Randomness from quantum processes'', NRF CRP ''Space-based QKD''. SW was also supported by QuTech.
NM and GM were supported by ARC Centre of Excellence for Engineered Quantum Systems, CE110001013

\bibliographystyle{apsrev}
\bibliography{measureDeco}

\clearpage
\onecolumngrid
\begin{appendix}
  \numberwithin{equation}{section}
  \numberwithin{thm}{section}

\section*{APPENDIX}
\section*{Conventions}

For this document, we make the following conventions.
\begin{itemize}
  \item The logarithm is with respect to base 2, i.e. $\log \equiv \log_2$.
  \item Hilbert spaces are assumed to be finite-dimensional, unless otherwise stated.
  \item We denote the set of density operators (states) on a Hilbert space $\calH$ by 
    $\calH$.
  \item We identify operators on Hilbert spaces with their reordered versions resulting 
    from permutations of systems. For example, for Hilbert spaces $\calH_A$, $\calH_B$, 
    $\calH_E$ and states $\Phi_{AE} \in \calS(\calH_A \otimes \calH_E)$, $\sigma_B \in 
    \calS(\calH_B)$, we identify the state $\Phi_{AE} \otimes \sigma_B \in \calS(\calH_A
    \otimes \calH_E \otimes \calH_B)$ with the state in $\calS(\calH_A \otimes \calH_B 
    \otimes \calH_E)$ resulting from the application of the braiding map $\calH_A \otimes 
    \calH_E \otimes \calH_B \rightarrow \calH_A \otimes \calH_B \otimes \calH_E$ on $
    \Phi_{AE} \otimes \sigma_B$.
  \item For a state $\rho_{ABE} \in \calS(\calH_A \otimes \calH_B \otimes \calH_E)$, we
    denote its reduced states by according changes of the subscript, e.g. $\rho_A := 
    \tr_B(\rho_{AE})$, $\rho_A := \tr_B(\tr_E(\rho_{ABE}))$.
  \item For a state $\rho_{ABE} \in \calS(\calH_A \otimes \calH_B \otimes \calH_E)$, 
    entropies are evaluated for the according reduced states, e.g. $H(A|B)_\rho$ is the 
    conditional von Neumann entropy of $\rho_{AB} = \tr_E(\rho_{ABE})$ (c.f. 
    \cref{sec:hmin}).
\end{itemize}

\section{Background: Decoherence in quantum theory} \label{sec:background}

In this section, we give a short introduction to decoherence in quantum theory. It consists of concepts, results and quantities that are well-established in quantum information science \cite{Wil13}. The topics are chosen to facilitate the understanding of our contributions in \cref{sec:qm,sec:gpt} rather than to give a full introduction to the subject of decoherence. In \cref{sec:dec-trip}, we describe how the dynamical evolution of a system gives rise to a state of a tripartite system. This tripartite state plays a central role in our later analysis. In \cref{sec:hmin}, we explain why the min-entropy is the relevant quantity in the information theoretic analysis of decoherence. The min-entropy is the quantity that we use for our analysis in \cref{sec:qm}. It is also the quantity that serves as our motivation to define a decoherence quantity for generalized probabilistic theories in \cref{sec:gpt}. We note that
a generalization of quantum theory by, for example, introducing additional terms into the Schr{\"o}dinger equation fall under the regime of generalized theories in our discussion.

\subsection{Dynamical evolution and its tripartite purification} \label{sec:dec-trip}

\textbf{Interaction and non-unitary evolution:} Suppose that a system $S$, initially in a state described by a density operator $\rho_S \in \calS(\calH_S)$, undergoes a dynamical evolution over some time interval. If $S$ undergoes this evolution as a closed system, then according to one of the postulates of quantum mechanics, the state transforms as
\begin{align}
  \rho_S \mapsto U_{S \rightarrow S} \, \rho_S \, U_{S \rightarrow S}^\dagger
\end{align} 
for a unitary $U_{S \rightarrow S}: \calH_S \rightarrow \calH_S$ (see \cref{fig:clopen} (a)). In general, however, the system $S$ may be open, i.e. it may interact with another system $E$ that is called the \emph{environment}. We consider the environment $E$ to consist of all the systems that interact with system $S$. Taken together, the combined system $SE$ then forms a closed system and hence evolves as
\begin{align} \label{eq:se-evolution}
  \rho_S \otimes \rho_E \mapsto U_{SE \rightarrow SE} (\rho_S \otimes \rho_E)
  U_{SE \rightarrow SE}^\dagger \,,
\end{align}
where $\rho_E$ is the initial state of the environment and $U_{SE \rightarrow SE}: \calH_S \otimes \calH_E \rightarrow \calH_S \otimes \calH_E$ is a unitary. 

\begin{figure}[htb]
  \centering
  \includegraphics{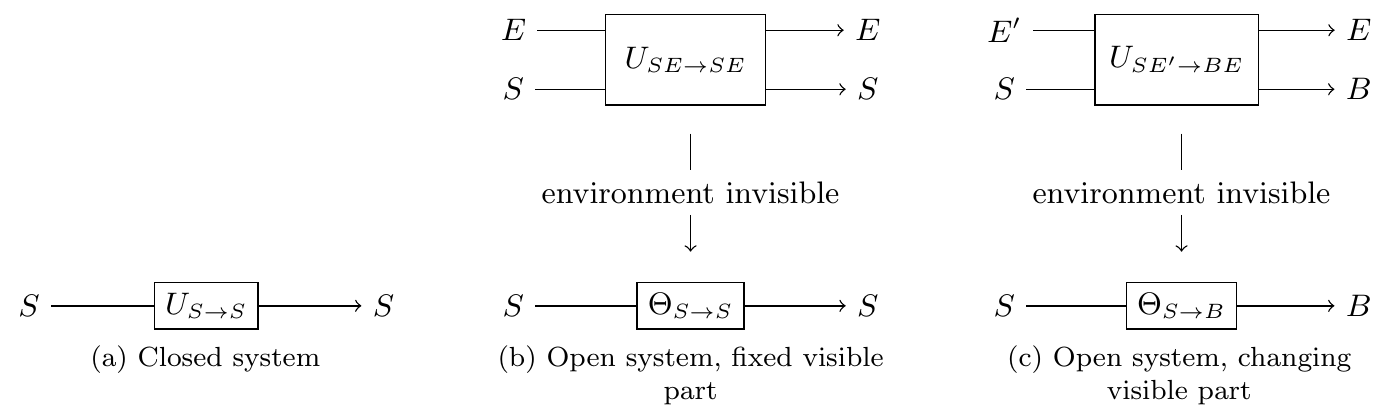}
  \caption{\textbf{Different kinds of dynamical evolution.} (a) If a system $S$ evolves 
  as a closed system, its state transformation is described by a unitary $U_{S \rightarrow 
  S}$. (b) If it evolves in interaction with another system $E$ that is not part of our 
  description, its state transformation is described by a TPCPM 
  $\Theta_{S \rightarrow S}$. (c) If the visible part of the overall system before and 
  after the evolution is not the same, we describe the state transformation by a TPCPM 
  $\Theta_{S \rightarrow B}$.
  \label{fig:clopen}}
\end{figure}

We may be ignorant about the environment $E$ and only have access to system $S$. Our description would then treat the state of the subsystem $S$ after the evolution as a function of the state $\rho_S$ of $S$ before the evolution. We arrive at this description by taking the partial trace over $E$ in expression \eqref{eq:se-evolution}:
\begin{align} \label{eq:traced-se-evolution}
  \rho_S \mapsto \tr_E \left( U_{SE \rightarrow SE} (\rho_S \otimes \rho_E) U_{SE 
  \rightarrow SE}^\dagger \right) =: \Theta_{S \rightarrow S}(\rho_S) \,.
\end{align}
A map $\Theta_{S \rightarrow S}$ of the form \eqref{eq:traced-se-evolution} is easily shown to be a trace-preserving completely positive map (TPCPM). Thus, the evolution of an open system $S$, when the environment $E$ is not visible, is described by a TPCPM $\Theta_{S \rightarrow S}$ (see \cref{fig:clopen} (b)). 

In a yet more general case, it may be that after the evolution of the system $SE$, we do not have access to system $S$ but to a different subsystem $B$ of $SE$. An example would be a two-particle system $S$ interacting with another two-particle system, where we only have access to one particle ($B$) of the four particles after the evolution. Mathematically speaking, the fact that we see a different subsystem before and after the evolution means that our factorization of the overall Hilbert space changes: Before the evolution, we write $\calH = \calH_S \otimes \calH_{E'}$ and after the evolution, we write $\calH = \calH_B \otimes \calH_E$ (see \cref{fig:clopen} (c)). Thus, the unitary evolution of the closed overall system is described by a unitary $U_{SE' \rightarrow BE}: \calH_S \otimes \calH_{E'} \rightarrow \calH_B \otimes \calH_E$. Describing only the accessible part before and after the evolution, we end up with a TPCPM
\begin{align} \label{eq:tre-sb}
  \rho_S \mapsto \tr_E \left( U_{SE' \rightarrow BE} (\rho_S \otimes \rho_{E'}) U_{SE' 
  \rightarrow BE}^\dagger \right) =: \Theta_{S \rightarrow B}(\rho_S) \,.
\end{align}
Thus, the evolution of a system $S$ to a system $B$, when being ignorant about the environment, is described by a TPCPM $\Theta_{S \rightarrow B}$.

\smallskip
\noindent\textbf{Stinespring dilation:} We have demonstrated that unitaries on two (or more) systems give rise to TPCPMs on one system. It is well-known that the converse is also true: Every TPCPM $\Theta_{S \rightarrow B}: \calS(\calH_S) \rightarrow \calS(\calH_B)$ can be extended to a unitary $U_{SE' \rightarrow BE}$ on a larger system in the following sense. For Hilbert spaces $\calH_{E'}$ and $\calH_E$ of appropriate dimensions, it holds that for every pure state $|0\rangle\langle0|_{E'}$ on $\calH_{E'}$, there is a unitary
\begin{align}
  U_{SE' \rightarrow BE}: \calH_S \otimes \calH_{E'} \rightarrow \calH_B \otimes \calH_E
\end{align} 
such that
\begin{align} \label{unit-dil-property}
  \tr_E(U_{SE' \rightarrow BE} (\rho_S \otimes |0\rangle\langle0|_{E'}) U^\dagger_{SE' 
  \rightarrow BE}) 
  = \Theta_{S \rightarrow B}(\rho_S) \quad \text{for all } \rho_S \in \calS(\calH_S) \,.
\end{align}
This (or an equivalent statement) is the \emph{Stinespring dilation theorem} \cite{Sti55}. For more details see \cite{Wil13}.

\smallskip
\noindent\textbf{Textbook definitions of decoherence:} From now on, we will take the viewpoint that the TPCPM $\Theta_{S \rightarrow B}$ is what we are given in the first place. Physically speaking, we assume that we are in the setting where all that we observe is a process in which a system $S$ in some state $\rho_S$ transforms into a state $\rho_B$ of some system $B$. We think of this as a 
\emph{channel} $\Theta_{S \rightarrow S}$, into which we input a system $S$ and get a system $B$ as an output. 
Our goal in \cref{sec:background} is to find a precise mathematical formulation of the following question in the quantum theoretical framework: How much does the channel \emph{decohere} the system? 

For the case where $S = B$, the standard quantum mechanics literature gives some simple descriptions of what the decoherence of a system under a dynamical evolution is. As an example, consider the case where $S$ is a spin-1/2 particle, initially in the spin ``up'' state in the $x$-direction,
\begin{align} \label{eq:x-up}
  \rho_S = |\uparrow_x\rangle\langle \uparrow_x| \,, \quad |\uparrow_x\rangle = \frac{|\uparrow_z\rangle + |\downarrow_z\rangle}{\sqrt{2}} \,.
\end{align} 
If the channel $\Theta_{S \rightarrow S}$ is given by a measurement of the spin in the $z$-direction, then, written in the $z$-basis, the state of the system transforms as
\begin{align} \label{eq:dec-ex}
  \rho_S = \left(\begin{array}{cc}1/2 & 1/2 \\1/2 & 1/2\end{array}\right) \mapsto
  \left(\begin{array}{cc}1/2 & 0 \\0 & 1/2\end{array}\right) 
  = \Theta_{S \rightarrow S}(\rho_S) \,.
\end{align}
One possible observation one can make in \eqref{eq:dec-ex} is that the spin measurement in the $z$-direction causes the off-diagonal terms of the density matrix to vanish. This is an extreme case of the \emph{dephasing channel} in the $z$-basis, which causes a loss of the phase information of the superposition \eqref{eq:x-up}. This loss of phase information is often equated with decoherence. Another feature of \eqref{eq:dec-ex} that is often said to be the characteristic of decoherence is that $\Theta_{S \rightarrow S}$ turns an initially pure state into a mixed state.

These descriptions of decoherence, valid in their own right, are not favored by us for mainly three reasons. Firstly, these are no quantitative measures of decoherence. Secondly, they lack a clear operational meaning. Thirdly, they rely on the quantum mechanical formalism, in which states are expressed as density operators. It is not clear how to express them in more general cases that are not described by quantum theory.

In quantum information science, it is very popular to think of the systems arising in the purified picture we just presented as being controlled by \emph{parties} with intentions and interests rather than just being dead physical objects. We will follow this spirit and from now on use the language of a game and speak of parties Alice, Bob and Eve, that we think of as agents controlling the systems $A$, $B$ and $E$.

\subsection{The min-entropy as a measure for decoherence} \label{sec:hmin} 

\textbf{The coherent information:} As mentioned in \cref{sec:dec-trip}, it has been realized in quantum information science that important quantitative measures of the channel are functions of the state $\rho_{ABE}$ that we described above. One such measure quantifying decoherence is the \emph{coherent information} \cite{SN96}. It is defined in terms of the conditional von Neumann entropy
\begin{align}
  H(A|B)_\rho := H(AB)_\rho - H(B)_\rho \,,
\end{align}
where $H(AB)_\rho = -\tr(\rho_{AB}\log(\rho_{AB}))$ and $H(B) = -\tr(\rho_B \log(\rho_B))$ is the von Neumann entropy of the reduced state $\rho_{AB}$ and $\rho_B$, respectively. The coherent information is defined as
\begin{align}
  I(A \rangle B)_\rho := - H(A|B)_\rho \,.
\end{align}
The coherent information $I(A \rangle B)_\rho$ has been shown to be related to the quantum channel capacity $Q(\Theta_{S \rightarrow B})$ of $\Theta_{S \rightarrow B}$, which is known as the Lloyd-Shor-Devetak (LSD) theorem \cite{Llo97,Sho02,Dev05}. It says that
\begin{align} \label{eq:lsd}
  Q(\Theta_{S \rightarrow B}) = \lim_{n \to \infty} \frac{1}{n} 
  \max_{\rho_{S^n} \in \calS(\calH_S^{\otimes n})} I(A^n \rangle B^n)_\rho \,,
\end{align}
where $I(A^n \rangle B^n)_\rho$ is the coherent information for $\rho_{A^n B^n} = \id_A^{\otimes n} \otimes \Theta_{S \rightarrow B}^{\otimes n}(\rho_{A^n S^n})$ and $\rho_{A^n S^n}$ is a purification of $\rho_{S^n}$. The state $\id_A^{\otimes n} \otimes \Theta_{S \rightarrow B}^{\otimes n}(\rho_{A^n S^n})$ results from the $n$-fold use of the channel $\Theta_{S \rightarrow B}$ to transmit $S^n$, i.e. $n$ copies of system $S$, while the purification $A^n$ of $S^n$ remains unchanged. Thus, the r.h.s. of \eqref{eq:lsd} is the coherent information in the limit of infinitely many channel uses. Likewise, the quantum capacity $Q(\Theta_{S \rightarrow B})$ is the limit of the achievable rate for quantum data transmission in the limit of infinitely many channel uses. One says that the quantum capacity, and therefore the coherent information, is an \emph{asymptotic quantity}. This has the disadvantage that from the coherent information, only very limited statements about finitely many uses of the channel can be made. 

\smallskip
\noindent\textbf{The min-entropy:} More insight about the behavior of the channel under finitely many uses can be gained by considering the corresponding \emph{single-shot} quantity. To formulate it, note that the state $\rho_{ABE}$ is pure, in which case the duality relation $H(A|B)_\rho = - H(A|E)_\rho$ for the conditional von Neumann entropy holds. This gives us
\begin{align}
  I(A \rangle B)_\rho = H(A|E)_\rho \,.
\end{align}
The corresponding single-shot quantity for the conditional von Neumann entropy $H(A|E)_\rho$ is the \emph{conditional min-entropy}, or just \emph{min-entropy}, $H_\text{min}(A|E)_\rho$ \cite{Ren05}. It is defined as
\begin{align} \label{hmin-definition}
  \Hmin(A|E)_\rho = \max_{\sigma_E} \sup\{ \lambda \in \mathbb{R} \mid 
  \rho_{AE} \leq 2^{-\lambda} \id_A \otimes \sigma_E \} \,,
\end{align}
where the maximum is taken over all subnormalized density operators on $\calH_E$, i.e. all positive operators on $\calH_E$ with trace between 0 and 1. 
The min-entropy quantifies the maximal size of a subsystem of $A$ that can be 
decoupled from $E$~\cite{OneShotDecouple}, and thus
tells us how many EPR pairs between Alice and Bob can be created~\cite{lsdDecouple} given a noisy output state $\rho_{AB}$. To obtain the single-shot capacity of $n$ channel uses we are - as in the asymptotic case - allowed to optimze over input states $\rho_{A^nS^n}$. Clearly, however, the resulting expression can be lower bounded using a particlar input state given by $n$ copies of the maximally entangled state. This is the test state we employ here, and hence our test also provides a bound on 
the single shot capacity. For instance if $A$ is a $2$ level system, then the min-entropy readily quantifies the number of EPR-pairs we can recover, given that we started with $n$ EPR pairs as an input. The min-entropy thus has a very appealing operational interpretation.

For our purposes, another expression for the min-entropy is more useful. In the following, we use the symbol $\simeq$ to denote that two Hilbert spaces are isomorphic, i.e. $\calH_A \simeq \calH_{A'}$ means that the two spaces have the same dimension. It has been shown \cite{KRS09} that the min-entropy can be expressed as 
\begin{align} \label{hmin}
  H_{\text{min}}(A|E)_\rho = - \log d_A \max_{\R_{E \rightarrow A'}} F^2(\Phi_{AA'}, \id_A \otimes \R_{E \rightarrow A'}(\rho_{AE})) \,,
\end{align}
where $d_A$ is the dimension of the Hilbert space $\calH_A$ of system $A$, $A'$ is a system with $\calH_{A'} \simeq \calH_A$, the maximization is carried out over all TPCPMs $\R_{E \rightarrow A'}$ from system $E$ to system $A'$, $F(\rho, \sigma) = \tr \sqrt{ \rho^{1/2} \sigma \rho^{1/2} }$ is the fidelity and $\Phi_{AA'}$ is a maximally entangled state on $AA'$, i.e. $\Phi_{AA'}$ is an element of the set
\begin{align} \label{gamma-def}
  \Gamma_{AA'} := \left\{ \Phi_{AA'} \in \calS(\calH_A \otimes \calH_{A'}) 
  \ \middle\vert \ 
  \parbox{0.55\textwidth}{
    There are bases $\{|i\rangle_A\}_i$, $\{|i\rangle_{A'}\}_i$ of $\calH_A$, $
    \calH_{A'}$ such that $\Phi_{AA'} = |\phi\rangle \langle\phi|_{AA'}$ with 
    $|\phi\rangle_{AA'} = \frac{1}{\sqrt{d_A}} \sum_i |i\rangle_A \otimes |i\rangle_{A'}$.
  } \right\} \,.
\end{align} 
The choice of $\Phi_{AA'} \in \Gamma_{AA'}$, i.e. the choice of bases for $\calH_A$ and $\calH_{A'}$, is irrelevant for the value of $\Hmin(A|E)_\rho$. Since every $\Phi_{AA'} = |\phi\rangle\langle\phi|_{AA'} \in \Gamma_{AA'}$ is pure, we have that $F(\Phi_{AA'}, \sigma_{AA'}) = \sqrt{\langle\phi|\sigma|\phi\rangle_{AA'}}$ for any state $\sigma_{AA'}$ on $AA'$.

The expression \eqref{hmin} provides an intuition for the min-entropy. We think of the system $ABE$, which is in the pure state $\rho_{ABE}$, as being distributed between Alice, Bob and Eve. Imagine that Eve tries to perform operations on her share of the system with the intention to bring the reduced state between her and Alice as close as possible to the maximally entangled state $\Phi_{AA'}$, where the square of the fidelity is the measure of closeness. The closer Eve can bring the state to the maximally entangled state, the smaller the min-entropy $\Hmin(A|E)_\rho$. The overall situation of our decoherence analysis is shown in \cref{fig:overall-qm}.

\begin{figure}[h!] 
  \begin{center}
  \includegraphics{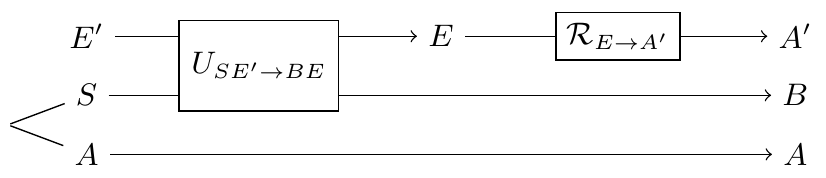}
    \caption{\textbf{Overall picture of our decoherence analysis for quantum 
      theory.} We interpret $\Hmin(A|E)_\rho$ as a measure for how close Eve can get to 
      maximal entanglement with Alice by applying a transformation $\R_{E \rightarrow 
      A'}$ on her share of the state $\rho_{ABE}$.
      \label{fig:overall-qm}}
  \end{center}
\end{figure}

The min-entropy is strictly more informative than the conditional von Neumann entropy in the following sense. In the iid limit (which stands for \emph{independent and identically distributed}), where many identically prepared systems go through the channel and end up in a state $\rho_{AB}^{\otimes n}$, the min-entropy converges to the conditional von Neumann entropy:
\begin{align}
  \lim_{n \to \infty} \frac{1}{n} \Hmin^\epsilon(A^n|B^n)_{\rho^{\otimes n}} = H(A|B)_\rho \ ,
\end{align}
where $\epsilon >  0$ is an arbtirary smoothing parameter.
This is known as the \emph{asymptotic equipartition property} \cite{TCR09}. Thus, in the limit of infinitely many channel uses, where the asymptotic quantity is relevant, the min-entropy reproduces the conditional von Neumann entropy. 

To gain some intuition for $\Hmin(A|E)_\rho$, we now have a look at some special cases. For these special cases, we assume that $\calH_S \simeq \calH_A \simeq \calH_B \simeq \calH_E$. Assume that initially, the state $\rho_{AS}$ is maximally entangled, i.e. $\rho_{AS} = \Phi_{AS}$ for some $\Phi_{AS} \in \Gamma_{AS}$ analogous to \eqref{gamma-def}. 
We think of the channel purification $U_{SE' \rightarrow BE}$ as being controlled by Eve. 
\begin{itemize}
  \item If the adversary Eve leaves system $S$ untouched, i.e. the channel $\Theta_{S 
    \rightarrow B}$ is the identity channel (or any other unitary channel), then 
    $\rho_{AE} = d_A^{-1} \id_A \otimes \rho_E$ for some state $\rho_E$ of system $E$. In 
    that case, $\Hmin(A|E)_\rho = \log d_A$, and we say that there is no decoherence.
  \item In the other extreme case, Eve snatches away the system $S$ and forwards an 
    uncorrelated system to Bob. In this case, $\rho_{AE} \in \Gamma_{AE}$ with 
    $\Gamma_{AE}$ analogous to \cref{gamma-def} (maximal entanglement between $A$ and 
    $E$). Then, $\Hmin(A|E)_\rho = - \log d_A$, and we say that we have full decoherence.
  \item As an intermediate case, we might consider the case where Eve interferes such 
    that she does not end up with maximal entanglement with Alice but such that she is 
    classically correlated with Alice in some basis, i.e. $\rho_{AE} = d_A^{-1} \sum_k
    |k\rangle\langle k|_A \otimes |k\rangle\langle k|_E$. In that case, $\Hmin(A|E)_\rho
    = 0$, and we speak of partial decoherence.
\end{itemize}

\section{Decoherence estimation through CHSH tests in quantum theory} \label{sec:qm}

\subsection{Introduction}

Our goal is to show that Alice and Bob can estimate the decoherence by performing a Bell experiment. We pose it as a feasibility problem: \emph{is it possible to observe certain statistics in a Bell experiment given a certain level of decoherence}? Solving this problem allows us to determine and plot the \emph{feasible region} in the space of suitably chosen parameters.

We look at the simplest Bell experiment, known as the Clauser-Horne-Shimony-Holt (CHSH) \cite{CHSH69} scenario. If $\rho_{AB}$ is the state that Alice and Bob share and $A_{j}, B_{k}$ for $j, k \in \{0, 1\}$ are the observables they perform, then the CHSH value equals
\begin{equation}
\label{chsh-expression}
\beta = \tr \big( [A_{0} \otimes B_{0} + A_{0} \otimes B_{1} + A_{1} \otimes B_{0} - A_{1} \otimes B_{1}] \rho_{AB} \big).
\end{equation}

As explained previously the min-entropy $\Hmin(A | E)$ defined in Eq.~\eqref{hmin-definition} captures the notion of decoherence between Alice and Bob (although note that high min-entropy corresponds to low decoherence and vice versa). Since the range of values that the min-entropy takes depends on the dimension of Alice's system (denoted by $d_{A}$), it is only meaningful to compare scenarios in which $d_{A}$ is fixed. For simplicity, we consider the simplest non-trivial scenario in which the subsystems held by Alice and Bob are qubits, $d_{A} = d_{B} = 2$.

We define the feasible region $\calS$ as follows. A pair of real numbers $(u, v)$, where $u \in [-1, 1]$ and $v \in [0, 2 \sqrt{2} ]$ belongs to $\calS$ if there exists a tripartite state $\rho_{ABE}$ and binary observables $A_{0}, A_{1}$ on $\calH_{A}$ and $B_{0}, B_{1}$ on $\calH_{B}$ such that
\begin{itemize}
\item subsystems $A$ and $B$ are qubits: $\dim \calH_{A} = \dim \calH_{B} = 2$
\item The conditional min-entropy of $A$ given $E$ equals $u$: $\Hmin(A | E) = u$.
\item The CHSH value given by Eq.~\eqref{chsh-expression} equals $v$: $\beta = v$.
\end{itemize}
First note that a CHSH value of $v \leq 2$ can be achieved using trivial measurements (namely $\{\id,0\}$) acting on an arbitrary state. Therefore, for $v \leq 2$ all values of $u \in [-1, 1]$ are allowed. For the remainder of the argument we implicitly assume that $v > 2$ and the following intuitive argument shows why certain pairs $(u, v)$ must indeed be forbidden. Consider a point $u \approx -1$ and $v > 2$. According to the operational meaning of the min-entropy \eqref{hmin}, $u \approx -1$ means that Eve can recover the maximally entangled state with Alice with fidelity close to unity, which clearly allows Alice and Eve to violate the CHSH inequality. On the other hand, since $v > 2$ Alice also observes a CHSH violation with Bob. This violates the monogamy relation for tripartite three-qubit states proved in Ref.~\cite{scarani01}, which states that Alice can violate the CHSH inequality with at most one party (even if she is allowed to use different measurements for different scenarios). This simple argument leads to the conclusion that the region $u \approx -1$ and $v > 2$ is forbidden. In the remainder of this section we show that the non-trivial part of the feasible region $\calS$ can be fully characterised by a single inequality.
\begin{thm}
\label{thm:feasible-region}
A pair of real numbers $(u, v)$ where $u \in [-1, 1]$ and $v \in (2, 2 \sqrt{2} ]$ belongs to the feasible region $\calS$ if and only if
\begin{equation}
u \geq f(v),
\end{equation}
where
\begin{equation}
\label{eq:f-definition}
f(v) := 3 - 2 \log \max_{c_{z}} \bigg( \, 2 \sqrt{ 1 + c_{z}} + \sqrt{ 1 - c_{z} + \frac{v}{\sqrt{2}} } + \sqrt{ 1 - c_{z} - \frac{v}{\sqrt{2}} } \, \bigg),
\end{equation}
where the maximization is taken over
\begin{equation}
-1 \leq c_{z} \leq 1 - \frac{v}{\sqrt{2}}.
\end{equation}
\end{thm}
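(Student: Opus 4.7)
My plan follows a three-step strategy: (i) reduce to Bell-diagonal states on $AB$ by exploiting the symmetries of the CHSH inequality; (ii) compute $\Hmin(A|E)$ in closed form on that family via the min/max-entropy duality for pure tripartite states; and (iii) solve the resulting constrained optimisation with Lagrange multipliers. For the canonical CHSH observables $A_0=\sigma_X$, $A_1=\sigma_Z$, $B_0=(\sigma_X-\sigma_Z)/\sqrt{2}$, $B_1=(\sigma_X+\sigma_Z)/\sqrt{2}$, a direct computation shows $\beta=\sqrt{2}\bigl(\langle\sigma_X\!\otimes\sigma_X\rangle-\langle\sigma_Z\!\otimes\sigma_Z\rangle\bigr)$, which depends only on the Bell-diagonal part of $\rho_{AB}$. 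Combining this with the SDP formulation of the min-entropy and a standard invariance argument for semidefinite programs under a finite symmetry group — the CHSH operator is preserved by the group generated by $\sigma_y\!\otimes\sigma_y$ and complex conjugation in the computational basis — lets me restrict to the Bell-diagonal ansatz $\rho_{AB}=\tfrac14(\openone+c_x\,\sigma_X\!\otimes\sigma_X+c_y\,\sigma_Y\!\otimes\sigma_Y+c_z\,\sigma_Z\!\otimes\sigma_Z)$. Jordan's lemma applied separately to Alice's and Bob's pairs of binary qubit observables then lets me fix the measurements to the displayed $X$-$Z$-plane directions.

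For the closed-form computation I use the pure-state duality $\Hmin(A|E)_\rho=-\Hmax(A|B)_\rho$ to convert the inaccessible min-entropy into a quantity depending only on $\rho_{AB}$. Writing $\Hmax(A|B)=2\log\max_{\sigma_B}F(\rho_{AB},\openone_A\!\otimes\sigma_B)$ and using $\rho_B=\openone/2$ for any Bell-diagonal state, a short SDP/KKT argument shows $\sigma_B=\openone/2$ is the optimiser, yielding $\Hmax(A|B)=2\log\bigl(\tfrac12\sum_k\sqrt{\lambda_k}\bigr)$ with $\lambda_k$ the four Bell-basis eigenvalues of $\rho_{AB}$.

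Substituting the CHSH constraint $\beta=\sqrt{2}(c_x-c_z)=v$ eliminates $c_x=c_z+v/\sqrt{2}$, leaving a two-variable concave maximisation in $(c_y,c_z)$ subject to $\lambda_k\geq 0$. Stationarity in $c_y$ forces the two Bell weights that do not contain $v/\sqrt{2}$ to coincide, collapsing the sum of four square roots into the three-term expression $2\sqrt{1+c_z}+\sqrt{1-c_z+v/\sqrt{2}}+\sqrt{1-c_z-v/\sqrt{2}}$ appearing inside $f(v)$. The overall prefactor $3-2\log(\,\cdot\,)$ combines $-\log d_A=-1$ from the min-entropy normalisation with the factor $2\log$ from the duality, and the interval $-1\leq c_z\leq 1-v/\sqrt{2}$ is exactly the range over which all four Bell weights remain non-negative. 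The ``if'' direction then follows by exhibiting the Bell-diagonal state whose $c_z$ realises the maximiser of $f(v)$ (giving $u=f(v)$) and varying $c_z$ continuously within the admissible range to sweep all $u\in[f(v),1]$ at fixed $\beta=v$.

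The technical heart — and the step most prone to subtle errors — is the symmetry reduction to Bell-diagonal states: the usual convexity-based twirling arguments need to be applied carefully to the purified tripartite state, using the invariance of $\Hmin$ under local unitaries on Eve's side, in order to conclude that the extremum is actually attained on a Bell-diagonal representative without loss of feasibility. A secondary delicate point is verifying that $\sigma_B=\openone/2$ is genuinely optimal in the fidelity maximisation for every Bell-diagonal $\rho_{AB}$ rather than only for symmetric special cases; once those two facts are in place, the Lagrange-multiplier calculation collapsing four terms into three is an essentially mechanical exercise.
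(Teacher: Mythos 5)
Your overall architecture matches the paper's: reduce to Bell-diagonal states, convert $\Hmin(A|E)$ into $\Hmax(A|B)$ by the pure-state duality, evaluate $\Hmax$ in closed form for Bell-diagonal states via an SDP, and optimize over the correlation coefficients. The decisive gap is in the converse direction, where you fix the measurements to the canonical $X$--$Z$-plane observables and impose $\beta = \sqrt{2}(c_x - c_z) = v$ as the constraint. Jordan's lemma only places each party's two observables in a common plane; it does not fix the angle between $A_0$ and $A_1$, between $B_0$ and $B_1$, or between the two parties' measurement directions, so restricting to the canonical observables is a genuine loss of generality --- and a restriction of the feasible set cannot yield a valid upper bound on $\Hmax$ over \emph{all} states and measurements compatible with $\beta = v$. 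The paper instead relaxes: from $\beta = v$ it passes to $\betamax(\rho_{AB}) \geq v$ and uses the Horodecki formula $\betamax = 2\sqrt{c_x^2+c_y^2}$ as the constraint, then exploits monotonicity of $f$ to conclude $u \geq f(\betamax) \geq f(v)$. This also matters for the twirling step: the Pauli twirl $\frac{1}{4}\sum_j (U_j\otimes U_j)\rho(U_j\otimes U_j)^\dagger$ preserves the correlation tensor (hence $\betamax$) but does not preserve the CHSH value for a \emph{fixed} pair of measurements when the local Bloch vectors are nonzero, so the reduction to Bell-diagonal states must be phrased in terms of $\betamax$ and $\Hmax(A|B)$ (via data processing and conditioning on the classical twirl register), exactly as Lemma~\ref{lem:symmetrisation} does; your proposed symmetry group (generated by $\sigma_y\otimes\sigma_y$ and complex conjugation) is moreover too small to diagonalize the correlation tensor.

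Two smaller points. First, in the direct part, varying $c_z$ inside the Bell-diagonal family at fixed $v$ sweeps $u$ only over $[f(v), u_0]$ with $u_0<1$ when $v<2\sqrt{2}$ (reaching $u=1$ would require a pure Bell state, incompatible with your constraint); the paper instead keeps the extremal state and degrades Eve's system, which raises $\Hmin(A|E)$ up to $\Hmin(A)=1$ without touching $\rho_{AB}$ or $\beta$. Second, your stationarity claim is off in detail: after eliminating $c_x = c_z + v/\sqrt{2}$, all four Bell weights contain $v/\sqrt{2}$, and it is stationarity in $c_z$ (forcing $p_1=p_4$), not in $c_y$, that produces the two coincident square roots; also $\Hmax(A|B)=-1+2\log\sum_j\sqrt{p_j}$ rather than $2\log(\tfrac{1}{2}\sum_j\sqrt{p_j})$. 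These slips happen to land on the same three-term expression and the same constant $3$, but the paper's route --- showing $g(\phi,c_z)$ has no interior critical point and that the maximum sits on the boundary $\phi=0$, i.e.\ $c_x=c_y$ --- is the argument that actually certifies the maximizer.
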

\noindent While the definition of $f$ might seem complicated, it is straightforward to see that $f$ is monotonically increasing in $v$ and evaluating $f(v)$ numerically for a particular value of $v$ is straightforward since the function to be maximized is concave. The feasible region $\calS$ is plotted below.

\begin{figure}
  \begin{center}
    \includegraphics{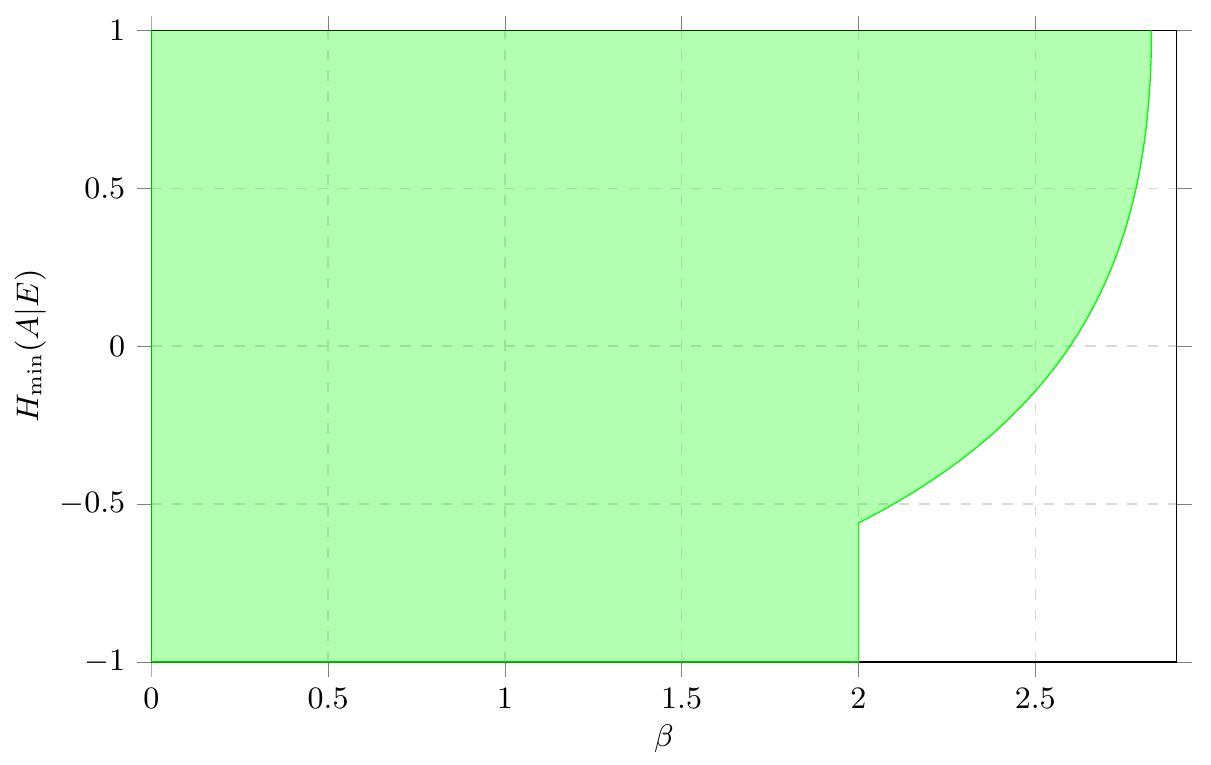}
  \end{center}
\end{figure}

\noindent The proof of \cref{thm:feasible-region} is conceptually simple, but it requires a wide array of technical tools, which we present in \cref{sec:preliminaries}. In \cref{sec:direct,sec:converse} we prove the direct and converse parts of Theorem~\ref{thm:feasible-region}, respectively.

%
\subsection{Preliminaries}
\label{sec:preliminaries}

\begin{defi}
  Let $\calH_A$, $\calH_B$ be Hilbert spaces of dimension $d$. A \dt{generalized Bell 
  basis} for $\calH_A \otimes \calH_B$ is a set $\{ \ket{\Phi_{j}} \}_{j = 1}^{d^{2}}$
  of $d^{2}$ pure states on $\calH_A \otimes \calH_B$ which satisfy
  \begin{align}
    &\tr_{A} \ketbraq{\Phi_{j}} = \frac{\id_B}{d} \,, \quad \tr_{B} \ketbraq{\Phi_{j}} = 
    \frac{\id_A}{d} \nbox{for $j=1,\ldots,d^2$ and}\\
    &\sum_{j = 1}^{d^{2}} \ketbraq{\Phi_{j}} = \id_A \otimes \id_B.
  \end{align}
  A state $\rho_{AB} \in \calS(\calH_A \otimes \calH_B)$ is called \dt{Bell-diagonal} if 
  it is diagonal in some generalized Bell basis, i.e. if there exists a probability 
  distribution $\{p_{j}\}_{j=1}^{d^2}$ such that
  \begin{align} \label{eq:bell-diagonal-definition}
    \rho_{AB} = \sum_{j = 1}^{d^{2}} p_{j} \ketbraq{\Phi_{j}}.
  \end{align}
\end{defi}

\begin{lemma}
\label{lem:positive-semi-definite}
Let $A \geq 0$ be a positive semi-definite operator, let $\Pi_{A}$ be the projector on its support and let $\ket{b}$ be a normalised vector. Then $A \geq \ketbraq{b}$ iff
\begin{equation}
\Pi_{A} \ket{b} = \ket{b} \nbox{and} \bramatketq{b}{A^{-1}} \leq 1.
\end{equation}
Note that since $A$ might not be invertible, $A^{- 1}$ is only defined on the support of $A$.
\end{lemma}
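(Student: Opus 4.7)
The plan is to prove both directions separately using sandwiching by the (pseudo)inverse $A^{-1/2}$, which is well-defined on the support of $A$. The key observation is that on the support $\Pi_A \mathcal H$, the operator $A$ is genuinely invertible, and multiplication by $A^{-1/2}$ on both sides is an order-preserving operation.

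For the forward direction ($\Rightarrow$), I first need to establish that $\Pi_A \ket{b} = \ket{b}$. Suppose for contradiction that $\ket{b}$ has a nontrivial component $\ket{b_\perp} = (\id - \Pi_A)\ket{b}$ in the kernel of $A$. Then for the test vector $\ket{v} = \ket{b_\perp}$, we have $\matl{v}{A}{v} = 0$ since $\ket{v}$ lies in $\ker A$, but $|\ip{v}{b}|^2 = \|b_\perp\|^4 > 0$, contradicting $A \geq \ketbraq{b}$. Once I know $\ket{b}$ lies in the support, I can sandwich $A - \ketbraq{b} \geq 0$ by $A^{-1/2}$ from both sides to obtain $\Pi_A - A^{-1/2}\ketbraq{b}A^{-1/2} \geq 0$ on the support. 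Taking the expectation value in any normalized vector in the support (or equivalently taking the operator norm) of the rank-one term $\ketbraq{c}$ with $\ket{c} := A^{-1/2}\ket{b}$ yields $\ip{c}{c} = \matl{b}{A^{-1}}{b} \leq 1$.

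For the converse ($\Leftarrow$), define $\ket{c} := A^{-1/2}\ket{b}$, which is well-defined since $\Pi_A \ket{b} = \ket{b}$. The hypothesis $\matl{b}{A^{-1}}{b} \leq 1$ is exactly the statement $\ip{c}{c} \leq 1$, so $\ketbraq{c} \leq \Pi_A$ as operators on $\mathcal H$. Sandwiching this inequality by $A^{1/2}$ on both sides (an order-preserving operation since $A^{1/2}$ is Hermitian) gives $A^{1/2}\ketbraq{c}A^{1/2} \leq A^{1/2}\Pi_A A^{1/2} = A$, and the left-hand side equals $\ketbraq{b}$ because $A^{1/2}\ket{c} = A^{1/2}A^{-1/2}\ket{b} = \Pi_A \ket{b} = \ket{b}$.

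I do not expect any serious obstacle: the entire argument is elementary linear algebra on the support of $A$. The only minor subtlety is the reduction to the support in the forward direction, which has to be carried out before $A^{-1/2}$ can be applied; this is handled cleanly by the kernel-test-vector argument above.
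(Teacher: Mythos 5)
Your proof is correct: the kernel-vector argument settles $\Pi_A\ket{b}=\ket{b}$, and conjugation by the pseudo-inverse square root $A^{-1/2}$ (respectively $A^{1/2}$) correctly reduces each direction to the elementary fact that a rank-one operator $\dya{c}$ supported on $\operatorname{ran}\Pi_A$ satisfies $\dya{c}\leq\Pi_A$ iff $\ip{c}{c}\leq 1$. The paper states this lemma without proof, so there is nothing to compare against; your argument is the standard one and fills that gap cleanly.
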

\subsubsection{Two-qubit states}
\noindent A two-qubit state written in the Pauli basis takes the form
\begin{align}
  \rho_{AB} = \frac{1}{4} \big( \id_A \otimes \id_B + \sum_{j} a_{j} \sigma_{j} \otimes 
  \id_B + \id_A \otimes \sum_{j} b_{j} \sigma_{j} + \sum_{j,k} T_{jk} \sigma_{j} \otimes 
  \sigma_{k} \big),
\end{align}
where all the summations go over $\{x, y, z\}$. It is known that for every state there exists a local unitary $U_{A} \otimes U_{B}$ which diagonalizes the correlation tensor (i.e.~ensures that $T_{jk} = 0$ for $j \neq k$) and since all the properties we consider are invariant under local unitaries we can make this assumption without loss of generality. We denote these diagonal entries $T_{xx}$, $T_{yy}$ and $T_{zz}$ by $c_{x}, c_{y}$ and $c_{z}$, respectively, which simplifies the expression to
\begin{align} \label{eq:two-qubit-state}
  \rho_{AB} = \frac{1}{4} \big( \id_A \otimes \id_B + \sum_{j} a_{j} \sigma_{j} \otimes 
  \id_B + \id_A \otimes \sum_{j} b_{j} \sigma_{j} + \sum_{j} c_{j} \sigma_{j} \otimes 
  \sigma_{j} \big).
\end{align}
Without loss of generality, we assume that $\abs{c_{x}} \geq \abs{c_{y}} \geq \abs{c_{z}}$ and $c_{x}, c_{y} \geq 0$. As shown in Ref.~\cite{horodecki96} every Bell-diagonal state of two qubits (up to local unitaries which, again, we can safely ignore) can be written as
\begin{align} 
  \rho_{AB} = \sum_{j = 1}^{4} p_{j} |\Phi_{j} \rangle \langle \Phi_{j}|,
\end{align}
where $\{p_{j}\}_{j=1}^4$ is a probability distribution and $|\Phi_{1,2} \rangle =\frac{ |00\rangle \pm |11\rangle }{\sqrt{2}}$ and $|\Phi_{3,4} \rangle =\frac{ |01\rangle \pm |10\rangle }{\sqrt{2}}$. It is easy to verify that
\begin{equation}
\label{eq:bell-diagonal-2qbits}
  \rho_{AB} = \frac{1}{4} \big( \id_A \otimes \id_B + \sum_{j} c_{j} \sigma_{j} \otimes 
  \sigma_{j} \big),
\end{equation}
where
\begin{align}
  c_{x} &= p_{1} - p_{2}  + p_{3} - p_{4} \,, \nonumber \\
  c_{y} &= - p_{1} + p_{2}  + p_{3} - p_{4} \,, \label{eq:correlation-coefficients} \\
  c_{z} &= p_{1} + p_{2}  - p_{3} - p_{4} \,. \nonumber
\end{align}

%
\subsubsection{Non-locality}
\begin{defi}
For a bipartite quantum state $\rho_{AB}$ the maximum CHSH value is defined as
\begin{equation}
\betamax(\rho_{AB}) := \max_{A_{0}, A_{1}, B_{0}, B_{1}} \tr \big[ ( A_{0} \otimes B_{0} + A_{0} \otimes B_{1} + A_{1} \otimes B_{0} - A_{1} \otimes B_{1} ) \rho_{AB} \big],
\end{equation}
where the maximisation is taken over all Hermitian, binary observables.
\end{defi}
\noindent Note that for all states $\betamax \geq 2$ and we say that the state violates the CHSH inequality if $\betamax > 2$. It was shown in Ref.~\cite{horodecki95} that if $\rho_{AB}$ is a state of two qubits then the value of $\betamax$ is fully determined by the correlation tensor. Adopting the convention $\abs{c_{x}} \geq \abs{c_{y}} \geq \abs{c_{z}}$ we have
\begin{equation}
\label{eq:maximal-violation}
\betamax(\rho_{AB}) =
\begin{cases}
2 &\nbox{if} c_{x}^{2} + c_{y}^{2} \leq 1,\\
2 \sqrt{c_{x}^{2} + c_{y}^{2}} &\nbox{otherwise.}
\end{cases}
\end{equation}
%
%
\subsubsection{Entropic measures of entanglement}
\noindent To derive a bound on the min-entropy $\Hmin(A|E)_\rho$, we will use a closely related quantity, namely the \emph{max-entropy}.
\begin{defi}
  For a bipartite quantum state $\rho_{AB}$ the \dt{conditional max-entropy} (or just 
  \dt{max-entropy}) is defined as
  \begin{equation}
	\label{eq:hmax-definition}
    \Hmax(A | B) = \max_{\sigma_{B}} \log d_{A} F^2(\rho_{AB}, \pi_{A} \otimes \sigma_{B})
    \,,
  \end{equation}
  where $\pi_{A}$ is the maximally mixed state on $A$ and the maximisation is taken over 
  all states on $B$.
\end{defi}
\noindent The proof uses the following known properties of the min- and max-entropies.

\begin{lemma}[Duality, \cite{KRS09}]
\label{lem:duality}
Let $\rho_{ABC}$ be a tripartite state. Then
\begin{equation*}
\Hmax(A|B)_\rho + \Hmin(A|C)_\rho \geq 0,
\end{equation*}
and the equality holds iff $\rho_{ABC}$ is pure.
\end{lemma}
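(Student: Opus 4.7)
The plan is to reduce the lemma to the pure-state case of the duality, where $\Hmax(A|B)_\psi$ and $-\Hmin(A|C)_\psi$ coincide exactly, and then lift the statement to general $\rho_{ABC}$ using a purification together with the data-processing inequality for the min-entropy. This mirrors the standard KRS09 strategy and allows the interesting structural work (Uhlmann's theorem) to be done once, on pure states.

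First I would treat the pure-state case $\psi_{ABC}$. Starting from the operational expression \eqref{hmin}, namely $\Hmin(A|C)_\psi = -\log d_A \max_{\R_{C \to A'}} F^2(\Phi_{AA'}, \id_A \otimes \R_{C \to A'}(\psi_{AC}))$, I would parametrise each TPCPM $\R_{C \to A'}$ by its Stinespring isometry $V: \calH_C \to \calH_{A'} \otimes \calH_{E'}$, so that the fidelity in question is one between two pure states on $AA'BE'$: one obtained from $|\psi\rangle_{ABC}$ by applying $V$, and one obtained from $|\Phi\rangle_{AA'}$ tensored with a purification of some $\sigma_B$. Since $|\psi\rangle_{ABC}$ and $|\Phi\rangle_{AA'}|\phi_\sigma\rangle_{BB'}$ are purifications of $\psi_{AB}$ and $\pi_A \otimes \sigma_B$ respectively, Uhlmann's theorem identifies the joint supremum over $V$ and $\sigma_B$ with $\max_{\sigma_B} F^2(\psi_{AB}, \pi_A \otimes \sigma_B)$. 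Combining with definition \eqref{eq:hmax-definition} yields $\Hmax(A|B)_\psi = -\Hmin(A|C)_\psi$, i.e.\ equality in the lemma for pure states.

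Second, for general $\rho_{ABC}$, introduce a purifying system $R$ and let $\psi_{ABCR}$ be a purification. Since $\Hmax(A|B)$ depends only on the marginal $\rho_{AB} = \psi_{AB}$, applying the pure-state step to $\psi_{AB(CR)}$ gives $\Hmax(A|B)_\rho + \Hmin(A|CR)_\psi = 0$. I would then invoke the data-processing inequality for the min-entropy, $\Hmin(A|C)_\rho \geq \Hmin(A|CR)_\psi$, which follows directly from definition \eqref{hmin-definition}: any subnormalised $\sigma_{CR}$ witnessing $\rho_{ACR} \leq 2^{-\lambda} \id_A \otimes \sigma_{CR}$ yields, after tracing out $R$, a valid witness $\sigma_C = \tr_R \sigma_{CR}$ for $\rho_{AC} \leq 2^{-\lambda} \id_A \otimes \sigma_C$ with the same $\lambda$. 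Adding this inequality to the pure-state identity proves $\Hmax(A|B)_\rho + \Hmin(A|C)_\rho \geq 0$.

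For the equality clause, the $(\Leftarrow)$ direction is immediate: a pure $\rho_{ABC}$ admits the trivial purification $R = \mathbb{C}$, and the data-processing step collapses to an equality, so the result follows from the pure-state step alone. The $(\Rightarrow)$ direction is where I expect the real obstacle: one must argue that equality in $\Hmin(A|C)_\rho \geq \Hmin(A|CR)_\psi$, for an arbitrary purifying $R$, forces $R$ to be trivial. My approach would be operational via \eqref{hmin}: equality forces any optimal recovery map acting on $CR$ to be simulable by a recovery map acting on $C$ alone, which in turn forces $R$ to factor off from $\psi_{ACR}$; combined with the fact that $R$ was chosen to purify $\rho_{ABC}$, this can only happen when $\rho_{ABC}$ is already pure. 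The delicate step will be making this ``decoupling implies purity'' implication rigorous in the presence of possibly degenerate optimisers of the $\Hmin$ SDP, for which I would fall back on the standard argument that a Schmidt decomposition of $\psi_{ABCR}$ forces the $R$ register to carry non-trivial entropy iff $\rho_{ABC}$ is mixed.
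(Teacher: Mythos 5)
Your first two steps are sound and are exactly the standard argument behind the citation: the paper itself gives no proof of this lemma (it imports it from KRS09), and the route you take --- Uhlmann's theorem on a Stinespring dilation of the recovery map to get $\Hmax(A|B)_\psi = -\Hmin(A|C)_\psi$ for pure $\psi_{ABC}$, then a purification $\psi_{ABCR}$ of a general $\rho_{ABC}$ combined with the data-processing inequality $\Hmin(A|C)_\rho \geq \Hmin(A|CR)_\psi$ (your partial-trace-of-the-witness argument from \eqref{hmin-definition} is correct) --- establishes the inequality and the ``pure $\Rightarrow$ equality'' direction. These are also the only two facts the paper ever uses (in \cref{sec:direct,sec:converse}), so on the substance that matters your proposal is complete.

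The genuine gap is the $(\Rightarrow)$ direction of the equality clause, and it is not merely the ``delicate step'' you flag --- it is false, so no version of your decoupling argument can close it. Take $\calH_A = \mathbb{C}^2$ and $\rho_{ABC} = \outp{0}_A \otimes \rho_{BC}$ with $\rho_{BC}$ mixed. Since fidelity is multiplicative over tensor products, $F(\outp{0}_A \otimes \rho_B, \pi_A \otimes \sigma_B) = \sqrt{1/2}\, F(\rho_B, \sigma_B)$, maximized at $\sigma_B = \rho_B$, so $\Hmax(A|B)_\rho = \log\bigl(2 \cdot \tfrac{1}{2}\bigr) = 0$. For the min-entropy, $\sigma_C = \rho_C$ witnesses $\lambda = 0$, and sandwiching $\outp{0}_A \otimes \rho_C \leq 2^{-\lambda} \id_A \otimes \sigma_C$ with $\bra{0}_A \cdot \ket{0}_A$ and taking the trace forces $\lambda \leq 0$; hence $\Hmin(A|C)_\rho = 0$. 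The sum vanishes although $\rho_{ABC}$ is mixed. This also pinpoints where your sketch breaks: equality in the data-processing step does \emph{not} force $R$ to be trivial or to factor off --- here $R$ purifies $\rho_{BC}$ and carries nonzero entropy (so your Schmidt-decomposition fallback, while true, is beside the point), yet $\Hmin(A|C)_\rho = \Hmin(A|CR)_\psi$ because the optimal witness/recovery never uses $C$ or $R$ at all. The correct fix is to weaken the lemma's clause from ``iff'' to ``if $\rho_{ABC}$ is pure, equality holds'' (which is all KRS09 asserts and all the paper needs), rather than to look for a proof of the converse.
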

\begin{lemma}[Data-processing inequality, \cite{Ren05}]
\label{lem:Data Processing Inequality}
For an arbitrary tripartite state $\rho_{ABC}$ we have
\begin{equation}
\Hmax(A|B) \geq \Hmax(A|BC).
\end{equation}
\end{lemma}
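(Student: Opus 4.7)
The plan is to reduce the inequality to the monotonicity of fidelity under the partial trace channel $\tr_C$, which is a valid CPTP map. Since $\Hmax(A|BC)$ is defined as a maximum over states $\tau_{BC}$ on $BC$, it suffices to exhibit, for the optimal $\tau_{BC}^\star$ on the right-hand side, a single candidate state $\sigma_B$ for the optimization defining $\Hmax(A|B)$ whose objective value meets or exceeds $d_A F^2(\rho_{ABC}, \pi_A \otimes \tau_{BC}^\star)$. The natural choice is $\sigma_B := \tr_C(\tau_{BC}^\star)$.

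First I would let $\tau_{BC}^\star$ attain the maximum in \cref{eq:hmax-definition} applied to $\rho_{ABC}$, so that
$$\Hmax(A|BC)_\rho = \log\bigl(d_A \, F^2(\rho_{ABC}, \pi_A \otimes \tau_{BC}^\star)\bigr).$$
Next I would invoke the standard monotonicity of fidelity under CPTP maps (see e.g.~\cite{NC00}), applied to the partial trace over $C$: for any states $\rho, \eta$ on $ABC$,
$$F(\tr_C \rho, \tr_C \eta) \geq F(\rho, \eta).$$
Specializing to $\rho = \rho_{ABC}$ and $\eta = \pi_A \otimes \tau_{BC}^\star$, and using $\tr_C(\pi_A \otimes \tau_{BC}^\star) = \pi_A \otimes \sigma_B$, gives
$$F(\rho_{AB}, \pi_A \otimes \sigma_B) \geq F(\rho_{ABC}, \pi_A \otimes \tau_{BC}^\star).$$
Squaring both sides (both are nonnegative), multiplying by $d_A$, and applying the monotone logarithm preserves the inequality, so
$$\log\bigl(d_A \, F^2(\rho_{AB}, \pi_A \otimes \sigma_B)\bigr) \geq \Hmax(A|BC)_\rho.$$
Since $\sigma_B$ is a valid candidate in the max defining $\Hmax(A|B)_\rho$, the left-hand side is bounded above by $\Hmax(A|B)_\rho$, which yields the claim.

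There is no substantive obstacle: the whole argument is a direct application of fidelity monotonicity, and the only bookkeeping is verifying that $\sigma_B = \tr_C(\tau_{BC}^\star)$ is a valid state (immediate, since partial trace preserves positivity and normalization) and that the partial trace commutes with the tensor with $\pi_A$ on the disjoint system $A$. The same argument in fact establishes data processing of $\Hmax(A|B)$ under any CPTP map acting on $B$, of which discarding $C$ is the particular instance used here.
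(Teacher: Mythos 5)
Your proof is correct. The paper itself gives no proof of this lemma---it is stated as a known property with a citation to \cite{Ren05}---so there is no in-paper argument to compare against; but given the fidelity-based definition \eqref{eq:hmax-definition} used here, your route (take an optimizer $\tau_{BC}^\star$ for $\Hmax(A|BC)$, apply monotonicity of the fidelity under the CPTP map $\tr_C$, and note that $\sigma_B = \tr_C(\tau_{BC}^\star)$ is a feasible candidate in the maximization defining $\Hmax(A|B)$) is exactly the standard derivation found in the cited literature, and your closing remark that the same argument yields data processing under an arbitrary CPTP map acting on $B$ is also valid, since $(\id_A \otimes \calE)(\pi_A \otimes \tau_B) = \pi_A \otimes \calE(\tau_B)$.
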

\begin{lemma}[Conditioning on classical information, Proposition 4.6 of~\cite{Tom12}]
\label{lem:Concavity of max-entropy}
Let $\rho_{ABK}$ be a tripartite state where $K$ is a classical register:
\begin{equation}
\rho_{ABK} = \sum_{k} p_{k} \, \tau_{AB}^{k} \otimes \ketbraq{k}.
\end{equation}
Then
\begin{equation}
\Hmax(A | BK)_{\rho_{ABK}} = \log \Big( \sum_{k} p_{k} \, 2^{\Hmax(A|B)_{\tau_{AB}^{k}}} \Big).
\end{equation}
\end{lemma}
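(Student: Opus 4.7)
The plan is to prove the identity by unpacking the definition of $\Hmax$ in terms of fidelity and reducing the optimization on $BK$ to two nested optimizations: one over a probability distribution over $k$ and one, for each $k$, over states on $B$. First I would write
\begin{align}
2^{\Hmax(A|BK)_\rho} = d_A \max_{\sigma_{BK}} F^2(\rho_{ABK}, \pi_A \otimes \sigma_{BK}),
\end{align}
and then argue that, without loss of generality, the optimizer $\sigma_{BK}$ may be taken to be block-diagonal in the $K$ basis. This follows from monotonicity of the fidelity under the dephasing channel $\calD_K$ on $K$: since $\rho_{ABK}$ is already invariant under $\calD_K$, applying $\id_{AB} \otimes \calD_K$ to both arguments cannot decrease the fidelity, and it maps $\pi_A \otimes \sigma_{BK}$ to $\pi_A \otimes \calD_K(\sigma_{BK})$, which has block-diagonal form. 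So I can restrict to $\sigma_{BK} = \sum_k q_k \, \tilde{\sigma}_B^k \otimes \ketbraq{k}$ with $\{q_k\}$ a probability distribution and each $\tilde{\sigma}_B^k$ a state on $B$.

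Next I would use the fact that for two states block-diagonal in the same orthonormal basis the fidelity splits as a classical-quantum mixture: taking square roots block by block and applying the trace norm gives
\begin{align}
F\bigg(\sum_k p_k \tau^k_{AB} \otimes \ketbraq{k}, \, \sum_k q_k (\pi_A \otimes \tilde{\sigma}_B^k) \otimes \ketbraq{k}\bigg) = \sum_k \sqrt{p_k q_k}\, F(\tau^k_{AB}, \pi_A \otimes \tilde{\sigma}_B^k).
\end{align}
Maximizing over the $\tilde{\sigma}_B^k$ independently, each factor attains $F_k := \max_{\tilde{\sigma}_B^k} F(\tau^k_{AB}, \pi_A \otimes \tilde{\sigma}_B^k)$, so $d_A F_k^2 = 2^{\Hmax(A|B)_{\tau^k_{AB}}}$ by the definition \eqref{eq:hmax-definition}. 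What remains is
\begin{align}
2^{\Hmax(A|BK)_\rho} = d_A \max_{\{q_k\}} \bigg(\sum_k \sqrt{p_k q_k}\, F_k \bigg)^{\!2},
\end{align}
with the maximum taken over probability distributions $\{q_k\}$.

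Finally, a Cauchy--Schwarz argument finishes the proof: writing $\sqrt{p_k q_k}\, F_k = \sqrt{q_k} \cdot (\sqrt{p_k} F_k)$, one gets
\begin{align}
\bigg(\sum_k \sqrt{p_k q_k}\, F_k \bigg)^{\!2} \leq \bigg(\sum_k q_k\bigg)\bigg(\sum_k p_k F_k^2\bigg) = \sum_k p_k F_k^2,
\end{align}
with equality when $q_k \propto p_k F_k^2$, which is a valid probability distribution. Multiplying through by $d_A$ and substituting $d_A F_k^2 = 2^{\Hmax(A|B)_{\tau^k_{AB}}}$ yields the claim after taking the logarithm. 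The main obstacle I expect is the first step: justifying the restriction to block-diagonal $\sigma_{BK}$ must appeal cleanly to data processing (for fidelity, not just entropy) and the invariance of $\rho_{ABK}$ under $\calD_K$; the remaining two steps are direct calculations with no real subtlety beyond the block-diagonal fidelity identity and Cauchy--Schwarz.
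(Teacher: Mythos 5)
Your proof is correct. The paper does not prove this lemma at all---it imports it as Proposition 4.6 of \cite{Tom12}---and your argument (restriction to $K$-block-diagonal $\sigma_{BK}$ via monotonicity of the fidelity under dephasing together with the invariance of $\rho_{ABK}$, the block-diagonal fidelity identity giving $\sum_k \sqrt{p_k q_k}\, F(\tau_{AB}^k, \pi_A \otimes \tilde{\sigma}_B^k)$, independent optimization of each $\tilde{\sigma}_B^k$, and Cauchy--Schwarz over $\{q_k\}$ with the attainable optimizer $q_k \propto p_k F_k^2$) is essentially the standard proof found in that reference, with every step sound, including the direction of data processing and the well-definedness of the equality distribution since each $F_k > 0$.
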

\noindent Finally, we need an explicit expression for the max-entropy of a Bell-diagonal state. Note that by assumption $d_{A} = d_{B} = d$.
\begin{lemma}
\label{lem:max-entropy-bell-diagonal}
Let $\rho_{AB}$ be a Bell-diagonal state of form~\eqref{eq:bell-diagonal-definition}. Then the conditional max-entropy equals
\begin{equation}
\Hmax(A|B) = - \log d + 2 \log \Big( \sum_{j} \sqrt{p_{j}} \Big).
\end{equation}
\end{lemma}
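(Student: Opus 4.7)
The strategy I would pursue is to show that the supremum defining $\Hmax(A|B)_\rho$ is attained at the uniform choice $\sigma_B = \pi_B = \id_B / d$, and then to compute the resulting fidelity in closed form exploiting the Bell-diagonal structure. Evaluating at $\sigma_B = \pi_B$ is the easy half: the target state becomes $\id_{AB}/d^2$, so the fidelity collapses to $F(\rho_{AB}, \id/d^2) = (1/d)\, \tr \sqrt{\rho_{AB}} = (1/d) \sum_j \sqrt{p_j}$, using the Bell-basis spectral decomposition $\sqrt{\rho_{AB}} = \sum_j \sqrt{p_j}\, \ketbraq{\Phi_j}$. Substituting into $\log(d_A F^2)$ with $d_A = d$ immediately yields the claimed value $-\log d + 2\log \sum_j \sqrt{p_j}$, which takes care of the ``$\geq$'' direction.

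For the matching upper bound $F(\rho_{AB}, \pi_A \otimes \sigma_B) \leq F(\rho_{AB}, \pi_A \otimes \pi_B)$ for every state $\sigma_B$, I would apply the data-processing (monotonicity) inequality for the fidelity with a carefully chosen CPTP map. The natural candidate is the pinching channel in the generalized Bell basis, $\mathcal{P}(X) := \sum_j \ketbraq{\Phi_j} X \ketbraq{\Phi_j}$. Two properties make it do exactly what I need: (i) it fixes $\rho_{AB}$, because $\rho_{AB}$ is by assumption diagonal in $\{\ket{\Phi_j}\}$; and (ii) it sends $\pi_A \otimes \sigma_B$ to $\pi_A \otimes \pi_B$ for every state $\sigma_B$. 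With (i) and (ii) in hand, monotonicity gives $F(\rho_{AB}, \pi_A \otimes \sigma_B) \leq F(\mathcal{P}(\rho_{AB}), \mathcal{P}(\pi_A \otimes \sigma_B)) = F(\rho_{AB}, \pi_A \otimes \pi_B)$, so $\sigma_B = \pi_B$ is optimal and the upper bound follows.

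The step I expect to require the most care is establishing (ii), since it is the one place where the ``generalized Bell basis'' hypothesis is genuinely used. Both defining properties enter: the maximally mixed marginal $\tr_A \ketbraq{\Phi_j} = \id_B/d$ yields $\bramatketq{\Phi_j}{\pi_A \otimes \sigma_B} = (1/d)\tr(\sigma_B\, \tr_A \ketbraq{\Phi_j}) = 1/d^2$ independently of $\sigma_B$ and of $j$, and the completeness relation $\sum_j \ketbraq{\Phi_j} = \id_{AB}$ then gives $\mathcal{P}(\pi_A \otimes \sigma_B) = (1/d^2)\sum_j \ketbraq{\Phi_j} = \id_{AB}/d^2 = \pi_A \otimes \pi_B$. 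A pleasant feature of this route is that no Pauli or Weyl--Heisenberg structure on the unitaries implementing the $\ket{\Phi_j}$ is needed, which matches the generality of the generalized Bell basis as defined in the preliminaries.
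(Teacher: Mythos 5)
Your proof is correct, but it follows a genuinely different route from the paper's. The paper recasts the max-entropy as a semidefinite program (following the formulation of \cite{vitanov13}), then exhibits explicit feasible points for both the \texttt{PRIMAL} (the operator $Z_{AB}$ of Eq.~\eqref{eq:ZAB} together with $\mu = \frac{1}{d}(\sum_j\sqrt{p_j})^2$, verified via \cref{lem:positive-semi-definite}) and the \texttt{DUAL} (a rank-one $Y_{ABC}$ with $\sigma_B = \id_B/d$), so that the matching values certify $\lambda = \frac{1}{d}(\sum_j\sqrt{p_j})^2$ exactly. You instead work directly with the variational definition \eqref{eq:hmax-definition}: the lower bound by evaluating at $\sigma_B = \pi_B$ and using the spectral decomposition $\sqrt{\rho_{AB}} = \sum_j \sqrt{p_j}\,\ketbraq{\Phi_j}$ (legitimate, since $d^2$ rank-one projectors summing to $\id_{AB}$ on a $d^2$-dimensional space are automatically orthonormal — a fact you use implicitly and could state), and the upper bound via monotonicity of the fidelity under the Bell-basis pinching, which fixes $\rho_{AB}$ and collapses every $\pi_A\otimes\sigma_B$ to $\pi_A\otimes\pi_B$ thanks to the maximally mixed marginals. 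Your argument is more elementary and self-contained: it needs no purification, no SDP duality, and no auxiliary lemma on positive semi-definite ordering, only data processing for $F$. What the paper's SDP route buys is a pair of explicit primal/dual certificates that slot into the general computational framework it uses elsewhere, and it computes $\Hmax$ in the purified-tripartite form that feeds directly into the duality with $\Hmin$ in \cref{lem:duality}; your pinching argument is arguably the cleaner proof of this particular lemma on its own.
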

\noindent To prove \cref{lem:max-entropy-bell-diagonal} we use the fact that the optimization problem which appears in the definition of the max-entropy \eqref{eq:hmax-definition} can be written as a semidefinite program (SDP)~\cite{vitanov13}. More specifically, given $\rho_{AB}$ we have $\Hmax(A | B) = \log \lambda$, where $\lambda$ is the value of the following SDP for $\rho_{ABC}$ being an arbitrary purification of $\rho_{AB}$
\begin{equation*}
\begin{aligned}
\texttt{PRIMAL}: \quad
& {\text{minimize}}
& & \mu \\
& \text{subject to}
& & \mu \id_{B} \geq \text{tr}_{A}(Z_{AB}) \\
&&& Z_{AB} \otimes \id_{C} \geq \rho_{ABC} \\
&&& Z_{AB} \in \calP(\calH_{AB})\\
&&& \mu \geq 0
\end{aligned}
\hspace{1cm}
\begin{aligned}
\texttt{DUAL}: \quad
&{\text{maximize}}
& &  \tr (\rho_{ABC} Y_{ABC})  \\
& \text{subject to}
& & \tr_{C}(Y_{ABC}) \leq \id_{A} \otimes \sigma_{B} \\
&&& \tr \sigma_{B} \leq 1 \\
&&& Y_{ABC} \in \calP(\calH_{ABC}) \\
&&& \sigma_{B} \in \calP(\calH_{B})
\end{aligned}
\end{equation*}
\noindent where $\calP(\calH)$ denotes the set of positive semi-definite operators acting on $\calH$. By providing feasible solutions for the \texttt{PRIMAL} and the \texttt{DUAL} we show that for Bell-diagonal states
\begin{equation}
\lambda = \frac{1}{d} \Big( \sum_{j} \sqrt{p_{j}} \Big)^{2}
\end{equation}
which is precisely the statement of \cref{lem:max-entropy-bell-diagonal}.
\begin{proof}
Let $\rho_{ABC} = \ketbraq{\psi_{ABC}}$ be a purification of $\rho_{AB}$, e.g.
\begin{equation}
| \psi_{ABC} \rangle = \sum_{j} \sqrt{p_{j}} \ket{ \Phi_{j} } \otimes \ket{j}.
\end{equation}
For the \texttt{PRIMAL} consider
\begin{gather}\label{eq:ZAB}
Z_{AB} = \Big( \sum_{j } \sqrt{p_{j}} \Big) \sum_{k} \sqrt{p_{k}} \ketbraq{\Phi_{k}},\\
\mu = \frac{1}{d} \Big( \sum_{j} \sqrt{p_{j}} \Big)^{2} .
\end{gather}
Clearly, $Z_{AB} \geq 0$, $\mu \geq 0$ and since $\tr_{A}(Z_{AB}) = \frac{1}{d} \big( \sum_{j} \sqrt{p_{j}} \big)^{2} \id_{B}$ the first constraint is easy to check. The last inequality we need to check is
\begin{equation} \label{eq: const2}
\Big( \sum_{j } \sqrt{p_{j}} \Big) \sum_{k } \sqrt{p_{k}} \ketbraq{ \Phi_{k} } \otimes \id_{C} \geq \rho_{ABC}.
\end{equation} 
We apply \cref{lem:positive-semi-definite} to $A = Z_{AB} \otimes \id_{C}$ and $\ket{b} = \ket{\psi_{ABC}}$. The projector on the support of $Z_{AB} \otimes \id_{C}$ equals
\begin{equation}
\Pi = \sum_{j : p_{j} > 0} \ketbraq{ \Phi_{j} } \otimes \id_{C}
\end{equation}
and it is easy to verify that $\Pi \ket{\psi_{ABC}} = \ket{\psi_{ABC}}$. Moreover, since $(Z_{ABC})^{-1} = (Z_{AB})^{-1} \otimes \id_{C} $ we have
\begin{gather}
\left(  \sum_{m } \sqrt{p_{m}} \bra{ \Phi_{m} } \otimes \bra{ m } \right) \left( \Big( \sum_{j} \sqrt{p_{j}} \Big)^{-1} \sum_{k : p_{k} > 0} \frac{1}{\sqrt{p_{k}}} \ketbraq{ \Phi_{k} } \otimes \id_{C}  \right) \left( \sum_{n} \sqrt{p_{n}} \ket{ \Phi_{n} } \otimes \ket{ n } \right)\\
= \Big( \sum_{j} \sqrt{p_{j}} \Big)^{-1} \left( \sum_{m} \sqrt{p_{m}} \bra{ \Phi_{m} } \otimes \bra{ m } \right) \left(\sum_{n : p_{n} > 0} \ket{ \Phi_{n} } \otimes \ket{ n } \right) = 1.
\end{gather}
Showing that $Z_{AB}$ and $\mu$ constitute a valid solution to the \texttt{PRIMAL} implies that $\lambda \leq \frac{1}{d} \big( \sum_{j} \sqrt{p_{j}} \big)^{2}$.

\noindent For the \texttt{DUAL} consider
\begin{gather}
Y_{ABC} =  \frac{1}{d} \sum_{jk} |\Phi_{j} \rangle \langle \Phi_{k}| \otimes |j\rangle \langle k|,\\
\sigma_{B} = \frac{\id_{B}}{d}.
\end{gather}
Note that $Y_{ABC}$ is proportional to a rank-1 projector. The first constraint gives
\begin{equation}
\tr_{C}(Y_{ABC}) = \frac{1}{d} \sum_{j} \ketbraq{\Phi_{j}} = \frac{1}{d} \; \id_{A} \otimes \id_{B} = \id_{A} \otimes \sigma_{B}
\end{equation}
and the remaining ones are easily verified to be true. The value of this solution equals $\tr (\rho_{ABC} Y_{ABC}) = \frac{1}{d} \big( \sum_{j} \sqrt{p_{j}} \big)^{2}$ which implies that $\lambda \geq \frac{1}{d} \big( \sum_{j} \sqrt{p_{j}} \big)^{2}$.
\end{proof}

\subsubsection{Sufficiency of considering Bell-diagonal states}
\noindent To prove the converse part of \cref{thm:feasible-region}, we will use the following argument, which is similar in spirit and inspired by the symmetrization argument presented in Ref.~\cite{ABGM07}.
%
%
\begin{lemma}
\label{lem:symmetrisation}
Let $\rho_{AB}$ be an arbitrary state of two qubits. Then, there exists a Bell-diagonal state $\sigma_{AB}$ which satisfies
\begin{equation}
\betamax(\rho_{AB}) = \betamax(\sigma_{AB}) \nbox{and} \Hmax(A|B)_{\sigma} \geq \Hmax(A | B)_{\rho}.
\end{equation}
\end{lemma}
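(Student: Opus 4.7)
The plan is to obtain $\sigma_{AB}$ from $\rho_{AB}$ by a Pauli twirl, following the same symmetrisation spirit as in Ref.~\cite{ABGM07}. Since both $\betamax$ and $\Hmax(A|B)$ are invariant under local unitaries, I would first apply the local unitary that diagonalises the correlation tensor, so that without loss of generality $\rho_{AB}$ already has the form in~\eqref{eq:two-qubit-state}. Writing $\sigma_{0} := \id$, I would then define
\begin{equation}
  \sigma_{AB} := \frac{1}{4} \sum_{j \in \{0, x, y, z\}} (\sigma_{j} \otimes \sigma_{j}) \, \rho_{AB} \, (\sigma_{j} \otimes \sigma_{j}).
\end{equation}
A direct bookkeeping using $\sigma_{j} \sigma_{k} \sigma_{j} = \pm \sigma_{k}$ (with the minus sign iff $j, k \in \{x, y, z\}$ are distinct) shows that this twirl annihilates each single-qubit term $a_{j} \sigma_{j} \otimes \id$ and $\id \otimes b_{j} \sigma_{j}$, while each correlation term $c_{j} \sigma_{j} \otimes \sigma_{j}$ is left invariant (the two sign flips, one from each factor, always cancel). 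Hence $\sigma_{AB}$ has exactly the Bell-diagonal form~\eqref{eq:bell-diagonal-2qbits} with the same correlation coefficients $c_{x}, c_{y}, c_{z}$ as $\rho_{AB}$.

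Preservation of the maximum CHSH value is then immediate from the Horodecki characterisation~\eqref{eq:maximal-violation}: since the two largest squared correlation coefficients $c_{x}^{2}, c_{y}^{2}$ are unchanged, $\betamax(\sigma_{AB}) = \betamax(\rho_{AB})$.

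For the max-entropy inequality, I would adjoin a classical register $K$ recording which Pauli was applied,
\begin{equation}
  \sigma_{ABK} := \frac{1}{4} \sum_{k \in \{0, x, y, z\}} (\sigma_{k} \otimes \sigma_{k}) \, \rho_{AB} \, (\sigma_{k} \otimes \sigma_{k}) \otimes \ketbraq{k}_{K},
\end{equation}
so that $\sigma_{AB} = \tr_{K} \sigma_{ABK}$. By the data-processing inequality (\cref{lem:Data Processing Inequality}) one has $\Hmax(A|B)_{\sigma} \geq \Hmax(A|BK)_{\sigma_{ABK}}$, and by \cref{lem:Concavity of max-entropy} the right-hand side equals a logarithm of a convex combination of $2^{\Hmax(A|B)_{\tau^{k}}}$, where $\tau^{k}_{AB} := (\sigma_{k} \otimes \sigma_{k}) \, \rho_{AB} \, (\sigma_{k} \otimes \sigma_{k})$. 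Since each $\tau^{k}_{AB}$ is the image of $\rho_{AB}$ under a local unitary, and the max-entropy, being defined~\eqref{eq:hmax-definition} through the fidelity against a product state of the form $\pi_{A} \otimes \sigma_{B}$, is manifestly local-unitary invariant, every term in the convex combination equals $\Hmax(A|B)_{\rho}$. The weighted average thus collapses and gives $\Hmax(A|BK)_{\sigma_{ABK}} = \Hmax(A|B)_{\rho}$, which combined with data processing yields the desired $\Hmax(A|B)_{\sigma} \geq \Hmax(A|B)_{\rho}$.

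The only delicate point is the sign bookkeeping in the first step: one must check that every diagonal correlation coefficient $c_{j}$ is preserved pointwise, rather than merely preserved as a multiset, so that the exact equality $\betamax(\sigma_{AB}) = \betamax(\rho_{AB})$ is available from Horodecki's formula. This is not conceptually hard but is the one place where a careless computation (for instance, twirling with only a subgroup such as $\{\id,\sigma_{z}\} \otimes \{\id,\sigma_{z}\}$) would fail to kill the off-diagonal or linear terms and invalidate the reduction.
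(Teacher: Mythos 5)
Your proposal is correct and follows essentially the same route as the paper: the same Pauli twirl $\frac{1}{4}\sum_{j}(U_j\otimes U_j)\rho_{AB}(U_j^\dagger\otimes U_j^\dagger)$ to produce a Bell-diagonal state with unchanged correlation tensor, the Horodecki formula to preserve $\betamax$, and the classical-register construction combined with data processing, the conditioning lemma, and local-unitary invariance of $\Hmax$ to get the entropy inequality. The sign bookkeeping you flag as the delicate point is exactly the computation the paper carries out, so there is nothing further to add.
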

\begin{proof}
We present an explicit construction of $\sigma_{AB}$ which meets the requirements. According to Eq.~\eqref{eq:two-qubit-state}, $\rho_{AB}$ can be written as
\begin{align}
  \rho_{AB} = \frac{1}{4} \big( \id_A \otimes \id_B + \sum_{j} a_{j} \sigma_{j} \otimes 
  \id_B + \id_A \otimes \sum_{j} b_{j} \sigma_{j} + \sum_{j} c_{j} \sigma_{j} \otimes 
  \sigma_{j} \big).
\end{align}
Moreover, consider the following random unitary channel
\begin{equation}
\Lambda(\rho_{AB}) = \frac{1}{4}\sum_{j = 1}^{4} (U_{j} \otimes U_{j}) \rho_{AB} (U_{j}^{\dagger} \otimes U_{j}^{\dagger}),
\end{equation}
where $U_{1} =\id$, $U_{2} = \sigma_{x}$, $U_{3} =\sigma_{y}$ and $U_{4} =\sigma_{z}$. It is easy to verify that for $j \in \{x, y, z\}$
\begin{equation}
\Lambda( \sigma_{j} \otimes \id_B ) = \Lambda( \id_A \otimes \sigma_{j} ) = 0
\end{equation}
because each Pauli operator commutes with identity and itself but anticommutes with the other two unitaries. This implies that $\sigma_{AB} = \Lambda(\rho_{AB})$ is Bell-diagonal. Moreover, one can check that the map preserves the correlation tensor, i.e.~for $j \in \{x, y, z\}$
\begin{equation}
\Lambda( \sigma_{j} \otimes \sigma_{j} ) = \sigma_{j} \otimes \sigma_{j},
\end{equation}
which implies that $\betamax(\rho_{AB}) = \betamax(\sigma_{AB})$. To check the last property consider the following state
\begin{equation}
\sigma_{ABK} = \frac{1}{4}\sum_{j = 1}^{4} (U_{j} \otimes U_{j}) \rho_{AB} (U_{j}^{\dagger} \otimes U_{j}^{\dagger}) \otimes \ketbraq{j}.
\end{equation}
By the data processing inequality, we have $\Hmax(A | B)_{\sigma} \geq \Hmax(A | BK)_{\sigma}$ and by conditioning on classical information we have
\begin{gather}
\Hmax(A | BK)_{\sigma} = \log \Big( \sum_{j = 1}^{4} \frac{1}{4} \cdot 2^{\Hmax(A | B)_{\tau^{j}}} \Big),\\
\nbox{where} \tau_{AB}^{j} = (U_{j} \otimes U_{j}) \rho_{AB} (U_{j}^{\dagger} \otimes U_{j}^{\dagger}).
\end{gather}
Since the max-entropy is invariant under local unitaries we have $\Hmax(A | B)_{\tau^{j}} = \Hmax(A | B)_{\rho}$ for $j \in \{x, y, z\}$ which implies that
\begin{equation}
\Hmax(A | B)_{\sigma} \geq \Hmax(A | BK)_{\sigma} = \Hmax(A | B)_{\rho}. 
\end{equation}
\end{proof}
\noindent The final technical lemma concerns the problem of maximizing the max-entropy of a Bell-diagonal state of two qubits whose maximal CHSH violation is fixed.
\begin{lemma}
\label{lem:hmax-beta-tradeoff}
Let $\rho_{AB}$ be a Bell-diagonal state of two qubits, whose maximal CHSH violation equals $\beta \in (2, 2 \sqrt{2} ]$. Then, the max-entropy of $\rho_{AB}$ satisfies the following inequality
\begin{equation}
\Hmax(A|B) \leq -f(\beta)
%
\end{equation}
for function $f$ defined in Eq.~\eqref{eq:f-definition}. Moreoever, there exists a state which saturates this inequality.
%
%
%
%
%
%
\end{lemma}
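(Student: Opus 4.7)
The plan is to invoke \cref{lem:max-entropy-bell-diagonal}, which gives $\Hmax(A|B) = -1 + 2\log \sum_{j=1}^{4} \sqrt{p_j}$ for a Bell-diagonal two-qubit state, and then translate the claimed bound into a constrained maximization over the Bell-basis probabilities. Writing $\rho_{AB}$ as in \eqref{eq:bell-diagonal-2qbits} and using \eqref{eq:correlation-coefficients}, each $p_j$ is affine in $(c_x, c_y, c_z)$. By \eqref{eq:maximal-violation}, under the standing convention $c_x \geq c_y \geq |c_z|$ with $c_x, c_y \geq 0$, the assumption $\betamax = \beta > 2$ is equivalent to $c_x^2 + c_y^2 = \beta^2/4$. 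Hence $\Hmax(A|B) \leq -f(\beta)$ is equivalent to
\begin{equation*}
2\sum_{j=1}^{4} \sqrt{p_j} \;\leq\; \max_{c_z'}\Bigl[2\sqrt{1+c_z'} + \sqrt{1-c_z' + \beta/\sqrt{2}} + \sqrt{1-c_z'-\beta/\sqrt{2}}\Bigr]
\end{equation*}
maximized over Bell-diagonal states with the prescribed CHSH violation.

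The main step is to show that this three-parameter optimization collapses to the one-parameter optimization in $c_z$ by proving that the supremum is attained on the subfamily $c_x = c_y = \beta/(2\sqrt{2})$. I would pass to the coordinates $d := c_x - c_y \geq 0$ and $s := c_x + c_y$, which satisfy $d^2 + s^2 = \beta^2/2$, and use the identity $(\sqrt{\alpha+\delta} + \sqrt{\alpha-\delta})^2 = 2\alpha + 2\sqrt{\alpha^2-\delta^2}$ to rewrite
\begin{equation*}
2\sum_j \sqrt{p_j} = \sqrt{2}\,\Bigl[\sqrt{(1+c_z) + \sqrt{(1+c_z)^2-d^2}} + \sqrt{(1-c_z) + \sqrt{(1-c_z)^2-s^2}}\Bigr].
\end{equation*}
For fixed $c_z$, viewing this as a function on the constraint circle $d^2 + s^2 = \beta^2/2$ and applying Lagrange multipliers, I expect the unique interior critical point compatible with probability positivity to be $(d,s) = (0, \beta/\sqrt{2})$, with a second-derivative (or boundary-value) check identifying it as the maximum. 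The main obstacle is that the two terms respond oppositely as $d$ decreases --- the first grows because $(1+c_z)^2 - d^2$ grows, while the second shrinks because $s$ simultaneously grows --- so any termwise majorization fails and the argument has to balance the two contributions globally via the constraint.

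Substituting $d = 0$ and $s = \beta/\sqrt{2}$ into the compressed formula recovers exactly the expression inside the $\max$ of \eqref{eq:f-definition}; the range $c_z' \in [-1, 1-\beta/\sqrt{2}]$ then arises from $p_1 = p_2 = (1+c_z')/4 \geq 0$ and $p_4 = (1-\beta/\sqrt{2}-c_z')/4 \geq 0$, so the remaining optimization over $c_z'$ is precisely the one defining $f(\beta)$. For the saturation statement, the Bell-diagonal state with $c_x = c_y = \beta/(2\sqrt{2})$ and $c_z = c_z^*$ (a maximizer inside the above interval) realizes the bound by construction: \eqref{eq:maximal-violation} yields $\betamax = \beta$, and \cref{lem:max-entropy-bell-diagonal} together with the definition of $f$ yields $\Hmax(A|B) = -f(\beta)$, provided one verifies $|c_z^*| \leq c_y = \beta/(2\sqrt{2})$ so that the WLOG ordering assumed in \eqref{eq:maximal-violation} remains consistent.
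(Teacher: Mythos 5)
Your use of \cref{lem:max-entropy-bell-diagonal}, the passage to correlation coefficients, the compression identity, the endpoint bookkeeping giving the range $c_z\in[-1,1-\beta/\sqrt 2]$, and the saturation argument all check out and agree with the paper. The gap is the central reduction step, and it is not merely that you have left the Lagrange computation undone: the intermediate statement you plan to prove is false. It is not true that for every fixed $c_z$ the maximum of $2\sum_j\sqrt{p_j}$ over the circle $d^2+s^2=\beta^2/2$ (subject to $p_j\geq 0$) sits at $(d,s)=(0,\beta/\sqrt 2)$. Take $\beta=2.1$, so $q:=\beta/\sqrt 2\approx 1.485$, and fix $c_z=-0.4$. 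Positivity of $p_2$ forces $d\leq 1+c_z=0.6$ and positivity of $p_4$ forces $s\leq 1-c_z=1.4$, which together with $d^2+s^2=q^2\approx 2.205$ confine $d$ to $[\sqrt{q^2-1.96},\,0.6]\approx[0.495,\,0.6]$; the point $d=0$ is not even feasible on this slice. Since the objective has infinite positive slope in $d$ at the left endpoint (where $p_4=0$) and infinite negative slope at the right endpoint (where $p_2=0$), its slice maximum is an interior critical point with $d>0$ and all $p_j>0$ (numerically near $d\approx 0.52$, where the ordering $\abs{c_z}\leq c_y\leq c_x$ still holds, so the point is not excluded by the WLOG convention). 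The claim also fails on slices where $d=0$ is feasible: for $c_z$ just below $1-q$ one has $p_4\approx 0$ at $(d,s)=(0,q)$, and the infinite slope of $\sqrt{p_4}$ makes a move to $d>0$ strictly increase the objective, so there $(0,q)$ is a constrained minimum of the slice, not a maximum.

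The upshot is that the reduction to $c_x=c_y$ cannot be established slice-by-slice in $c_z$; it is intrinsically a statement about the joint maximum over $(c_z,d,s)$, since the slices whose maxima lie at $d>0$ happen to have smaller value than the optimal $d=0$ slice, and that comparison is exactly what needs proving. The paper's proof performs this joint analysis: writing $c_x-c_y=q\sin\phi$ and $c_x+c_y=q\cos\phi$, it shows that the combination $2\sin\phi\,\partial g/\partial c_z+(2/q)\,\partial g/\partial\phi$ cannot vanish in the interior of the two-dimensional feasible region (so there is no joint critical point), rules out the boundary arcs where some $p_j=0$ by the infinite-slope argument, and concludes the maximum must lie on the edge $\phi=0$, i.e.\ $c_x=c_y$. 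You correctly diagnosed the obstacle --- the two compressed terms respond oppositely to a transfer of weight between $d^2$ and $s^2$, so no termwise bound works --- but the per-$c_z$ Lagrange step you propose does not overcome it; you would need either the paper's joint-gradient argument or some other mechanism to dominate every slice maximum by a point on the $c_x=c_y$ edge.
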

%
%
\begin{proof}
According to \cref{lem:max-entropy-bell-diagonal} the max-entropy of a Bell-diagonal state of two qubits equals
\begin{equation}
\Hmax(A|B) = - 1 + 2 \log \Big( \sum_{j = 1}^{4} \sqrt{p_{j}} \Big).
\end{equation}
Here, it is convenient to express the probabilities through the correlation coefficients $c_{x}, c_{y}, c_{z}$. Inverting Eqs.~\eqref{eq:correlation-coefficients} gives
\begin{gather}
p_{1} = \frac{1}{4} ( 1 + c_{x} - c_{y} + c_{z}), \quad p_{2} = \frac{1}{4} ( 1 - c_{x} + c_{y} + c_{z}),\\
p_{3} = \frac{1}{4} ( 1 + c_{x} + c_{y} - c_{z}), \quad p_{4} = \frac{1}{4} ( 1 - c_{x} - c_{y} - c_{z}),
\end{gather}
which allows us to write
\begin{equation}
\label{eq:hmax-g-function}
\Hmax(A|B) = - 3 + 2 \log g(c_{x}, c_{y}, c_{z}),
\end{equation}
where
\begin{equation}
\label{eq:g-definition}
g(c_{x}, c_{y}, c_{z}) = \sqrt{ 1 + c_{x} - c_{y} + c_{z} } + \sqrt{ 1 - c_{x} + c_{y} + c_{z} } + \sqrt{ 1 + c_{x} + c_{y} - c_{z} } + \sqrt{ 1 - c_{x} - c_{y} - c_{z} }.
\end{equation}
%
%
%
%
%
%
In the space of correlation coefficients the feasible set are the triples $(c_{x}, c_{y}, c_{z})$ for which the function $g(c_{x}, c_{y}, c_{z})$ is well-defined (the expressions under the roots must be non-negative). As before, we assume without loss of generality that $\abs{c_{x}} \geq \abs{c_{y}} \geq \abs{c_{z}}$ and $c_{x}, c_{y} \geq 0$.
%
%
%
%
Then, the maximal CHSH violation (we are only interested in states that violate the CHSH inequality) is given by Eq.~\eqref{eq:maximal-violation}
\begin{equation*}
\beta = 2 \sqrt{c_{x}^{2} + c_{y}^{2}}.
\end{equation*}
Since in our case $\beta$ is fixed, the angular parametrisation takes the form
\begin{equation*}
c_{x} = \frac{q}{ \sqrt{2} } \, \sin \Big( \phi + \frac{\pi}{4} \Big) \nbox{and} c_{y} = \frac{q}{ \sqrt{2} } \, \cos \Big( \phi + \frac{\pi}{4} \Big),
\end{equation*}
where $q = \frac{\beta}{\sqrt{2}}$ and $\phi \in [0, \pi/4]$ (which ensures $c_{x} \geq c_{y} \geq 0$). Note that
\begin{gather*}
c_{x} + c_{y} = q \cos \phi,\\
c_{x} - c_{y} = q \sin \phi.
\end{gather*}
It is easy to check that the allowed range of $c_{z}$ is
\begin{equation*}
q \sin \phi - 1 \leq c_{z} \leq 1 - q \cos \phi.
\end{equation*}
Note that we should also impose the condition $\abs{ c_{z} } \leq \abs{ c_{y} }$ but as it turns out the optimal solution will satisfy it even if we do not include it explicitly. To maximize the max-entropy it is sufficient to maximize function $g$ defined in Eq.~\eqref{eq:g-definition}, which in the angular parametrisation equals
\begin{equation}
\label{eq:g-angular}
g(\phi, c_{z}) = \sqrt{ 1 + c_{z} + q \sin \phi } + \sqrt{ 1 + c_{z} - q \sin \phi } + \sqrt{ 1 - c_{z} + q \cos \phi } + \sqrt{ 1 - c_{z} - q \cos \phi },
\end{equation}
over
\begin{equation}
\calR = \big\{ (\phi, c_{z}) : \phi \in [0, \pi/4], \; q \sin \phi - 1 \leq c_{z} \leq  1 - q \cos \phi \big\}.
\end{equation}
The maximum is achieved either in the interior (denoted by $\calR_{\textnormal{int}}$) or at the boundary. Let us start by ruling out the first option. Function $g$ is differentiable everywhere in $\calR_{\textnormal{int}}$ and the partial derivatives are
\begin{gather}
\label{eq:partial-g}
\frac{ \partial g }{\partial c_{z}} = \frac{1}{ 2 \sqrt{ 1 + c_{z} + q \sin \phi } } + \frac{1}{ 2 \sqrt{ 1 + c_{z} - q \sin \phi } } + \frac{-1}{ 2 \sqrt{ 1 - c_{z} + q \cos \phi } } + \frac{-1}{ 2 \sqrt{ 1 - c_{z} - q \cos \phi } },\\
\frac{ \partial g }{\partial \phi} = \frac{q \cos \phi}{ 2 \sqrt{ 1 + c_{z} + q \sin \phi } } + \frac{-q \cos \phi}{ 2 \sqrt{ 1 + c_{z} - q \sin \phi } } + \frac{-q \sin \phi}{ 2 \sqrt{ 1 - c_{z} + q \cos \phi } } + \frac{q \sin \phi}{ 2 \sqrt{ 1 - c_{z} - q \cos \phi } }.
\end{gather}
To prove that there is no maximum in the interior, it suffices to show that there is no $(\phi, c_{z}) \in \calR_{\textnormal{int}}$ such that both derivatives vanish $\frac{ \partial g }{\partial c_{z}} = \frac{ \partial g }{\partial \phi} = 0$. To do this we consider the following linear combination
\begin{equation*}
s(\phi, c_{z}) = 2 \sin \phi \cdot \frac{ \partial g }{\partial c_{z}} + \frac{2}{q} \cdot \frac{ \partial g }{\partial \phi} = \frac{\sin \phi + \cos \phi}{ \sqrt{ 1 + c_{z} + q \sin \phi } } + \frac{\sin \phi - \cos \phi}{ \sqrt{ 1 + c_{z} - q \sin \phi } } + \frac{-2 \sin \phi}{ \sqrt{ 1 - c_{z} + q \cos \phi } }
\end{equation*}
and show that $s(\phi, c_{z}) = 0$ has no solution in $\calR_{\textnormal{int}}$. Since the last term of $s(\phi, c_{z})$ is negative, a necessary condition for $s(\phi, c_{z}) = 0$ is that the sum of the first two terms is non-negative, which is equivalent to
\begin{equation*}
\frac{\sin \phi + \cos \phi}{ \sqrt{ 1 + c_{z} + q \sin \phi } } \geq \frac{\cos \phi - \sin \phi}{ \sqrt{ 1 + c_{z} - q \sin \phi } }.
\end{equation*}
This can be rearranged to give
\begin{equation*}
c_{z} \geq \frac{q}{2 \cos \phi} - 1,
\end{equation*}
which contradicts the second inequality in the definition of $\calR_{\textnormal{int}}$ as shown below.
\begin{gather}
c_{z} \geq \frac{q}{2 \cos \phi} - 1 \nbox{and} 1 - q \cos \phi > c_{z}\\
\implies 1 - q \cos \phi > \frac{q}{2 \cos \phi} - 1 \iff \frac{1}{2 \cos \phi} + \cos \phi < \frac{2}{q}.
\end{gather}
It is easy to check that the left-hand side of the final inequality is always at least $\sqrt{2}$, while the right hand side is always at most $\sqrt{2}$. This proves that the final (strict) inequality is always false, which implies that $s(\phi, c_{z}) = 0$ has no solutions in $\calR_{\textnormal{int}}$ and that $g(\phi, c_{z})$ has no maximum in $\calR_{\textnormal{int}}$.

The boundaries $c_{z} = q \sin \phi - 1$ and $c_{z} = 1 - q \cos \phi$ correspond to one of the expression under the roots being zero. Since the square root function has infinite slope at $0$, such solutions cannot be optimal. Therefore, the maximum must be achieved at the boundary $\phi = 0$. Combining Equations~\eqref{eq:hmax-g-function} and \eqref{eq:g-angular} and setting $\phi = 0$ leads directly to the statement of the lemma.

To show that the solution of the optimization problem satisfies $\abs{c_{z}} \leq \abs{c_{y}}$, it is sufficient to show that for $\phi = 0$ and $c_{z} = - c_{y} = - q / 2$ the partial derivative $\partial g / \partial c_{z}$ is strictly positive.
\end{proof}
\subsection{The direct part}
\label{sec:direct}
\noindent Here, we show (by an explicit construction) that points described by $v \in (2, 2 \sqrt{2}]$ and $f(v) \leq u \leq 1$ are allowed. \cref{lem:hmax-beta-tradeoff} shows that for $v \in (2, 2 \sqrt{2}]$ there exists a Bell-diagonal state of two qubits whose max-entropy equals
\begin{equation}
\Hmax(A|B) = - f(v).
\end{equation}
By duality (\cref{lem:duality}), if $\rho_{ABE}$ is an arbitrary purification, the conditional min-entropy equals
\begin{equation}
\Hmin(A|E) = f(v).
\end{equation}
In this example $u = f(v)$, which corresponds to a point lying precisely on the boundary defined in \cref{thm:feasible-region}. In order to obtain higher values of $u$ (all the way up to $1$), it suffices to apply noise of appropriate strength to subsystem $E$.
\subsection{The converse part}
\label{sec:converse}
\noindent Here, we show that every feasible point $(u, v)$ must satisfy $u \geq f(v)$. Consider a state $\rho_{ABE}$ for which $\Hmin(A | E)_{\rho} = u$ and which for some measurements achieves the CHSH value of $v$. Clearly, $\betamax(\rho_{AB}) \geq v$ and by Lemma~\ref{lem:duality} $\Hmax(A | B)_{\rho} \geq -u$. Applying the symmetrization argument (Lemma~\ref{lem:symmetrisation}) gives rise to a Bell-diagonal state $\sigma_{AB}$ such that $\Hmax(A|B)_{\sigma} \geq -u$ and $\betamax(\sigma_{AB}) \geq v$. By Lemma~\ref{lem:hmax-beta-tradeoff} these quantities must satisfy
\begin{equation}
\Hmax(A|B)_{\sigma} \leq -f \big( \betamax(\sigma_{AB}) \big),
\end{equation}
which implies that
\begin{equation}
u \geq - \Hmax(A|B)_{\sigma} \geq f \big( \betamax(\sigma_{AB}) \big) \geq f(v),
\end{equation}
where the last inequality follows from the fact that $f$ is monotonically increasing.

\section{Decoherence estimation through CHSH tests in GPTs} \label{sec:gpt}

In this section, we are going to develop a framework for decoherence analysis in analogy to \cref{sec:background}, but without assuming that nature is correctly described by quantum theory. Instead, we will work in a framework that makes only minimal assumptions about the probabilistic structure of measurements. This allows to make statements in cases where quantum theory might not be a correct description of nature. 

In \cref{sec:ass}, we define a framework for probabilistic theories that has become a standard one in the literature. Besides defining the core structure in \cref{sec:basic-framework}, we explain in \cref{sec:trip} how we extend this framework to make it suitable for analyzing tripartite states, in a way that allows us to make a decoherence analysis that is analogous to the quantum case. 

In \cref{sec:dec-quant-gpt}, we will define a decoherence quantity $\Dec(A|E)_\omega$ for GPTs as an analogue of the quantum min-entropy $\Hmin(A|E)_\rho$. This will be our quantity of interest for the decoherence analysis for GPTs. We will first motivate an expression for $\Dec(A|E)_\omega$ in \cref{sec:dec-mot}, inspired by expression \eqref{hmin} for the min-entropy in the quantum case. This expression will require us to say what a maximally entangled state in a GPT is. We will define it in \cref{sec:max-ent-gpt}.

\cref{sec:gpt-bounds} is devoted to finding a bound on our decoherence quantity in terms of the CHSH winning probability for Alice and Bob. This is a measurable quantity in the case where the channel is an \emph{iid} (for \emph{independent and identically distributed}) channel, meaning that it behaves identically in repeated uses of the channel without building up correlations amongst systems going through the channel in different uses of it. This is a practically relevant case, giving our bound a practical meaning. This bound allows us to infer non-trivial statements about decoherence from measured data when, apart from the iid assumption, we assume only very little about the behavior of nature. We approach our bound by first bounding our fidelity-based decoherence quantity by a trace distance-based quantity. We will then bound this trace distance-based quantity in terms of the CHSH winning probability for Alice and Bob by a quantity that can be expressed as a linear program.

Finally, in \cref{sec:gpt-results}, we show how our bound can be expressed as a linear program and present the numerical results. This is followed by a discussion of the physical interpretation of our numerical findings.

\subsection{The framework} \label{sec:ass}

\subsubsection{A basic framework for GPTs} \label{sec:basic-framework}

Frameworks for probabilistic theories in which quantum theory and classical theory can be formulated as special cases have already been considered some decades ago \cite{Mac63, Edw70, DL70}. After some period of oblivion, a seminal paper by Hardy \cite{Har01} caused a revival in the interest in such frameworks (see, for example, \cite{MM11, MMAPG12, CDP11, DB11, Udu12, PW13} and references therein). Today, they are generally refered to as frameworks for \emph{generalized probabilistic theories} \cite{Bar07}. 

We formalize our decoherence analysis for GPTs in the \emph{abstract state space} framework \cite{BW09, BBLW08, BGW09, BW11}. It is one rigorous formalization of what a generalized probabilistic theory is, amongst a few equivalent or closely related ones that can be found in the literature (see the references cited above). We prefer it for its concise and precise formulation. For the sake of brevity, we will not go far beyond the mere mathematical definitions related to abstract state spaces here. For a detailed introduction to abstract state spaces, see \cite{Pfi12}.

\begin{defi} \label{def:ass}
  An \dt{abstract state space} is a triple $(V, V^+, u)$, where $V$ is a finite-
  dimensional real vector space, $V^+$ is a cone\footnote{A subset $V^+ \subseteq V$ is a \emph{cone} in $V$ if
  \begin{itemize}
    \item[(C1)] $V^+ + V^+ \subseteq V^+$,
    \item[(C2)] $\alpha V^+ \subseteq V^+$ for all $\alpha \geq 0$,
    \item[(C3)] $V^+ \cap (-V^+) = \{0\}$,
  \end{itemize}
} 
in $V$ which is closed\footnote{We assume the standard topology on $V$, i.e. the only linear Hausdorff topology on $V$.} and generating\footnote{A cone $V^+ \subseteq V$ is \emph{generating} if $V^+ - V^+ = V$.} and $u \in V^*$ is a linear functional\footnote{For a finite-dimensional vector space $V$, we denote by $V^*$ the dual space of $V$, i.e. the vector space of linear functionals on $V$.} on $V$ such that $u(\omega) > 0$ for 
  all $\omega \in V^+ \setminus \{0\}$. The functional $u$ is called the \dt{unit effect}.
\end{defi}

\begin{defi} \label{def:ass-induced}
  For an abstract state space $(V, V^+, u)$, we define the following induced structure
  (see \cref{fig:ass-vis}):
  \newline
  The \dt{normalized states} are the elements of the set
  \begin{align}
    \Omega := \{ \omega \in V^+ \mid u(\omega) = 1 \} \,.
  \end{align}
  The \dt{subnormalized states} are the elements of the set
  \begin{align}
    \Omega^\leq := \{ \omega \in V^+ \mid u(\omega) \leq 1 \} \,.
  \end{align}
  The \dt{effects} are the elements of the set
  \begin{align}
    \mathcal{E} := \{e\in V^*\mid 0 \leq e(\omega) \leq 1 \ \forall \omega \in \Omega \} \,.
  \end{align}
  The \dt{measurements} are the elements of the set
  \begin{align}
    \calM := \left\{ M \subseteq \mathcal{E} \text{ finite} \ \middle\vert \ 
    \sum_{e \in M} e = u \right\} \,.
  \end{align}
\end{defi}

\begin{figure}[h!]
  \centering
    \includegraphics{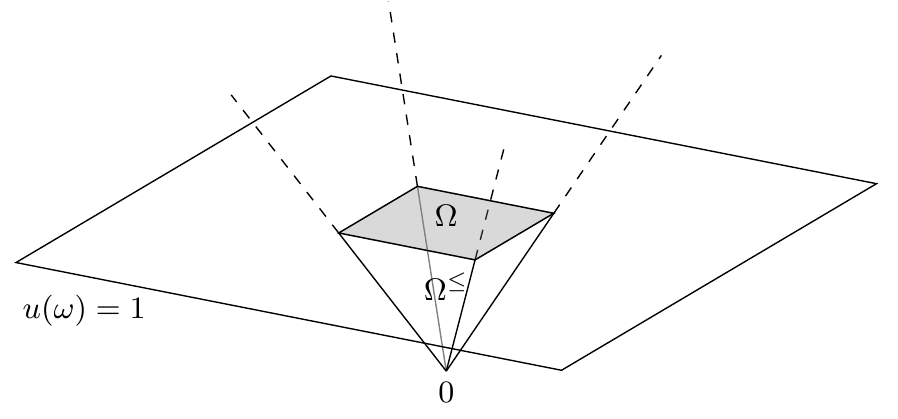}
%
%
%
%
%
  \caption{\textbf{Visualization of an abstract state space.} The set $V^+$ is a cone in a 
  vector space $V$, here shown as a cone with a square base. The normalized states 
  $\Omega$ are given by the intersection of $V^+$ with the plane on which the functional 
  $u$ takes the value 1. The set $\Omega^\leq$ consists of those elements of $V^+$ on 
  which the functional $u$ takes values between 0 and 1.
  \label{fig:ass-vis}}
\end{figure}

An effect respresents a measurement outcome. If a system in a state $\omega$ is measured with respect to a measurement $M = \{e_1, \ldots, e_n\}$, then $e_k(\omega)$ is the probability that the measurement yields the outcome associated with $e_k$. 

\begin{ex}[Quantum theory] \label{ex:qm}
  The probabilistic structure of measurements on a (finite-dimensional) quantum system can 
  be formulated as an abstract state space. For a quantum system with an associated 
  Hilbert space $\calH$, consider the abstract state space $(V, V^+, u) = (\Herm(\calH), 
  \Pos(\calH), \tr)$, where $V = \Herm(\calH)$ is the real vector space of Hermitian 
  operators on $\calH$, $V^+ = \Pos(\calH)$ is the cone of positive operators on $\calH$ 
  and $u = \tr$ is the trace on $\calH$. According to \cref{def:ass-induced}, this yields 
  the states $\Omega = \{ \rho \in \Pos(\calH) \mid \tr(\rho) = 1 \}$, which are precisely 
  the density operators on $\calH$. Analogously, $\Omega^\leq$ are the subnormalized 
  density operators. The effects are the functionals induced by POVM elements via the 
  trace, $\calE = \{ \tr(P \ \cdot \ ) \mid P \in \Pos(\calH), P \leq \id_{\calH} \}$. 
  Accordingly, the measurements are the sets of functionals that are induced by POVMs, 
  $\calM = \{ \{ \tr(P_k \ \cdot \ ) \mid P_k \in \{P_k\}_k \} \mid \{P_k\}_k \text{ is a 
  POVM} \}$. This precisely reproduces the structure of measurement statistics in quantum 
  theory. For further details, see \cite{Pfi12}. \hfill $\blacksquare$
\end{ex}

By our definition, $\calE$ is the set of \emph{all} linear functionals $e$ such that $0 \leq e(\omega) \leq 1$ for all $\omega \in \Omega$. The underlying assumption that every such linear functional represents a physical measurement outcome has been called the \emph{no-restriction hypothesis} \cite{Udu12}. A priori, there seems to be no immediate physical reason for this assumption, and some authors have argued about how to weaken this assumption \cite{JL13}. For our purposes here, it is not relevant whether the no-restriction hypothesis holds, and weakening the assumption complicates the definitions. Thus, we assume it for simplicity.

In \cref{sec:dec-quant-gpt}, we will define a decoherence quantity $\Dec(A|E)_\omega$ analogous to the quantum min-entropy $\Hmin(A|E)_\rho$. We will take our inspiration from expression \eqref{hmin} for the quantum min-entropy, which involves the fidelity as a measure of closeness of quantum states. Therefore, it is desirable to have a generalization of the fidelity to states in abstract state spaces. Such a generalization is easily found once it is noticed that the quantum fidelity of two states can be expressed as the \emph{Bhattacharyya coefficient} (or \emph{classical fidelity}) of the two probability distributions that the two states induce, minimized over all measurements. More precisely, the quantum fidelity satisfies \cite{NC00}
\begin{align} \label{eq:cl-to-q-f}
F(\rho, \sigma) = \min_{\{P_k\}_k} \sum_k \sqrt{\tr(P_k \rho)\tr(P_k \sigma)} \,,
\end{align}
where the minimization runs over all POVMs $\{P_k\}_k$ on the Hilbert space on which $\rho$ and $\sigma$ are defined. The sum in \eqref{eq:cl-to-q-f} is precisely the Bhattacharyya coefficient of the probability distributions that the POVM $\{P_k\}_k$ induces on the states $\rho$ and $\sigma$. This motivates us to define the fidelity for abstract state spaces as follows.

\begin{defi} \label{def:fidelity}
  Let $(V, V^+, u)$ be an abstract state space with normalized states $\Omega$ and 
  measurements $\calM$. For states $\omega, \tau \in \Omega$, we 
  define the \dt{fidelity} of $\omega$ and $\tau$ as
    \begin{align}
      F(\omega, \tau) := \inf_{M \in \calM} b(\omega, \tau | M) \,, \quad \text{where}
      \quad b(\omega, \tau | M) = \sum_{e \in M} \sqrt{e(\omega)} \sqrt{e(\tau)} \,.
    \end{align}
    The quantity $b(\omega, \tau | M)$ is the \dt{Bhattacharyya coefficient} (or 
    sometimes called the classical fidelity) of the probability distributions that the 
    measurement $M$ induces on the states $\omega$ and $\tau$. 
\end{defi}

The fidelity as defined in \cref{def:fidelity} precisely reduces to the quantum fidelity in the case where the abstract state space is a quantum state space. In addition to the fidelity, in \cref{sec:gpt-bounds} we will also consider a generalization of the quantum trace distance $D(\rho, \sigma) = \frac{1}{2} \tr \vert \rho - \sigma \vert$ in order to formulate a bound on $\Dec(A|E)_\omega$. Somewhat analogously to the fidelity, the quantum trace distance is equal to the \emph{total variation distance} (or \emph{classical trace distance}) between the two probability distributions that the two states induce, maximized over all measurements \cite{NC00}:
\begin{align}
  D(\rho, \sigma) = \max_{\{P_k\}_k} \frac{1}{2} \sum_k \vert \tr(P_k\rho) - \tr(P_k
  \sigma) \vert \,.
\end{align}
This motivates the following definition.

\begin{defi}
  Let $(V, V^+, u)$ be an abstract state space with normalized states $\Omega$ and 
  measurements $\calM$. For states $\omega, \tau \in \Omega$, the 
  \dt{trace distance} between $\omega$ and $\tau$ is given by
    \begin{align}
      D(\omega, \tau) := \sup_{M \in \calM} d(\omega, \tau | M) \,,\quad \text{where}
      \quad d(\omega, \tau | M) = \frac{1}{2} \sum_{e \in M} 
      \left\vert e(\omega) - e(\tau) \right\vert
    \end{align}
    The quantity $d(\omega, \tau | M)$ is the \dt{total variation distance} (or 
    sometimes called the classical trace distance) between the probability distributions 
    that the measurement $M$ induces on the states $\omega$ and $\tau$.
\end{defi}

Note that the fidelity and the trace distance take values between 0 and 1 for all states. For squares of the quantities $F$, $b$, $D$ and $d$, we will write the square sign right after the letter, e.g. we will write $F^2(\omega, \tau)$ instead of $(F(\omega, \tau))^2$.

\subsubsection{A tripartite framework for GPTs} \label{sec:trip}

In \cref{sec:dec-quant-gpt}, we will consider a tripartite situation for the decoherence analysis, analogous to \cref{sec:background}. This requires us to model a tripartite scenario mathematically since such a structure is not induced by an abstract state space $(V, V^+, u)$ alone. We need to specify it as additional structure. Our goal here is to do this with the weakest possible assumptions, resulting in a very general validity of the bounds we derive.

Instead of assuming individual state spaces for every party, we only consider their overall combined state space, modelled by an abstract state space $(V, V^+, u)$ and all its induced structure as in \cref{def:ass,def:ass-induced}. This has the advantage that we do not have to make assumptions about how individual state spaces combine to multipartite state spaces, keeping our assumptions weak.
For our purposes, the only structure that we need to add to an abstract state space $(V, V^+, u)$ to make it suitable for the description of a tripartite scenario are the local transformations that each individual party can perform. The local measurements of the three parties are then induced by these local transformations.

We consider three parties, which we call Alice ($A$), Bob ($B$) and Eve ($E$) as before. We begin our considerations by assuming that there are three sets $\calT_A$, $\calT_B$ and $\calT_E$, containing all the transformations that Alice, Bob and Eve can perform, respectively. By a transformation, we mean a linear map $T: V \rightarrow V$ which maps states to subnormalized states, i.e. $T(\Omega) \subseteq \Omega^\leq$ (or, equivalently, $T(V^+) \subseteq V^+$ and $(u \circ T)(\omega) \leq u(\omega)$ for all $\omega \in V^+$). We can consider the case where several transformations are applied because compositions of transformations are transformations again: If $T$, $T'$ are linear maps $V \rightarrow V$ which map $\Omega$ inside $\Omega^\leq$, then the same is true for the composition $T \circ T'$ (we denote the composition of maps by a $\circ$ symbol). 

We assume that the three parties act individually at spatially separated locations. Relativistic considerations lead to the consistency requirement that transformations performed by different parties must commute, e.g. if Alice performs a transformation $T_A \in \calT_A$ and Bob performs a transformation $T_B \in \calT_B$, then the total transformation must satisfy $T_A \circ T_B = T_B \circ T_A$. 

For our purposes, we do not need to specify the sets $\calT_A$, $\calT_B$ and $\calT_E$ any further; the only requirement is that transformations of distinct parties commute. The sets $\calT_A$, $\calT_B$ and $\calT_E$ \emph{define} the systems $A$, $B$ and $E$, i.e. we define the individual parties via the transformations that they can perform. This leads us to the following definition.

\begin{defi} \label{def:tripartite}
  A \dt{tripartite scenario} is a quadruplet 
  \begin{align}
    S_{ABE} = ((V, V^+, u), \calT_A, \calT_B, \calT_E) \,,
  \end{align} 
  where $(V, V^+, u)$ is an abstract state space, and where
  \begin{align}
    \calT_A, \calT_B, \calT_E \subseteq \left\{ T: V \rightarrow V \text{ linear} 
    \ \middle\vert \ T(\Omega) \subseteq \Omega^\leq \right\}
  \end{align}
  are such that for all $P, P' \in \{A, B, E\}$ with $P \neq P'$, it holds that 
  $T_P \circ T_{P'} = T_{P'} \circ T_P$ for all $T_P \in \calT_P$ and for all $T_{P'} \in 
  \calT_{P'}$. We call the elements of $\calT_A$, $\calT_B$ and $\calT_E$ 
  the \dt{local transformations} of $A$, $B$ and $E$, respectively.
\end{defi}

It is absolutely natural to define tripartite scenarios via commuting transformations rather than via a tensor product structure. In quantum theory, the two approaches are equivalent in finite dimensions (we will talk about this below). In more general infinite-dimensional cases, where it is not known whether the two approaches are equivalent, things are usually formalized in a commutative way rather than via tensor products (see \cite{SW87}, for example). Knowing about the equivalence in finite dimensions, we will formulate some quantum examples in the tensor product structure below.

\begin{ex}[A tripartite quantum scenario] \label{ex:qm-trip}
  One can formulate a tripartite situation in quantum theory as a \emph{tripartite 
  scenario.} Based on \cref{ex:qm}, consider the tripartite scenario
  \begin{align}
    &((\Herm(\calH), \Pos(\calH), \tr),\calT_A, \calT_B, \calT_E) \,, 
    \quad \text{where } \\
    &\calH = \calH_A \otimes \calH_B \otimes \calH_E \,, \\
    &\calT_A = \{ \R_A \otimes \id_B \otimes \id_E 
    \mid \R_A \text{ is a trace non-increasing CPM on } \Herm(\calH_A) \} \,, \\
    &\calT_B = \{ \id_A \otimes \R_B \otimes \id_E 
    \mid \R_B \text{ is a trace non-increasing CPM on } \Herm(\calH_B) \} \,, \\
    &\calT_E = \{ \id_A \otimes \id_B \otimes \R_E 
    \mid \R_E \text{ is a trace non-increasing CPM on } \Herm(\calH_E) \} \,,
  \end{align}
  where CPM stands for \emph{completely positive map}. Having tensor product form, the local 
  transformations of different parties commute. \hfill $\blacksquare$
\end{ex}

For our purposes, \cref{def:tripartite} is all the structure one needs to specify. The local measurements are induced by the local measurements. We formalize this via the noation of a local \emph{instrument} \cite{DL70}. To get an intuition for what an instrument is, consider a Stern-Gerlach experiment. A spin-1/2 particle enters a magnet and undergoes one of two transformations: It either gets deflected upwards or downwards. Which of the two transformations it undergoes is determined probabilistically. Then it hits a screen, which reveals which of the two transformations the particle has undergone. This way, a measurement has been performed in two stages: a probabilistic application of a transformation and a detection. The sum of the probabilities of detecting the particle at the top or the bottom of the screen is one. If the state of the particle is described by a state $\omega \in \Omega$ of an abstract state space, we may model this by a set of two transformations $\{ T_{\text{up}}, T_{\text{down}} \}$. Such a set is an instrument. The norm $u(T_{\text{up}}(\omega))$ is the probability that the particle is deflected upwards, and likewise for $u(T_{\text{down}}(\omega))$. Thus, $u$ can be seen to play the role of the screen, detecting the particle. The requirement that the particle must undergo one of the two deflections reads $u \circ T_{\text{up}} + u \circ T_{\text{down}} = u$. The transformation $T_{\text{up}}$ is the analogue of the transformation $\rho \mapsto P_{\text{up}} \rho P_{\text{up}}$ in quantum theory, where $P_{\text{up}}$ is the projector onto the spin-up state. Since $u$ is given by the trace in quantum theory, the probability for the upward-deflection to occur is given by $\tr(P_{\text{up}}\rho P_{\text{up}}) = \tr(P_{\text{up}}\rho)$, which is precisely the Born rule.

A \emph{local} instrument is such a set of transformations where all the transformations are the \emph{local} transformations of one party. This motivates the following definition.

\begin{defi}[Local instruments] \label{def:trip-ind}
  For a tripartite scenario $S_{ABE} = ((V, V^+, u), \calT_A, \calT_B, \calT_E)$ with 
  $\Omega$ as defined in \cref{def:ass-induced}, we define the \dt{local instruments} as 
  the elements of
  \begin{align}
    \calI_P := \left\{ I_P \subseteq \calT_P \text{ finite} \ \middle\vert \  
    \sum_{T_P \in I_P} u \circ T_P = u \right\} \quad \text{for } P \in \{A, B, E\} \,.
\end{align}
\end{defi}

\begin{ex}[Local instruments in a tripartite quantum scenario] \label{ex:qm-trip-2}
  Considering the tripartite scenario of \cref{ex:qm-trip}, we get that the local 
  instruments are given by
  \begin{align}
    \calI_P = \left\{ I_P \subseteq \calT_P \ \middle \vert \ 
    \sum_{T_P \in I_P} T_P \text{ is a TPCPM} \right\} \quad \text{for } P\in \{A,B,E\}\,.
    \tag*{$\blacksquare$}
  \end{align}
\end{ex}

\begin{rmk}[Local measurements] \label{rmk:loc-meas}
  The definition of local instruments gives us a notion of local measurements as well. 
  Consider a tripartite scenario $S_{ABE} = ((V, V^+, u), \calT_A, \calT_B, \calT_E)$ with 
  its set of measurements $\calM$. It is easily verified that for a local transformation 
  $T_A \in \calT_A$, the map $u \circ T_A$ is an effect (as defined in 
  \cref{def:ass-induced}). Likewise, for a local instrument $I_A \in \calI_A$, the set 
  $\{ u \circ T_A \mid T_A \in I_A \}$ is a measurement. We interpret it as a measurement 
  performed by Alice. We can also consider composite measurements where several parties 
  locally perform measurements. For local instruments $I_A \in \calI_A$ and $I_B \in 
  \calI_B$, for example, the set $\{u \circ T_A \circ T_B \mid T_A \in I_A, T_B \in I_B\}$ 
  is a measurement. We interpret it as a composite measurement where Alice and Bob each 
  perform local measurements, described by $I_A$ and $I_B$. The analogous holds for other 
  parties and combinations thereof.
\end{rmk}

\begin{ex}[Local measurements in a tripartite quantum scenario] \label{ex:qm-trip-3}
  Based on \cref{ex:qm-trip,ex:qm-trip-2}, we can say how local measurements look like in 
  a tripartite quantum scenario. A local effect of Alice is of the form
  \begin{align}
    \rho_{ABE} \mapsto \tr(\R_A \otimes \id_B \otimes \id_E (\rho_{ABE}))
  \end{align}
  for a trace non-increasing CPM $\R_A$ on $\Herm(\calH_A)$. However, for every such 
  CPM, there is a POVM element $P_A$ on $\calH_A$ such that\footnote{This can be seen from
  the Kraus representation of $\R_A$: $\tr(\R_A(\rho_A)) =
  \tr(\sum_k F_k \rho_A F_k^\dagger) = \tr(\sum_k F_k^\dagger F_k \rho_A) = 
  \tr(P_A \rho_A)$ for $P_A = \sum_k F_k^\dagger F_k$. (We omitted the other tensor 
  factors for brevity.)}
  \begin{align}
    \tr((P_A \otimes \id_B \otimes \id_E) \rho_{ABE}) = \tr(\R_A \otimes \id_B 
    \otimes \id_E (\rho_{ABE})) \,.
  \end{align}
  This recovers the Born rule. Analogously, a composite measurement where Alice and Bob 
  each perform local measurements consists of local effects of the form
  \begin{align}
    \rho_{ABE} \mapsto \tr(\R_A \otimes \R_B \otimes \id_E (\rho_{ABE}))
    = \tr((P_A \otimes P_B \otimes \id_E)\rho_{ABE})
  \end{align}
  for POVM elements $P_A$, $P_B$ on $\calH_A$, $\calH_B$. Thus, in our tripartite quantum 
  example, local measurements reduce to POVM measurements of product form. \hfill 
  $\blacksquare$
\end{ex}

In \cref{ex:qm-trip,ex:qm-trip-2,ex:qm-trip-3}, instead of choosing a tensor factorization for $\calH$ and setting the local transformations to be acting non-trivially on one tensor factor, we could have chosen sets of transformations that merely commute, without a tensor product structure. The question of whether the resulting measurement statistics in that case would be different from the case with the tensor factor structure is known as \emph{Tsirelson's problem} \cite{SW08, DLTW08}. More precisely, the question is the following. Let $\calH$ be a Hilbert space, let $\rho$ be a density operator on $\calH$, let $\{P_k\}_k$, $\{Q_l\}_l$ be POVMs on $\calH$ such that $P_k Q_l = Q_l P_k$ for all $k$, $l$. Tsirelson's problem is: Does there necessarily exist Hilbert spaces $\calH_A$, $\calH_B$, a density operator $\sigma$ on $\calH_A \otimes \calH_B$ and POVMs $\{R_k\}_k$ on $\calH_A$ and $\{S_l\}_l$ on $\calH_B$ such that $\tr(P_k Q_l \rho) = \tr((R_k \otimes S_l)\sigma)$ for all $k$, $l$? In the case where $\calH$ is finite-dimensional, the answer is known to be affirmative. For infinite-dimensional Hilbert spaces, the answer is still unknown. 

Thus, for finite-dimensional quantum systems, we can restrict ourselves to the case with the tensor product structure without loss of generality. For abstract state spaces, however, an analogous restriction might cause a loss of generality. The advantage of our weak definition of a tripartite scenario is that we do not need to know the answer to an equivalent of Tsirelson's problem for generalized probabilistic theories. The downside is that it makes defining an equivalent of the min-entropy more difficult. We will deal with this issue in the next subsection.

\begin{notation}
  From now on, whenever we speak of a tripartite scenario $S_{ABE}$, we implicitly assume 
  that all its parts and induced structures are denoted as in 
  \cref{def:ass,def:ass-induced,def:tripartite,def:trip-ind} without restating it, i.e.
  instead of writing ``Let $S_{ABE} = ((V, V^+, u), \calT_A, \calT_B, 
  \calT_E)$ be a tripartite scenario, let $\Omega$ be its set of normalized states, 
  \ldots'', we will only write ``Let $S_{ABE}$ be a tripartite scenario''.
\end{notation}

\subsection{A decoherence quantity for GPTs} \label{sec:dec-quant-gpt}

\subsubsection{Motivation of an expression that quantifies decoherence} \label{sec:dec-mot}

We are now going to motivate an expression for the central quantitiy $\Dec(A|E)_\omega$ for our decoherence analysis for GPTs. We take our inspiration from expression \eqref{hmin} for the quantum min-entropy, which we repeat here for the reader's convenience:
\begin{align}
  H_{\text{min}}(A|E)_\rho = - \log d_A \max_{\R_{E \rightarrow A'}} F^2(\Phi_{AA'}, \id_A \otimes \R_{E \rightarrow A'}(\rho_{AE})) \,. \tag{\ref{hmin} revisited}
\end{align}
There are two issues that prevent us from directly translating expression \eqref{hmin} into our framework. The first issue is that in \cref{sec:trip}, to keep our framework as general as possible, we have defined a tripartite scenario with an overall state space $(V, V^+, u)$ with tripartite states $\Omega$. We do not have notions of individual state spaces at hand. Thus, we do not have an analogue of a reduced state $\rho_{AE}$ or of a transformation $\R_{E \rightarrow A'}$ from one state space to another. 

The second issue is that we do not know what the analogue of a maximally entangled state $\Phi_{AA'}$ in our framework is. We resolve the first issue here in \cref{sec:dec-mot}, arriving at an expression for $\Dec(A|E)_\omega$. In \cref{sec:max-ent-gpt}, we will then define what a maximally entangled state is in our framework. 

Expression \eqref{hmin}, which involves the state $\rho_{AE}$ and TPCPMs $\R_{E \rightarrow A'}$, can be transformed to an expression in which both the state and the TPCPMs are purified (see \cref{fig:purif}). This expression will be our motivation for the expression for $\Dec(A|E)_\omega$. The maximization over TPCPMs from $E$ to $A'$ is replaced by a maximization over unitaries from $EE''$ to $A'A''$, where $E''$ and $A''$ are ancilla systems extending system $E$ and $A'$, respectively. This is precisely the purification (or Stinespring dilation) of a channel as in \cref{sec:background}. Since systems $EE''$ and $A'A''$ have the same dimension, we can identify their Hilbert spaces and regard the resulting Hilbert space as the Hilbert space of a system $E_{\text{tot}}$. This system involves all subsystems that the third party needs to control in order to bring itself as close as possible to maximal entanglement with Alice. Since $U_{E_{\text{tot}}}$ is a transformation on system $E_{\text{tot}}$ alone, we can translate it into our generalized framework.

\begin{figure}[h!]
  \centering
  \includegraphics{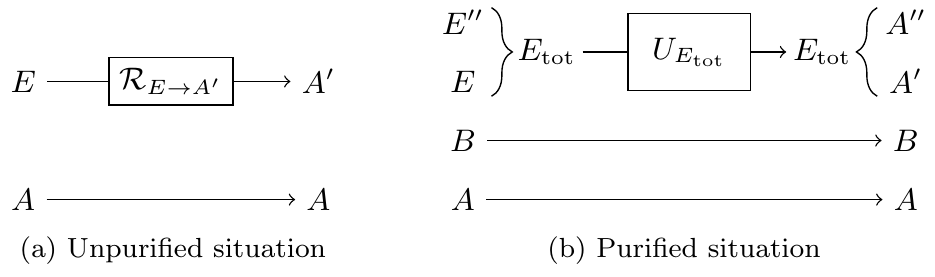}
  \caption{\textbf{Purification of \cref{hmin}.} Part (a) shows the system and 
  maps involved in expression \eqref{hmin} for the quantum min-entropy. In expression
  \eqref{eq:purif-hmin}, we purify this situation, as shown in (b), to arrive at a 
  situation with three parties $A$, $B$ and ${E_{\text{tot}}}$, and with a map $U_{E_{\text{tot}}}$ which acts on one 
  system ${E_{\text{tot}}}$ alone rather than mapping from one system to another. 
  \label{fig:purif}}
\end{figure}
  
The state $\rho_{AE}$ is replaced by a purification $\rho_{ABE}$. We choose the purifying system $B$ to be the channel's output system, which gives us the overall picture of our decoherence analysis as shown in \cref{fig:overall-gpt}.
\begin{figure}[htb]
  \centering
  \includegraphics{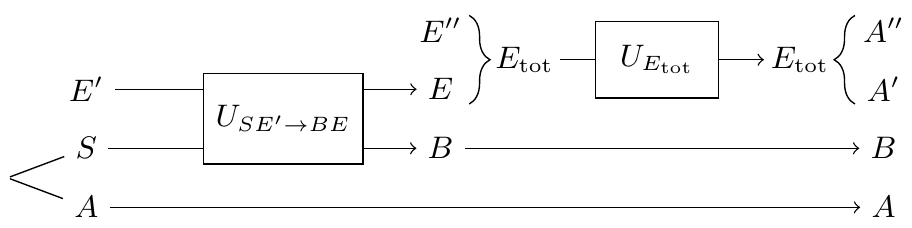}
  \caption{\textbf{Overall picture of our decoherence analysis for GPTs.} Since 
  the purifying system $B$ in expression \eqref{eq:purif-hmin} is not specified, we can 
  choose it such that it fits our situation for the decoherence analysis.
  \label{fig:overall-gpt}}
\end{figure}

The following gives a precise formulation of the purification of expression \eqref{hmin}. It can be proved using purification and Stinespring dilation. 
\textsl{
  Let $\calH_A$, $\calH_E$ be finite-dimensional Hilbert spaces of dimensions $d_A$, 
  $d_E$, respectively, let $\rho_{AE} \in \calS(\calH_A \otimes \calH_E)$. Then, 
  for any purification $\rho_{ABE} \in \calS(\calH_A \otimes \calH_B \otimes \calH_E)$ 
  of $\rho_{AE}$, any Hilbert spaces $\calH_{A'}$, $\calH_{A''}$ and $\calH_{E''}$ of 
  dimension $d_{A'} = d_A$, $d_{A''} = d_A d_E$ and $d_{E''} = d_A^2$, respectively,
  any maximally entangled state $\Phi_{AA'} \in \Gamma_{AA'}$ and any pure state 
  $|0\rangle\langle0|_{E''} \in \calS(\calH_{E''})$, it holds that
  \begin{align} \label{eq:purif-hmin}
    \Hmin(A|E)_\rho = -\log d_A \max_{U_{E_{\text{tot}}}} \max_{\sigma_{BA''}} 
    F^2(\Phi_{AA'} \otimes \sigma_{BA''}, (\id_{AB} \otimes U_{E_{\text{tot}}}) \rho_{AB{E_{\text{tot}}}} 
    (\id_{AB} \otimes U_{E_{\text{tot}}}^\dagger)) \,,
  \end{align}
  where $\rho_{AB{E_{\text{tot}}}} = \rho_{ABE} \otimes |0\rangle\langle0|_{E''}$ and where the 
  first maximization ranges over unitaries
  \begin{align}
    U_{E_{\text{tot}}}: \calH_E \otimes \calH_{E''} \rightarrow \calH_{A'} \otimes \calH_{A''} \,,
    \quad \text{where } \calH_{A'} \otimes \calH_{A''} \simeq \calH_E \otimes 
    \calH_{E''} =: \calH_{E_{\text{tot}}}
  \end{align}
  and the second maximization ranges over pure states $\sigma_{BA''} \in \calS(\calH_B 
  \otimes \calH_{A''})$.
}

Now we translate expression \eqref{eq:purif-hmin} into our generalized framework. We interpret the system $E_{\text{tot}}$ as the system controlled by Eve, and therefore rename $E_{\text{tot}} \rightarrow E$.
\begin{itemize}
  \item Since we want to arrive at an expression that does not make unnecessary 
    assumptions about the mathematical description of the physical situation, we avoid the 
    factor $d_A$ present in \eqref{eq:purif-hmin}. We look for a GPT analogue of 
    $\max_{U_{E_{\text{tot}}}} \max_{\sigma_{BA''}} F^2(\Phi_{AA'} \otimes \sigma_{BA''}, (\id_{AB} 
    \otimes U_{E_{\text{tot}}}) \rho_{AB{E_{\text{tot}}}} (\id_{AB} \otimes U_{E_{\text{tot}}}^\dagger))$, omitting $-\log d_A$. As a 
    consequence, we will have $\Hmin(A|E)_\rho = -\log d_A \Dec(A|E)_\rho$ in quantum 
    theory (see \cref{ex:qm-dec}). 
  \item We replace the maximization over all unitaries $U_{E_{\text{tot}}}$ acting on system ${E_{\text{tot}}}$ 
    by a supremum\footnote{We do not assume enough about $\calT_E$ to guarantee that the 
    maximum is achieved, so we replace it by a supremum.} 
    over all local transformations 
    $T_E \in \calT_E$.
    \footnote{One might raise the objection that in the quantum case, 
    \cref{ex:qm-trip}, the unitaries only correspond to those elements of $\calT_E$ which 
    bijectively map the space of density operators onto itself. It would be possible to 
    include this restriction, but we decide not to do so, for two reason: We want to keep 
    things simple, and we want to avoid the assumption that actions that the third party 
    can perform can be purified as in the quantum case.}
  \item We generalize the quantum fidelity to the fidelity in abstract state spaces as 
    defined in \cref{def:fidelity}.
  \item We replace the state $\rho_{ABE_{\text{tot}}} = \rho_{ABE} \otimes |0\rangle\langle0|_{E''}$ by 
    a state $\omega \in \Omega$. 
  \item If we look at the state $\Phi_{AA'} \otimes \sigma_{BA''}$, we see that it is a 
    state of maximal entanglement between Alice $(A)$ and Eve $(A'A'')$ in the 
    sense that by performing measurements with elements of the form $P_A \otimes P_{A'} 
    \otimes \id_B \otimes \id_{A''}$, they can get any statistics that two parties $A$ and 
    $A'$ would be able to get by performing local measurements on the maximally entangled 
    state $\Phi_{AA'}$. We translate this into our framework by assuming that there is a 
    set $\Psi_{AE}$ of ``states with maximal correlation between Alice and Eve''. 
    Instead of minimizing over states $\Phi_{AA'} \otimes \sigma_{BA''}$, we then minimize 
    over the set $\Psi_{AE}$. 
\end{itemize}
We postpone the discussion of how such a set $\Psi_{AE}$ looks like. We will give a definition of such a set in \cref{sec:max-ent-gpt} below. For now, we write down an expression for our decoherence quantity $\Dec(A|E)_\omega$ that depends on the choice of such a set $\Psi_{AE} \subseteq \Omega$. According to what we have just discussed, the expression is
\begin{align} \label{eq:first-dec}
  \sup_{T_E \in \calT_E} \sup_{\psi \in \Psi_{AE}} F^2(\psi, T_E(\omega)) \,.
\end{align}
We interpret the decoherence to be high when this quantity is high and vice versa, which is the opposite of $\Hmin(A|E)_\rho$ (see the end of \cref{sec:background}). Before we can define $\Dec(A|E)_\omega$, however, we need to specify what a maximally entangled state in a GPT is.

\subsubsection{Definition of maximal correlation in GPTs} \label{sec:max-ent-gpt}

The expression \eqref{eq:first-dec} for our decoherence quantity $\Dec(A|E)_\omega$ contains a maximization over a set $\Psi_{AE} \subseteq \Omega$ which we interpret to be the set of states with maximal correlation between Alice and Eve. We now define this set.

\begin{defi} \label{max-corr-def}
  For a tripartite scenario $S_{ABE}$, we define the set $\Psi_{AE}$ of \dt{states 
  with maximal correlation} between Alice and Eve by
  \begin{align}
    \Psi_{AE} := \left\{ \psi \in \Omega \ \middle\vert \ 
    \parbox{0.7\textwidth}{
    \textnormal{For every binary local instrument $I_A = \{T_A^0, T_A^1\} \in \calI_A$, 
    there is a binary local instrument $I_E = \{T_E^0, T_E^1\} \in \calI_E$ such that} 
    $(u \circ T_A^0 \circ T_E^0)(\psi) + (u \circ T_A^1 \circ T_E^1)(\psi) = 1$.
    } \right\} \,.
  \end{align} 
\end{defi}
\cref{max-corr-def} can be read as follows. The superscripts $0$ and $1$ of the elements of the instruments $I_A$ and $I_E$ stand for measurement outcomes, so $(u \circ T_A^0 \circ T_E^0)(\psi)$ or $(u \circ T_A^1 \circ T_E^1)(\psi)$ is the probability that Alice and Eve both get outcome $0$ or both get outcome $1$, respectively, when they measure with respect to $I_A$, $I_E$, respectively. Thus, the sum of these probabilities is the probability that Alice's and Eve's measurement outcomes are perfectly correlated. This means that for a state $\psi \in \Psi_{AE}$, it holds that for every binary measurement of Alice, there is a binary measurement for Eve such that their measurement outcomes are perfectly correlated.

A closer look at some subtleties is advisable here, both to avoid confusion and to see the advantages of the weak assumptions that define our framework. With reference to \cref{ex:qm-trip-3}, one may point out that that the set 
\begin{align}
  \left\{ \sigma \in \calS(\calH_A \otimes \calH_B \otimes \calH_E) \ \middle\vert \ 
  \parbox{0.6\textwidth}{
    \textnormal{For every binary POVM $\{P_A^0, P_A^1\}$ on $\calH_A$, 
    there is a binary POVM $\{P_E^0, P_E^1\}$ on $\calH_E$ such that} 
    \begin{align}
      \tr((P_A^0 \otimes \id_B \otimes P_E^0) \sigma) + 
      \tr((P_A^1 \otimes \id_B \otimes P_E^1) \sigma) = 1 \,. \nonumber
    \end{align}
  } \right\}
\end{align}
is empty. This may seem to make our definition of $\Psi_{AE}$ incompatible with quantum theory. Note, however, that the set
\begin{align}
  \left\{ \sigma \in \calS(\calH_A \otimes \calH_B \otimes \calH_E) \ \middle\vert \ 
  \parbox{0.6\textwidth}{
    \textnormal{For every binary projective measurement $\{P_A^0, P_A^1\}$ on 
    $\calH_A$, there is a binary projective measurement $\{P_E^0, P_E^1\}$ on 
    $\calH_E$ such that} 
    \begin{align}
      \tr((P_A^0 \otimes \id_B \otimes P_E^0) \sigma) + 
      \tr((P_A^1 \otimes \id_B \otimes P_E^1) \sigma) = 1 \,. \nonumber
    \end{align}
  } \right\}
\end{align}
is not empty as long as $\dim\calH_E \geq \dim\calH_A$. If, as in \cref{sec:dec-mot}, $\calH_E = \calH_{A'} \otimes \calH_{A''}$ with $\calH_{A'} \simeq \calH_A$, then this set contains all the states of the form $\Phi_{AA'} \otimes \sigma_{BA''}$ with $\Phi_{AA'} \in \Gamma_{AA'}$ as in \eqref{eq:purif-hmin}. The advantage of our weak definition of the local transformations is that it does not force to see $\calT_A$ as the analogue of the set of \emph{all} CPMs of the form $\R_A \otimes \id_B \otimes \id_E$, but that it can be considered to be the analogue of all such CPMs which induce a functional of the form $\sigma \mapsto \tr(P\sigma)$, where $P$ is a projector. \cref{ex:qm-trip-3} can be modified accordingly (see \cref{ex:qm-dec} below). This makes our definition of $\Psi_{AE}$ compatible with quantum theory. 

With \cref{max-corr-def} at hand, we are finally ready to define the decoherence quantity. 

\begin{defi} \label{def:dec-quant}
  Let $S_{ABE}$ be a tripartite scenario, let $\omega \in \Omega$. We define the 
  the \dt{decoherence quantity} of $\omega$ by
  \begin{align}
    \Dec(A|E)_\omega := \sup_{T_E \in \calT_E} \sup_{\psi \in \Psi_{AE}} 
    F^2(\psi, T_E(\omega))
  \end{align}
\end{defi}

\begin{ex} \label{ex:qm-dec}
  We consider a special case of a tripartite scenario in quantum theory. Consider
  \begin{align}
    &((\Herm(\calH), \Pos(\calH), \tr),\calT_A, \calT_B, \calT_E) \,, 
      \quad \text{where } \\
    &\calH = \calH_A \otimes \calH_B \otimes \calH_E \\
    &\calT_A = \left\{ \R_A \otimes \id_B \otimes \id_E \ \middle\vert \  
    \parbox{0.6\textwidth}{
      $\R_A$ is a trace non-increasing CPM on $\Herm(\calH_A)$ such that there is a 
      projector $P_A$ on $\calH_A$ with $\tr(P_A \rho_A) = \tr(\R_A(\rho_A))$ for all 
      $\rho_A \in \calS(\calH_A)$
    } \right\} \,,
  \end{align}
  and analogously for $\calT_B$ and $\calT_E$. In addition, we assume for simplicity 
  that $\calH_A \simeq \calH_E$. In this case,
  \begin{align}
    \Psi_{AE} &= \left\{ \sigma \in \calS(\calH_A \otimes \calH_B \otimes \calH_E) \ 
    \middle\vert \ \parbox{0.55\textwidth}{
      \textnormal{For every binary projective measurement $\{P_A^0, P_A^1\}$ on 
      $\calH_A$, there is a binary projective measurement $\{P_E^0, P_E^1\}$ on 
      $\calH_E$ such that} 
      \begin{align}
        \tr((P_A^0 \otimes \id_B \otimes P_E^0) \sigma) + 
        \tr((P_A^1 \otimes \id_B \otimes P_E^1) \sigma) = 1 \,. \nonumber
      \end{align}
    } \right\} \\
    &= \{ \Phi_{AE} \otimes \sigma_B \mid \Phi_{AE} \in \Gamma_{AE} \,, \ 
    \sigma_B \in \calS(\calH_B) \} \,,
  \end{align}
  where $\Gamma_{AE}$ is the set of maximally entangled states on $\calS(\calH_A \otimes 
  \calH_E)$ analogous to \eqref{gamma-def}. For a pure state $\rho_{ABE} \in \calS(\calH_A 
  \otimes \calH_B \otimes \calH_E)$, this gives us
  \begin{align}
    \Dec(A|E)_\rho &= \max_{\R_E} \max_{\Phi_{AE}} \max_{\sigma_B} 
    F^2(\Phi_{AE} \otimes \rho_B, \id_A \otimes \id_B \otimes \R_E(\rho_{ABE})) \\
    &= \max_{\R_E} F^2(\Phi_{AE}, \id_A \otimes \R_E(\rho_{AE})) \\
    &= \frac{1}{d_A} 2^{-\Hmin(A|E)_\rho} \,.
  \end{align}
  Hence, $\Hmin(A|E)_\rho = -\log d_A \Dec(A|E)_\rho$. \hfill $\blacksquare$
\end{ex}

\subsection{Bounds on the decoherence quantity for GPTs} \label{sec:gpt-bounds}

The goal of this subsection is to derive an upper bound on $\Dec(A|E)_\omega$ in terms of the CHSH winning probability of Alice and Bob. This is a practically relevant bound: On the premise that the channel behaves identically in multiple uses and does not build up correlations between different uses (such a channel is said to be \emph{iid}, for \emph{independent and identically distributed}), this winning probability can be estimated through repeated measurements on Alice's and Bob's side. What we show is that this estimate in turn gives a bound on $\Dec(A|E)_\omega$. In this section, we formulate this bound as a minimization problem which we solve and interpret in \cref{sec:gpt-results}.

In the following, we derive a lower bound on $-\log \Dec(A|E)_\omega$. We make the convention that $-\log 0 = \infty$, where $\infty$ is a symbol for which we accept the inequality $\infty \geq r$ for every real number $r$. This lower bound on $-\log \Dec(A|E)_\omega$ then gives us an upper bound on $\Dec(A|E)_\omega$. In a first step, we bound the fidelity-based quantity $-\log \Dec(A|E)_\omega$ by a trace distance-based quantity. This has the advantage that the resulting optimization problems which give us the bounds can be solved using linear programming.

\begin{prop} \label{geq-dsquared}
  Let $S_{ABE}$ be a tripartite scenario, let $\omega \in \Omega$. Then
  \begin{align} \label{eq:dgeq}
    -\log \Dec(A|E)_\omega \geq \inf_{T_E \in \calT_E} 
    \inf_{\psi \in \Psi_{AE}} D^2(\psi, T_E(\omega)) \,.
  \end{align}
\end{prop}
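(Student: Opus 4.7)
The plan is to pass from the fidelity-based quantity to a trace-distance-based one by applying a Fuchs--van de Graaf-type inequality pointwise (for each pair of probability distributions coming from a single measurement) and then passing to the appropriate supremum/infimum.

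First, observe that since $\log$ is monotone, we may push it past the suprema:
\begin{align}
-\log \Dec(A|E)_\omega
= -\log \sup_{T_E \in \calT_E} \sup_{\psi \in \Psi_{AE}} F^2(\psi,T_E(\omega))
= \inf_{T_E \in \calT_E} \inf_{\psi \in \Psi_{AE}} \bigl(-\log F^2(\psi,T_E(\omega))\bigr).
\end{align}
So it suffices to prove the pointwise inequality $-\log F^2(\omega_1,\omega_2) \geq D^2(\omega_1,\omega_2)$ for any two states $\omega_1,\omega_2 \in \Omega$, and then apply it to $\omega_1 = \psi$, $\omega_2 = T_E(\omega)$.

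Next I would establish the pointwise inequality by separating it into two steps. The first (and main) step is a \emph{classical} Fuchs--van de Graaf inequality: for any measurement $M \in \calM$ and any two states $\omega_1,\omega_2$,
\begin{align}
b^2(\omega_1,\omega_2|M) + d^2(\omega_1,\omega_2|M) \leq 1.
\end{align}
This is a statement purely about the two probability distributions $p_e = e(\omega_1)$ and $q_e = e(\omega_2)$, $e \in M$. Writing out $b = \sum_e \sqrt{p_e q_e}$ and $d = \tfrac{1}{2}\sum_e |p_e - q_e|$, the inequality reduces, after squaring and using $\sum_e p_e = \sum_e q_e = 1$, to an application of AM--GM to the terms $p_e(1-p_e)$ and $q_e(1-q_e)$; this is the classical bound, routine but worth checking carefully. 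From it we get $b(M) \leq \sqrt{1-d^2(M)}$ for every $M$.

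Now fix an arbitrary measurement $M$. Since $F(\omega_1,\omega_2) = \inf_{M'} b(\omega_1,\omega_2|M')$, we have $F(\omega_1,\omega_2) \leq b(\omega_1,\omega_2|M) \leq \sqrt{1 - d^2(\omega_1,\omega_2|M)}$, which is equivalent to $d^2(\omega_1,\omega_2|M) \leq 1 - F^2(\omega_1,\omega_2)$. Taking the supremum over $M$ yields $D^2(\omega_1,\omega_2) \leq 1 - F^2(\omega_1,\omega_2)$, i.e.\ $F^2 \leq 1 - D^2$. The second, elementary step is a calculus check that $1 - x \leq 2^{-x}$ for all $x \in [0,1]$ (differentiate $g(x) = 2^{-x} - 1 + x$ to see $g(0)=0$ and $g' \geq 0$). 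Applied at $x = D^2(\omega_1,\omega_2)$, this gives
\begin{align}
F^2(\omega_1,\omega_2) \leq 1 - D^2(\omega_1,\omega_2) \leq 2^{-D^2(\omega_1,\omega_2)},
\end{align}
hence $-\log F^2(\omega_1,\omega_2) \geq D^2(\omega_1,\omega_2)$, as required. Plugging this into the first display completes the proof.

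The only step that is not essentially bookkeeping is the classical Fuchs--van de Graaf inequality $b^2 + d^2 \leq 1$; I expect that to be the main technical obstacle, though it is standard and reduces, via AM--GM, to $2 p_e(1-p_e)q_e(1-q_e) \leq p_e^2(1-p_e)^2 + q_e^2(1-q_e)^2$ after a short algebraic manipulation.
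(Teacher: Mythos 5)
Your proof is correct, and its skeleton matches the paper's: push the logarithm through the suprema, relate fidelity to trace distance via the classical statistics of each individual measurement, and finish with an elementary calculus inequality. The difference lies in which pair of intermediate inequalities you use. You pair the classical Fuchs--van de Graaf bound $b^2(\omega_1,\omega_2|M)+d^2(\omega_1,\omega_2|M)\leq 1$ (giving $F^2\leq 1-D^2$) with the estimate $1-x\leq 2^{-x}$ on $[0,1]$. The paper instead pairs the weaker classical relation $2\bigl(1-b\bigr)\geq d^2$ (attributed to Kraft) with the logarithm bound $-\log x^2\geq 2(1-x)$, i.e.\ it passes through $-\log\Dec(A|E)_\omega\geq \inf\inf 2\bigl(1-F(\psi,T_E(\omega))\bigr)=\inf\inf\sup_M 2\bigl(1-b(\psi,T_E(\omega)|M)\bigr)\geq\inf\inf D^2$. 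Both chains land on exactly the same pointwise inequality $-\log F^2\geq D^2$, so neither buys a sharper final bound here; your route has the mild advantage of using the more standard FvdG-type inequality, while the paper's route avoids FvdG deliberately (it remarks that inserting the logarithm lets one use a classical relation directly). Two small points to tidy up. First, your sketch of $b^2+d^2\leq 1$ via AM--GM applied to $p_e(1-p_e)$ and $q_e(1-q_e)$ only works out cleanly for two-outcome distributions; since $\calM$ contains measurements with arbitrarily many outcomes, you should prove the general case, e.g.\ by Cauchy--Schwarz applied to $\sum_e|\sqrt{p_e}-\sqrt{q_e}|\,|\sqrt{p_e}+\sqrt{q_e}|$, which gives $d\leq\sqrt{1-b^2}$ directly. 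Second, the edge case $\Dec(A|E)_\omega=0$ (equivalently $F=0$) needs the convention $-\log 0=\infty$, which the paper states explicitly and you leave implicit; with that convention your pointwise inequality still holds, so nothing breaks.
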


\noindent The following lemma is useful for the proof of \cref{geq-dsquared} below.

\begin{lemma} \label{log-lemma}
  For all $x \in (0,1]$, it holds that $-\log(x^2) \geq 2(1-x)$.
\end{lemma}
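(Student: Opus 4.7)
The inequality is equivalent, after dividing by 2, to the cleaner statement $-\log_2 x \geq 1-x$ for all $x \in (0,1]$, so I would aim directly at that. My plan is to reduce this to the very standard logarithm inequality $\ln x \leq x-1$ for $x > 0$ (which is just the tangent-line bound coming from concavity of $\ln$ at $x=1$), and to exploit the fact that we work in base $2$.

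Concretely, I would proceed in two short steps. First, from $\ln x \leq x-1$ I get $-\ln x \geq 1-x$, and dividing by $\ln 2 > 0$ yields
\begin{equation}
-\log_2 x \;=\; -\frac{\ln x}{\ln 2} \;\geq\; \frac{1-x}{\ln 2}.
\end{equation}
Second, since the hypothesis $x \in (0,1]$ gives $1-x \geq 0$ and since $\ln 2 < 1$, we have $(1-x)/\ln 2 \geq 1-x$. Chaining the two inequalities gives $-\log_2 x \geq 1-x$, and multiplying by $2$ recovers the claim $-\log(x^2) \geq 2(1-x)$.

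There is no real obstacle here; the only subtle point is to be careful that the second step uses $1-x \geq 0$ (it would fail for $x > 1$, which is consistent with the hypothesis of the lemma). As a sanity check one can also verify the endpoint and monotonicity picture: setting $h(x) := -\log(x^2) - 2(1-x)$ one has $h(1)=0$ and $h'(x) = -2/(x \ln 2) + 2 < 0$ on $(0,1]$ because $x \ln 2 \leq \ln 2 < 1$, so $h$ is strictly decreasing on $(0,1]$ and hence $h(x) \geq h(1)=0$ throughout. This gives an alternative one-line calculus proof if preferred, but the two-step approach above keeps the argument free of calculus and makes the role of the base $2$ transparent.
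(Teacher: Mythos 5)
Your proof is correct, but your primary argument is organized differently from the paper's. The paper proves the equivalent statement $(x-1)-\log x\geq 0$ from scratch via the fundamental theorem of calculus: it writes $(x-1)-\log(x) = -\int_x^1 \bigl(1-\tfrac{1}{y\ln 2}\bigr)\,dy$ and observes that the integrand is negative on $(0,1]$ because $y\ln 2\leq \ln 2<1$. You instead cite the standard tangent-line bound $\ln x\leq x-1$ as known, convert to base $2$, and then separately use $\ln 2<1$ together with $1-x\geq 0$ to absorb the factor $1/\ln 2$. The two routes rest on the same two facts (the tangent-line bound at $x=1$ and $\ln 2<1$), but your decomposition makes the role of the base explicit and keeps the main line of reasoning calculus-free, at the cost of importing the natural-log inequality rather than deriving it; your correct observation that the step $(1-x)/\ln 2\geq 1-x$ requires $x\leq 1$ is exactly where the paper's integrand-sign condition $y\in(0,1]$ enters. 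Your ``sanity check'' via $h(x)=-\log(x^2)-2(1-x)$, $h(1)=0$, $h'<0$ on $(0,1]$ is essentially the paper's proof restated as a monotonicity argument, so either version of your write-up would serve as a valid replacement.
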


\begin{proof}
  We have that $-\log(x^2) = -2\log(x)$, so the claim is equivalent to
  \begin{align}
    (x-1) - \log(x) \geq 0 \quad \text{for all } x \in (0,1] \,.
  \end{align}
  The functions $F(x) = \log(x)$ and $G(x) = x-1$ are differentiable on $\mathbb{R}_{>0}$. 
  Thus, by 
  the fundamental theorem of calculus, it holds that for all $x \in \mathbb{R}_{>0}$,
  \begin{align}
    F(x) = F(1) + \int_1^x f(y) dy \,, \quad G(x) = G(1) + \int_1^x g(y) dy \quad 
    \text{where} \quad f(y) = \frac{d}{dy} F(y) \,, \quad g(y) = \frac{d}{dy} G(y) \,,
  \end{align}
  so for all $x \in (0,1]$, we have that
  \begin{align}
    (x-1) - \log(x) = G(x) - F(x) = \int_1^x g(y) - f(y) dy 
    = - \int_x^1 \underbrace{1 - \frac{1}{\ln(2)y}}_{< 0 \text{ for all } y \in (0,1]} dy 
    \geq 0 \,.
  \end{align}
  This proves the claim.
\end{proof}

\begin{proof}[Proof of \cref{geq-dsquared}]
  Since the right hand side of \eqref{eq:dgeq} is a finite real number, the inequality 
  trivially holds if $\Dec(A|E)_\omega = 0$ by the above convention. Thus, we assume in 
  the following that $\Dec(A|E)_\omega > 0$.
  We have that
  \begin{align}
    -\log \Dec(A|E)_\omega &= - \log \sup_{T_E \in \calT_E} \sup_{\psi \in \Psi_{AE}} 
    F^2(\psi, T_E(\omega)) \\
    &= - \log \left(\sup_{T_E \in \calT_E} \sup_{\psi \in \Psi_{AE}} 
    F(\psi, T_E(\omega)) \right)^2
  \end{align}
  For $x \in (0,1]$, it holds that $-\log x^2 \geq 2(1-x)$ (see \cref{log-lemma}). Thus, 
  since
  \begin{align}
    \sup_{T_E \in \calT_E} \sup_{\psi \in \Psi_{AE}} F(\psi, T_E(\omega)) \in (0,1] \,,
  \end{align}
  we get that
  \begin{align}
    -\log \Dec(A|E)_\omega &\geq 2 \left( 1 - \sup_{T_E \in \calT_E} \sup_{\psi \in
    \Psi_{AE}} F(\psi, T_E(\omega)) \right) \\
    &= \inf_{T_E \in \calT_E} \inf_{\psi \in \Psi_{AE}} 2(1-F(\psi, T_E(\omega))) \\
    &= \inf_{T_E \in \calT_E} \inf_{\psi \in \Psi_{AE}} \sup_{M \in \calM} 
    2(1-b(\psi, T_E(\omega)|M)) \,.
  \end{align}
  For the Bhattacharyya coefficient $b$ and the total variation distance $d$, it 
  has been shown \cite{Kra55} that for any two probability distributions distributions, it 
  holds that $2(1-b) \geq d^2$. Since this is true in particular for the two 
  probability distributions that the measurement $M$ induces on the states $\psi$ and 
  $T_E(\omega)$, we get that
  \begin{align}
    -\log \Dec(A|E)_\omega &\geq \inf_{T_E \in \calT_E} \inf_{\psi \in \Psi_{AE}} 
    \sup_{M \in \calM} d^2(\psi, T_E(\omega)|M) \\
    &= \inf_{T_E \in \calT_E} \inf_{\psi \in \Psi_{AE}} D^2(\psi, T_E(\omega)) \,,
  \end{align}
  as claimed.
\end{proof}

The idea that the fidelity and the trace distance are related is not new. In quantum theory, the \emph{Fuchs-van de Graaf inequalities} (FvdG) relate the two quantities \cite{FvdG99}. Inequality \eqref{eq:dgeq} is not completely analogous to the FvdG inequalities: It makes use of the logarithm in \eqref{eq:dgeq}, which allows to apply classical relations that lead to a stronger bound than with the application of the FvdG inequalities. 

For the bounds that we are going to derive, the notion of a \emph{non-signalling distribution} is central. Our bounds are essentially minimizations of functions over sets of non-signalling distributions $\Pr[a,b,c|x,y,z]_\omega$ and $\Pr[a,c|x,z]_\psi$ with certain additional properties. 

\begin{defi}
  A set of numbers $\Pr[a,b,c|x,y,z]_\omega \in [0,1]$, indexed by numbers 
  $a,b,c \in \{0,1\}$ which we call \dt{outcomes}, and numbers $x,y,z \in \{0,1\}$ 
  which we call \dt{settings}, is a \dt{non-signalling distribution} if
  \begin{align}
    &\text{normalization:} &&\sum_{a, b, c} 
    \Pr[a,b,c|x,y,z]_\omega = 1 
    \quad \text{for all } x,y,z \in \{0,1\} \,, \label{omega-norm} \\
    &\text{no-signalling:} &&\sum_{a} \Pr[a,b,c|0,y,z]_\omega 
    = \sum_{a} \Pr[a,b,c|1, y, z]_\omega \quad \text{for all } b,c,y,z \in \{0,1\} \,, 
    \label{omega-ns-1} \\
    & &&\sum_{b} \Pr[a,b,c|x,0,z]_\omega 
    = \sum_{b} \Pr[a,b,c | x,1,z]_\omega \quad \text{for all } a,b,x,y \in \{0,1\} \,, 
    \label{omega-ns-2}\\
    & &&\sum_{b} \Pr[a,b,c|x,y,0]_\omega 
    = \sum_{b} \Pr[a,b,c|x,y,1]_\omega \quad \text{for all } a,b,x,y \in \{0,1\} \,.
    \label{omega-ns-3}
  \end{align}
  Similarly, a set of numbers $\Pr[a,c|x,z]_\psi \in [0,1]$, indexed by outcomes 
  $a,c \in \{0,1\}$ and settings $x,z \in \{0,1\}$ is a non-signalling distribution if
  \begin{align}
    &\text{normalization:} &&\sum_{a,c} 
    \Pr[a,c|x,z]_\psi = 1 \quad \text{for all } x,z \in \{0,1\} \,,  \label{psi-norm} \\
    &\text{no-signalling:} &&\sum_{a} \Pr[a,c|0,z]_\psi 
    = \sum_{a} \Pr[a,c|1,z]_\psi \quad \text{for all } c,z \in \{0,1\} \,, 
    \label{psi-ns-1}\\
    & &&\sum_{c} \Pr[a,c|x,0]_\psi 
    = \sum_{c} \Pr[a,c|x,1]_\psi \quad \text{for all } a,x \in \{0,1\} 
    \label{psi-ns-2} \,.
  \end{align}
\end{defi}

The interpretation of \cref{omega-ns-1,omega-ns-2,omega-ns-3} is that it is impossible for each of the three parties to \emph{signal} to the other two parties by influencing their measurement statistics with the choice of the measurement setting. These one-party no-signalling constraints imply all the multi-party no-signalling constraints, saying that no collection of parties can signal to the remaining parties \cite{BLM05}, so we do not need to require these constraints separately.

Now we are going to formulate the bound on $-\log \Dec(A|E)_\omega$ in terms of the CHSH winning probability of Alice and Bob. Assume that Alice, Bob and Eve are in a situation described by a tripartite scenario $S_{ABE}$. Suppose that Alice and Bob have estimated that for the state $\omega \in \Omega$ that they are analyzing, their CHSH winning probability is at least $\lambda$ for some $\lambda \in [0,1]$. Formulated in our tripartite scenario language, this means that they have found out that for local instruments
\begin{align}
  I_A^0 = \{ T_A^{0|0}, T_A^{1|0} \} \in \calI_A, 
  &\qquad I_B^0 = \{T_B^{0|0}, T_B^{1|0}\} \in \calI_B, \label{eq:el-1} \\
  I_A^1 = \{ T_A^{0|1}, T_A^{1|1} \} \in \calI_A, 
  &\qquad I_B^1 = \{T_B^{0|1}, T_B^{1|1}\} \in \calI_B \label{eq:el-2} \,,
\end{align}
it holds that
\begin{align} \label{eq:lambda-prem-text}
  \frac{1}{4} \sum_{x,y} \sum_{\substack{a,b \\ a \oplus b = xy}} 
  \left(u \circ T_A^{a|x} \circ T_B^{b|y}\right)(\omega) \geq \lambda \,.
\end{align}
In that case, what can Alice and Bob infer about $-\log \Dec(A|E)_\omega$? We have seen in \cref{geq-dsquared} that this quantity is lower bounded by $\inf_{T_E \in \calT_E} \inf_{\psi \in \Psi_{AE}} D^2(\psi, T_E(\omega))$. Alice's and Bob's estimate on their CHSH winning probability can be translated into a bound on this quantity. This is shown by the following proposition.

\begin{prop} \label{lin-bound-prop}
  Let $S_{ABE}$ be a tripartite scenario, let $\omega 
  \in \Omega$ be a state. If the CHSH winning probability of Alice and Bob is at least 
  $\lambda$, i.e. if there are local instruments $I_A^0$, $I_A^1$, $I_B^0$ and $I_B^1$ as 
  in \eqref{eq:el-1} and \eqref{eq:el-2} and a $\lambda \in [0,1]$ such that
  \eqref{eq:lambda-prem-text} is satisfied, then 
  \begin{align}
    &\inf_{T_E \in \calT_E} \inf_{\psi \in \Psi_{AE}} 
    D(\psi, T_E(\omega)) \geq \min_{\substack{x,z \in \{0,1\} \\ 
    \Pr[a,b,c|x,y,z]_\omega \in \calD_\omega(\lambda) \\ 
    \Pr[a,c|x,z]_\psi \in \calD_\psi}}
    \frac{1}{2} \sum_{a,c} \left\vert \Pr[a,c|x,z]_\psi 
    - \sum_b \Pr[a,b,c|x,y,z]_\omega \right\vert \,,
  \end{align}
  where $\calD_\omega(\lambda)$ is the set of 
  non-signalling distributions for Alice, Bob and Eve such that Alice and Bob have a 
  CHSH winning probability of at least $\lambda$, i.e.
  \begin{align}
    &\calD_\omega(\lambda) = \left\{ \Pr[a,b,c|x,y,z]_\omega \ \middle\vert \
    \parbox{0.55\textwidth}{
      $\Pr[a,b,c|x,y,z]_\omega$ is a non-signalling distribution such that 
      \begin{align}
        \frac{1}{4} \sum\limits_{x,y} \sum\limits_{\substack{a,b \\ a \oplus b = xy}} 
        \sum\limits_c \Pr[a,b,c|x,y,z]_\omega \geq \lambda \,,  \nonumber
      \end{align}
    } \right\} \,. \label{eq:d-omega}
  \end{align}
  and where $\calD_\psi$ is the set of non-signalling distributions for Alice 
  and Eve such that their measurement outcomes are perfectly correlated when they 
  choose the same measurement setting, i.e.
  \begin{align}
    \calD_\psi
    = \left\{ \Pr[a,c|x,z]_\psi \ \middle\vert \ \parbox{0.4\textwidth}{
      $\Pr[a,c|x,z]_\psi$ is a non-signalling distribution such that 
      $\Pr[a=c|x=z]_\psi = 1$. \label{eq:corr-d}
    } \right\} \,.
  \end{align}
\end{prop}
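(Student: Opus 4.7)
My plan is a witness construction. For every $\psi\in\Psi_{AE}$ and every $T_E\in\calT_E$ I will explicitly produce a pair of distributions in $\calD_\psi$ and $\calD_\omega(\lambda)$ together with one specific measurement $M\in\calM$ such that $d(\psi,T_E(\omega)\mid M)$ already equals the statistical-distance expression inside the minimum on the right-hand side. Since $D(\psi,T_E(\omega))\ge d(\psi,T_E(\omega)\mid M)$ for any single measurement, and since minimising over admissible distributions only lowers the right-hand side, the desired inequality will follow; taking the infimum over $\psi,T_E$ will then yield the proposition.

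\textbf{Step 1: building the witness distributions.} Denote the CHSH instruments by $I_A^x=\{T_A^{0|x},T_A^{1|x}\}$ and $I_B^y=\{T_B^{0|y},T_B^{1|y}\}$. Since $\psi\in\Psi_{AE}$, \cref{max-corr-def} provides for each $x$ a matched Eve instrument $I_E^x=\{T_E^{0|x},T_E^{1|x}\}$ satisfying $(u\circ T_A^{0|x}\circ T_E^{0|x})(\psi)+(u\circ T_A^{1|x}\circ T_E^{1|x})(\psi)=1$. I will then set
\begin{align*}
\Pr[a,c\mid x,z]_\psi &:=(u\circ T_A^{a|x}\circ T_E^{c|z})(\psi),\\
\Pr[a,b,c\mid x,y,z]_\omega &:=(u\circ T_A^{a|x}\circ T_B^{b|y}\circ T_E^{c|z}\circ T_E)(\omega).
\end{align*}
Normalisation and all no-signalling constraints reduce to repeated use of $u\circ\sum_k T_P^{k|\cdot}=u$ for each party $P$ together with the commutativity of transformations from different parties built into \cref{def:tripartite}; the perfect-correlation property required for $\calD_\psi$ holds by construction of $I_E^x$.

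\textbf{Step 2: preserving the CHSH value (the crux).} I must verify that the constructed $\Pr[\cdot]_\omega$ actually lies in $\calD_\omega(\lambda)$, even though it is built from $T_E(\omega)$ rather than $\omega$. Set $S:=(\sum_c T_E^{c|z})\circ T_E$. Both factors are trace-preserving maps on Eve's side, so $u\circ S=u$, and because both commute with every $T_A^{a|x}$ and $T_B^{b|y}$, so does $S$. Consequently
\begin{align*}
\sum_c\Pr[a,b,c\mid x,y,z]_\omega &=(u\circ T_A^{a|x}\circ T_B^{b|y}\circ S)(\omega)\\
&=(u\circ S\circ T_A^{a|x}\circ T_B^{b|y})(\omega)\\
&=(u\circ T_A^{a|x}\circ T_B^{b|y})(\omega),
\end{align*}
so the $c$-marginal reproduces the original Alice--Bob statistics on $\omega$, whose CHSH winning probability is at least $\lambda$ by hypothesis. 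This is the only step that genuinely uses the tripartite commutativity postulate, and it is where I expect the main obstacle to lie: without the interplay of $u\circ S=u$ with commutativity, there would be no way to argue that Eve's insertion of $T_E$ (and of her own instrument) cannot spoil the CHSH value recorded in the tripartite distribution.

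\textbf{Step 3: completing the bound.} The family $M_{xz}:=\{u\circ T_A^{a|x}\circ T_E^{c|z}\}_{a,c}$ is a valid element of $\calM$ (its effects sum to $u$ by the same trace-preservation argument), so lower-bounding the supremum in the definition of $D(\psi,T_E(\omega))$ by this single choice gives
\begin{align*}
D(\psi,T_E(\omega))\ \ge\ \tfrac{1}{2}\sum_{a,c}\bigl|\Pr[a,c\mid x,z]_\psi-(u\circ T_A^{a|x}\circ T_E^{c|z}\circ T_E)(\omega)\bigr|.
\end{align*}
A second application of the commutativity and $u$-preservation trick, now marginalising Bob via $\sum_b T_B^{b|y}$, identifies the second term inside each absolute value with $\sum_b\Pr[a,b,c\mid x,y,z]_\omega$, independently of the auxiliary setting $y$. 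Since the witness distributions are admissible, the right-hand side is no smaller than the minimum defining the bound in the proposition; taking the infimum over $\psi\in\Psi_{AE}$ and $T_E\in\calT_E$ then proves the claim.
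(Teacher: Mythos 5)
Your proposal is correct and follows essentially the same route as the paper's proof: the paper likewise lower-bounds $D(\psi,T_E(\omega))$ by evaluating the single composite measurement $\{u\circ T_A^{a|x}\circ U_E^{c|z}\}_{a,c}$, defines exactly the same witness distributions $\Pr[a,c|x,z]_\psi$ and $\Pr[a,b,c|x,y,z]_\omega$ from the matched Eve instruments guaranteed by \cref{max-corr-def}, and uses instrument normalization $\sum_k u\circ T_P^{k|\cdot}=u$ together with commutativity to verify membership in $\calD_\psi$ and $\calD_\omega(\lambda)$ (your ``Step 2'' is precisely the paper's underbraced marginalization showing Eve's actions cannot alter the Alice--Bob CHSH statistics). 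The only differences are presentational, e.g.\ the paper isolates your Step 3 as a separate lemma (\cref{infd-geq}).
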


\cref{lin-bound-prop} reduces our problem of lower bounding the decoherence quantity for GPTs to an optimization over non-signalling distributions. This allows us to use linear programming techniques, which in similar ways have been used in \cite{Ton09} to answer questions about non-signalling distributions. 

We need the following lemma for the proof of \cref{lin-bound-prop} below.

\begin{lemma} \label{infd-geq}
  Let $S_{ABE}$ be a tripartite scenario, let $\omega, \psi \in \Omega$. Then, for all 
  local instruments $(I_A, I_B, I_E) \in \calI_A \times \calI_B 
  \times \calI_E$, it holds that
  \begin{align}
    \inf_{T_E \in \calT_E} D(\psi, T_E(\omega)) 
    \geq \frac{1}{2} \inf_{T_E \in \calT_E} 
    \sum_{\substack{T_A \in I_A \\ U_E \in I_E}} 
    \abs{(u \circ T_A \circ U_E)(\psi) - \sum_{T_B \in I_B} (u \circ T_A \circ T_B \circ 
    U_E \circ T_E)(\omega)} \,.
  \end{align}
\end{lemma}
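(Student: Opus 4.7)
The plan is to exhibit a specific measurement that realizes the right-hand side, and then invoke the definition of the trace distance as a supremum over measurements to obtain the lower bound. Concretely, fix $T_E \in \calT_E$; I will build a measurement $M_{AE}$ out of the local instruments $I_A$ and $I_E$, evaluate the classical trace distance $d(\psi, T_E(\omega) \mid M_{AE})$, and use the property of $I_B$ to introduce the sum over $T_B \in I_B$ in the $\omega$-term.

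First, I would define
\begin{equation}
M_{AE} := \{ u \circ T_A \circ U_E \mid T_A \in I_A,\ U_E \in I_E \}
\end{equation}
and verify that $M_{AE} \in \calM$. Each $u \circ T_A \circ U_E$ is an effect because $T_A \circ U_E$ maps $\Omega$ into $\Omega^\leq$ (compositions of trace-non-increasing maps are trace-non-increasing), so $0 \leq (u \circ T_A \circ U_E)(\omega) \leq 1$ for all $\omega \in \Omega$. For the completeness condition, using that $\sum_{T_A \in I_A} u \circ T_A = u$ and $\sum_{U_E \in I_E} u \circ U_E = u$ together with linearity yields $\sum_{T_A,U_E} u \circ T_A \circ U_E = u$; commutativity of $\calT_A$ and $\calT_E$ is not strictly needed for this step but will be used next. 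By the definition of the trace distance as a supremum over $\calM$,
\begin{equation}
D(\psi, T_E(\omega)) \geq d(\psi, T_E(\omega) \mid M_{AE}) = \tfrac{1}{2} \sum_{T_A \in I_A,\, U_E \in I_E} \left| (u \circ T_A \circ U_E)(\psi) - (u \circ T_A \circ U_E \circ T_E)(\omega) \right|.
\end{equation}

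Next, I would rewrite the $\omega$-term. Since $\sum_{T_B \in I_B} u \circ T_B = u$, for any state $\sigma \in \Omega^\leq$ we have $u(\sigma) = \sum_{T_B \in I_B} (u \circ T_B)(\sigma)$. Applying this to $\sigma = (T_A \circ U_E \circ T_E)(\omega)$ and then using that elements of $\calT_B$ commute with those of $\calT_A$ and $\calT_E$ (\cref{def:tripartite}), I can slide $T_B$ past $T_A$ to obtain
\begin{equation}
(u \circ T_A \circ U_E \circ T_E)(\omega) = \sum_{T_B \in I_B} (u \circ T_A \circ T_B \circ U_E \circ T_E)(\omega).
\end{equation}
Substituting this into the previous inequality gives exactly the desired bound for the fixed $T_E$.

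Finally, since the inequality holds for every $T_E \in \calT_E$, I would take the infimum over $T_E$ on both sides to conclude. The only subtlety worth flagging is ensuring that commutativity is used correctly to reorder the compositions inside $u$; beyond that, the argument is a direct application of the definitions of instrument, measurement and trace distance, so no genuine obstacle should arise.
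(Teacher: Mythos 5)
Your proposal is correct and follows essentially the same route as the paper's proof: both fix $T_E$, lower bound the supremum over measurements by evaluating the classical trace distance on the composite measurement $\{u \circ T_A \circ U_E\}$ built from $I_A$ and $I_E$, insert the sum over $T_B$ using $\sum_{T_B \in I_B} u \circ T_B = u$, and reorder via commutativity of transformations of distinct parties. The only difference is that you explicitly verify $M_{AE} \in \calM$, which the paper delegates to its remark on local measurements.
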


\begin{proof}
  It is sufficient to show that for all $\omega, \psi \in \Omega$, for all $T_E \in 
  \calT_E$ and for all $(I_A, I_B, I_E) \in \calI_A \times \calI_B \times \calI_E$,
  \begin{align}
    D(\psi, T_E(\omega)) 
    \geq \frac{1}{2} \sum_{\substack{T_A \in I_A \\ U_E \in I_E}} 
    \left\vert (u \circ T_A \circ U_E)(\psi) - \sum_{T_B \in I_B}
    (u \circ T_A \circ T_B \circ U_E \circ T_E)(\omega) \right\vert \,.
  \end{align}
  This is what we are going to show now. Let $\omega, \psi \in \Omega$, let $T_E \in 
  \calT_E$, let $(I_A, I_B, I_E) \in \calI_A \times \calI_B \times \calI_E$. Then
  \begin{align}
    D(\psi, T_E(\omega)) 
    = \sup_{M \in \calM} d(\psi, T_E(\omega) | M) 
    = \frac{1}{2} \sup_{M \in \calM} \sum_{e \in M} 
    \abs{e(\psi)  - e(T_E(\omega))} \,. \label{d-delta-sum}
  \end{align}
  If instead of taking the supremum over $\calM$, we only evaluate the expression for a 
  particular element of $\calM$, we get a lower bound on \eqref{d-delta-sum}. We choose 
  the element (c.f. \cref{rmk:loc-meas})
  \begin{align}
    \{u \circ T_A \circ U_E \mid T_A \in \calI_A, U_E \in \calI_E \} \in \calM \,.
  \end{align}
  Hence, 
  \begin{align}
    D(\psi, T_E(\omega)) 
    &\geq \frac{1}{2} \sum_{\substack{T_A \in \calI_A \\ U_E \in 
    \calI_E}} \abs{(u \circ T_A \circ U_E)(\psi) - (u \circ T_A \circ U_E \circ T_E)
    (\omega)} \,.
  \end{align}
  By the definition of a local instrument, $u = \sum_{T_B \in I_B} u \circ T_B$. Thus,
  \begin{align}
    D(\psi, T_E(\omega)) 
    &\geq \frac{1}{2} \sum_{\substack{T_A \in \calI_A \\ U_E \in 
    \calI_E}} \abs{(u \circ T_A \circ U_E)(\psi) - \sum_{T_B \in I_B} (u \circ T_B \circ 
    T_A \circ U_E \circ T_E)(\omega)} \\
    &= \frac{1}{2} \sum_{\substack{T_A \in \calI_A \\ U_E \in 
    \calI_E}} \abs{(u \circ T_A \circ U_E)(\psi) - \sum_{T_B \in I_B} (u \circ T_A \circ 
    T_B \circ U_E \circ T_E)(\omega)} \,,
  \end{align}
  where in the last equality, we made use of the fact that transformations of different 
  parties commute.
\end{proof}

\begin{proof}[Proof of \cref{lin-bound-prop}]
  It is sufficient to show that for every $\psi \in \Psi_{AE}$, the claimed inequality 
  holds without the minimization over $\Psi_{AE}$, i.e. 
  \begin{align} \label{suff-to-show}
    \inf_{T_E \in \calT_E} 
    D(\psi, T_E(\omega)) \geq \min_{\substack{x,z \in \{0,1\} \\ 
    \Pr[a,b,c|x,y,z]_\omega \in \calD_\omega(\lambda) \\ 
    \Pr[a,c|x,z]_\psi \in \calD_\psi(\Psi_{AE}, \lambda)}}
    \frac{1}{2} \sum_{a,c} \left\vert \Pr[a,c|x,z]_\psi 
    - \sum_b \Pr[a,b,c|x,y,z]_\omega \right\vert \,.
  \end{align}
  By means of \cref{infd-geq}, we know that for all $x, y \in \{0,1\}$ and every 
  $I_E \in \calI_E$,
  \begin{align}
    \inf_{T_E \in \calT_E} D(\psi, T_E(\omega)) 
    &\geq \frac{1}{2} 
    \inf_{T_E \in \calT_E}
    \sum_{\substack{a \in \{0,1\} \\ U_E \in I_E}} \left\vert 
    (u \circ T_A^{a|x} \circ U_E)(\psi) 
    - \sum_{b \in \{0,1\}} (u \circ T_A^{a|x} \circ T_B^{b|y} \circ U_E \circ T_E)(\omega) 
    \right\vert \,.
  \end{align}
  Let $\psi \in \Psi_{AE}$, let $I_E^0 = \{U_E^{0|0}, U_E^{1|0}\}$, $I_E^1 = \{U_E^{0|1}, 
  U_E^{1|1}\} \in \calI_E$ be local instruments for Eve such that
  \begin{align}
    &(u \circ T_A^{0|0} \circ U_E^{0|0})(\psi) + (u \circ T_A^{1|0} \circ U_E^{1|0})(\psi) 
    = 1 \,, \label{corr-1} \\
    &(u \circ T_A^{0|1} \circ U_E^{0|1})(\psi) + (u \circ T_A^{1|1} \circ U_E^{1|1})(\psi) 
    = 1 \,, \label{corr-2}
  \end{align}
  which exist according to the definition of $\Psi_{AE}$ 
  (\cref{max-corr-def}). It holds that for every $x, y, z \in \{0,1\}$, 
  \begin{align}
    \inf_{T_E \in \calT_E} D(\psi, T_E(\omega)) 
    &\geq \frac{1}{2} 
    \inf_{T_E \in \calT_E}
    \sum_{a,c \in \{0,1\}} \left\vert 
    (u \circ T_A^{a|x} \circ U_E^{c|z})(\psi) - 
    \sum_{b \in \{0,1\}} (u \circ T_A^{a|x} \circ T_B^{b|y} \circ U_E^{c|z} \circ T_E)
    (\omega) \right\vert \,. \\
    &= \frac{1}{2} \inf_{T_E \in \calT_E}
    \sum_{a,c \in \{0,1\}} \left\vert 
    \Pr[a,c|x,z]_\psi - \sum_{b \in \{0,1\}} \Pr[a,b,c|x,y,z]_\omega \right\vert \,,
  \end{align}
  where
  \begin{align}
    &\Pr[a,c|x,z]_\psi = (u \circ T_A^{a|x} \circ U_E^{c|z})(\psi) \,, \\
    &\Pr[a,b,c|x,y,z]_\omega = (u \circ T_A^{a|x} \circ T_B^{b|y} \circ U_E^{c|z} \circ 
    T_E)(\omega) \,.
  \end{align}
  Hence,
  \begin{align}
    \inf_{T_E \in \calT_E} D(\psi, T_E(\omega))
    &\geq \frac{1}{2} \min_{x,y,z} \inf_{T_E \in \calT_E}
    \sum_{a,c \in \{0,1\}} \left\vert 
    \Pr[a,c|x,z]_\psi - \sum_{b \in \{0,1\}} \Pr[a,b,c|x,y,z]_\omega \right\vert \,.
  \end{align}
  $\Pr[a,c|x,z]_\psi$ forms a non-signalling distribution: For the normalization, note 
  that for all $x,z \in \{0,1\}$, we have that
  \begin{align}
    \sum_{a,c} \Pr[a,c|x,z]_\psi &= \sum_{a,c} 
    (u \circ T_A^{a|x} \circ U_E^{c|z})(\psi) = \underbrace{\left(\sum_a u \circ T_A^{a|x}
    \right)}_{u} \circ \left( \sum_c U_E^{c|z} \right) (\psi) \\
    &= \left( \sum_c u \circ U_E^{c|z} \right) (\psi) = u (\psi) = 1 \,.
  \end{align}
  For the no-signalling condition, note that for all $c,z \in \{0,1\}$, it holds that
  \begin{align}
    \sum_a \Pr[a,c|0,z]_\psi &= \left(\left( \sum_a u \circ T_A^{a|0}\right) \circ 
    U_E^{c|z}\right) (\psi) = (u \circ U_E^{c|z})(\psi) 
    = \left(\left( \sum_a u \circ T_A^{a|1}\right) \circ U_E^{c|z}\right) (\psi) \\
    &= \sum_a \Pr[a,c|1,z]_\psi \,,
  \end{align}
  and that for all $a,x \in \{0,1\}$, it holds that
  \begin{align}
    \sum_c \Pr[a,c|x,0]_\psi &= \sum_c (u \circ T_A^{a|x} \circ U_E^{c|0})(\psi)
    = \sum_c (u \circ U_E^{c|0} \circ T_A^{a|x})(\psi) 
    = \left( \left( \sum_c u \circ U_E^{c|0} \right) \circ T_A^{a|x} \right) (\psi) \\
    &= (u \circ T_A^{a|x})(\psi) 
    = \left( \left( \sum_c u \circ U_E^{c|1} \right) \circ T_A^{a|x} \right) (\psi)
    = \sum_c \Pr[a,c|x,1]_\psi \,,
  \end{align}
  where in the second equality, we made use of the fact that local transformations of 
  different parties commute. Analogously, for every $T_E \in \calT_E$, 
  $\Pr[a,b,c|x,y,z]_\omega$ is a non-signalling distribution. Moreover, 
  $\Pr[a,b,c|x,y,z]_\omega$ satisfies
  \begin{align}
    \frac{1}{4} \sum\limits_{x,y} \sum\limits_{\substack{a,b \\ a \oplus b = xy}} 
    \sum\limits_c \Pr[a,b,c|x,y,z]_\omega 
    &= \frac{1}{4} \sum\limits_{x,y} \sum\limits_{\substack{a,b \\ a \oplus b = xy}} 
    \sum\limits_c (u \circ T_A^{a|x} \circ T_B^{b|y} \circ U_E^{c|z} \circ T_E)(\omega) \\
    &= \frac{1}{4} \sum\limits_{x,y} \sum\limits_{\substack{a,b \\ a \oplus b = xy}} 
    \underbrace{\left(\sum_c (u \circ U_E^{c|z} \circ T_E\right)}_{u} \circ T_A^{a|x} 
    \circ T_B^{b|y})(\omega) \\
    &\geq \lambda \,, 
  \end{align}
  where the inequality is one of the assumptions of the proposition. Furthermore, 
  $\Pr[a,c|x,1]_\psi$ satisfies
  \begin{align}
    &\Pr[0,0|0,0]_\psi + \Pr[1,1|0,0]_\psi 
    = (u \circ T_A^{0|0} \circ U_E^{0|0})(\psi) + (u \circ T_A^{1|0} \circ U_E^{1|0})
    (\psi) = 1 \,, \\
    &\Pr[0,0|1,1]_\psi + \Pr[1,1|1,1]_\psi 
    = (u \circ T_A^{0|1} \circ T^{0|1})(\psi) + (u \circ T_A^{1|1} \circ U_E^{1|1})
    (\psi) = 1 \,
  \end{align}
  (where we made use of \eqref{corr-1} and \eqref{corr-2}), which we may abbreviate as
  $\Pr[a=c|x=z]_\psi = 1$. Thus, for every $T_E \in \calT_E$, we have that
  $\Pr[a,b,c|x,y,z]_\omega \in \calD_\omega(\lambda)$ and $\Pr[a,c|x,z]_\psi \in 
  \calD_\psi$. Thus,
  \begin{align} \label{xyz-corr}
    \inf_{T_E \in \calT_E} D(\psi, T_E(\omega))
    &\geq \frac{1}{2} \min_{x,y,z} \inf_{\substack{
    \Pr[a,b,c|x,y,z]_\omega \in \calD_\omega(\lambda) \\ 
    \Pr[a,c|x,z]_\psi \in \calD_\psi(\Psi_{AE}, \lambda)}}
    \sum_{a,c \in \{0,1\}} \left\vert 
    \Pr[a,c|x,z]_\psi - \sum_{b \in \{0,1\}} \Pr[a,b,c|x,y,z]_\omega \right\vert \,.
  \end{align}
  Since $\Pr[a,b,c|x,y,z]_\omega$ satisfies the no-signalling property, the right hand 
  side of \eqref{xyz-corr} is independent of $y$, so the minimization only needs to be 
  performed over $x$ and $z$. Moreover, the infimum over the sets $\calD_\omega(\lambda)$ 
  and $\calD_\psi$ is a minimum because it is the infimum 
  of a continuous function over a convex polytope, which is always attained (see 
  \cref{sec:gpt-results} for more details). This completes the proof.
\end{proof}

\begin{cor}[The bound]
  Let $S_{ABE}$ be a tripartite scenario, let $\omega \in \Omega$ be a state. If the CHSH 
  winning probability of Alice and Bob is at least $\lambda$ (in the above sense), then
  \begin{align}
    \Dec(A|E)_\omega \leq 2^{-\delta^2(\lambda)} \,, 
    \label{eq:corr-bound}
  \end{align}
  where
  \begin{align}
    \delta(\lambda) = \min_{\substack{x,z \in \{0,1\} \\ 
    \Pr[a,b,c|x,y,z]_\omega \in \calD_\omega(\lambda) \\ 
    \Pr[a,c|x,z]_\psi \in \calD_\psi}}
    \frac{1}{2} \sum_{a,c} \left\vert \Pr[a,c|x,z]_\psi 
    - \sum_b \Pr[a,b,c|x,y,z]_\omega \right\vert \,, \label{eq:corr-min}
  \end{align}
\end{cor}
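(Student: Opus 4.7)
The plan is to simply chain together Propositions \ref{geq-dsquared} and \ref{lin-bound-prop}, so the corollary is essentially a one-line consequence. First I would apply Proposition \ref{lin-bound-prop}, which under the assumed CHSH winning probability condition yields
\begin{align}
\inf_{T_E \in \calT_E} \inf_{\psi \in \Psi_{AE}} D(\psi, T_E(\omega)) \;\geq\; \delta(\lambda),
\end{align}
where $\delta(\lambda)$ is exactly the minimum in \eqref{eq:corr-min}.

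Next I would square this inequality. Since the trace distance $D$ is non-negative and squaring is monotone on $[0,\infty)$, the infimum of squares equals the square of the infimum, giving
\begin{align}
\inf_{T_E \in \calT_E} \inf_{\psi \in \Psi_{AE}} D^2(\psi, T_E(\omega)) \;\geq\; \delta^2(\lambda).
\end{align}
Then I would invoke Proposition \ref{geq-dsquared}, which bounds $-\log \Dec(A|E)_\omega$ from below by the same doubly-infimized squared trace distance. Combining these two steps yields $-\log \Dec(A|E)_\omega \geq \delta^2(\lambda)$, and rearranging (using that $2^{-(\cdot)}$ is monotonically decreasing) produces the claimed bound $\Dec(A|E)_\omega \leq 2^{-\delta^2(\lambda)}$.

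There is really no obstacle here beyond bookkeeping: the substantive work has already been done in the two propositions. The only point to verify carefully is the interchange of squaring with the double infimum, which is immediate because $D \geq 0$ and hence the map $x \mapsto x^2$ is monotone on the range of the function being optimized. One might also remark, for completeness, that if no valid $T_E$ or $\psi$ existed the statement would be vacuous, and that the minimum in \eqref{eq:corr-min} is indeed attained (as already noted at the end of the proof of Proposition \ref{lin-bound-prop}) because it is the minimum of a continuous piecewise-linear functional over a convex polytope of non-signalling distributions satisfying the additional linear constraints defining $\calD_\omega(\lambda)$ and $\calD_\psi$.
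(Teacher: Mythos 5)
Your proposal is correct and matches the paper's own (one-line) proof, which simply cites \cref{geq-dsquared} and \cref{lin-bound-prop}; your additional care about interchanging the square with the double infimum (valid since $D \geq 0$ and $\delta(\lambda) \geq 0$) just makes explicit a step the paper leaves implicit.
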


\begin{proof}
  This is a direct consequence of \cref{geq-dsquared,lin-bound-prop}.
\end{proof}

\subsection{Evaluation of the bound and results} \label{sec:gpt-results}

\subsubsection{Formulation of the bound as a linear program}

In this subsection, we evaluate the bound \eqref{eq:corr-bound}. To this end, we rewrite \eqref{eq:corr-min} in terms of linear programs.

\begin{lprog} \label{lprog:corr}
  The bound $\delta(\lambda)$, which is a function $\delta: [0,1] 
  \rightarrow [0,1]$, is given as follows. For all $\lambda \in [0,1]$, the value 
  $\delta(\lambda)$ is the solution of the linear program
  \begin{align}
  \begin{linprog}{minimize}{$\delta(\lambda)$}
    &$\Pr[a,b,c|x,y,z]_\omega \in \calD_\omega(\lambda)$ \\ 
    &$\Pr[a,c|x,z]_\psi \in \calD_\psi$ \\ 
    &$\delta(\lambda) \geq \sum_{a,c} \delta_{ac}^{xz} \ 
    \forall x,z \in \{0,1\}$ \\
    &$\delta_{ac}^{xz} \geq \frac{1}{2} \left( \Pr[a,c|x,z]_\psi - \sum_b 
    \Pr[a,b,c|x,0,z]_\omega \right) \geq -\delta_{ac}^{xz}\ \forall a,c,x,z \in \{0,1\}$
  \end{linprog}
  \end{align}
\end{lprog}
\noindent This is a linear program in 97 variables:
\begin{align}
  &\big\{ \Pr[a,b,c|x,y,z]_\omega \big\}_{a,b,c,x,y,z \in \{0,1\}} &&\phantom{{}+{}} 64 
  \text{ variables} \nonumber \\
  &\big\{ \Pr[a,c|x,y]_\psi \big\}_{a,c,x,z \in \{0,1\}} &&+16 \text{ variables} 
  \nonumber \\
  &\big\{ \delta_{ac}^{xz} \big\}_{a,c,x,z \in \{0,1\}} &&+16 
  \text{ variables} \nonumber  \\
  &\delta(\lambda)
  &&+1 \text{ variable} \nonumber \\
  & &&=97 \text{ variables} \nonumber
\end{align}
We have already written out the constraints for these 97 variables as (in)equalities. The third and fourth line are already written as such in the program description, and for the first two lines, we refer to the following:
\begin{center} 
  \begin{tabular}{ll} 
    constraint & (in)equalities \\
    \hline 
    $\Pr[a,b,c|x,y,z]_\omega \in \calD_\omega(\lambda)$ & \eqref{omega-norm} to 
    \eqref{omega-ns-3} and inequality in \eqref{eq:d-omega} \\
    $\Pr[a,c|x,z]_\psi \in \calD_\psi(\lambda)$  
    &\eqref{psi-norm} to \eqref{psi-ns-2} and equation in \eqref{eq:corr-d}
  \end{tabular}
\end{center}
The inequalities define a convex polytope over which the convex function $\delta(\lambda)$ is minimized, so the minimum is attained. It is straightforward to bring these inequalities into the standard form of linear programming. We solved the resulting linear program using standard linear programming routines in Mathematica and Octave.

\subsubsection{Solution of the linear program and discussion of the results}

\definecolor{blueish}{HTML}{A3D9FF}
\definecolor{yellowish}{HTML}{FFE0B5}

\begin{figure}[h!]
\begin{center}
  \includegraphics{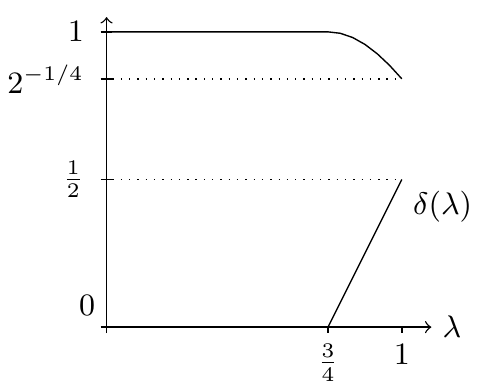}
  \caption{\textbf{Plot of the result.} The bound $2^{-\delta^2(\lambda)}$ is 
  non-trivial precisely when the CHSH winning probability for Alice and Bob is 
  non-classical, i.e. $\lambda > 3/4$. \label{fig:result}}
\end{center}
\end{figure}

We plot the result in \cref{fig:result}. The bound $2^{-\delta^2(\lambda)}$ is non-trivial for values $\lambda \in (3/4,1]$. This is a very satisfactory result as one cannot expect the bound to be non-trivial for $\lambda \in [0,3/4]$: A CHSH winning probability of at least $\lambda \in [0,3/4]$ for Alice and Bob is always compatible with $\Dec(A|E)_\omega = 1$. To see this, note that the requirement for a state to yield a CHSH winning probability for Alice and Bob of at least $ \lambda \in [0,3/4]$ is trivial: Alice and Bob can choose trivial measurements that always yields 1 as an outcome, independently of the state. More precisely, in our tripartite scenario language, we can express this as follows. Certainly, there are tripartite scenarios in which the identity map $\id_V$ and the zero map $0_V$ are in $\calT_A$, $\calT_B$ and $\calT_E$.\footnote{Every tripartite scenario can be turned into such by adding $\id_V$ and $0_V$ to the sets of local transformations. In fact, it would be physical to assume that each set of local transformations contains $\id_V$ and $0_V$.} 
For such tripartite scenarios, the condition (c.f. \eqref{eq:el-1} to \eqref{eq:lambda-prem-text} in \cref{sec:gpt-bounds})
\begin{align} 
  \frac{1}{4} \sum_{x,y} \sum_{\substack{a,b \\ a \oplus b = xy}} 
  \left(u \circ T_A^{a|x} \circ T_B^{b|y}\right)(\omega) \geq \lambda \in [0,3/4] 
  \quad \text{for some } 
  \begin{array}{l}\{ T_A^{0|0}, T_A^{1|0} \}, \{ T_A^{0|1}, T_A^{1|1} \} \in \calI_A, \\ 
  \{T_B^{0|0}, T_B^{1|0}\}, \{T_B^{0|1}, T_B^{1|1}\} \in \calI_B\end{array} 
\end{align}
is always satisfied because for all $\omega \in \Omega$,
\begin{align}
  \frac{1}{4} \sum_{x,y} \sum_{\substack{a,b \\ a \oplus b = xy}} 
  \left(u \circ T_A^{a|x} \circ T_B^{b|y}\right)(\omega) = \frac{3}{4} \quad \text{for } 
  \begin{array}{l}T_A^{0|0} = T_A^{0|1} = T_B^{0|0} = T_B^{0|1} = 0_V \,, \\T_A^{1|0} = 
  T_A^{1|1} = T_B^{1|0} = T_B^{1|1} = \id_V \,. \end{array}
\end{align}
This means that the requirement that the CHSH winning probability for Alice and Bob is at least $\lambda \in [0,3/4]$ does not exclude the case $\omega \in \Psi_{AE}$. In that case, $\Dec(A|E)_\omega = \sup_{T_E \in \calT_E} \sup_{\psi \in \Psi_{AE}} F^2(\psi, T_E(\omega)) = 1$.

\section{A test for gravitational decoherence}

\subsection{An optomechanical setting and its model for gravitational decoherence}
\label{sec:grav-deco-model}

The objective here is to create two entangled photonic qubits in which one photon is prepared in an opto-mechanical system that is itself subject to gravitational decoherence --- if there is any --- and the other photon is prepared in an identical cavity except the mirrors are fixed and cannot move. This model is a modification of the model first proposed by Bouwmeester \cite{Bouwmeester2003} in which an itinerant single photon pulse is injected into a cavity rather than created intra-cavity as here. Our modification avoids the problem that the time over which the photons interact with the mechanical element is stochastic and determined by the random times at which the photons enter and exit the cavity through an end mirror. In the new scheme, the cavities are assumed to have almost perfect mirrors --- very narrow line width (see for example\cite{NIST}).

The intracavity single photon Raman source is described in Nisbet-Jones, et al. \cite{NDLK11}.  In this scheme (see Fig. \ref{fig1})  a control pulse can quickly and efficiently prepare a cavity mode in a single photon state by driving a Raman transition between two hyperfine levels we label as $|g\rangle, |e\rangle$.
 In our scheme there are two optical cavities otherwise identical except in one of the cavities  a mechanical element can respond to the radiation pressure force of light.

We will assume that we can prepare the atomic sources in an arbitrary entangled state $|g,e\rangle+|e,g\rangle$, for example, using the trapped ion schemes of  Monroe \cite{DM10}. In addition we will assume that we can make arbitrary rotations in the $g,e$ subspace of each source and also make fast efficient single shot readout of the state of each source, for example using fluorescence shelving. This means we can readout the atomic qubit in each cavity in any basis. 

The write laser implements the Hamiltonian $H_{w}=i\hbar\Omega(t)(a^\dagger|e\rangle\langle g|-a |g\rangle\langle e|)/2$. This is a rotation in the state space 
$\{ |g\rangle|0\rangle,|e\rangle|1\rangle \}$. We can thus prepare arbitrary states of the form $\cos\theta/2|g\rangle|0\rangle+\sin\theta/2 |e\rangle|1\rangle$, where $\theta$ is determined by the pulse area. We will refer to the case of $\theta =\pi$ as a $\pi$-pulse.   Note that if the source is in the excited state $|e\rangle$ and the cavity is in the vacuum, no photon is excited.  

Starting with the cavities in the vacuum state the protocol proceeds as follows:
\begin{enumerate}
  \item prepare the source atoms in the state $|g,e\rangle+|e,g\rangle$.
  \item apply the write laser with a $\pi$-pulse
  \item free evolution of the OM systems for a time $T$
  \item apply the write laser with a $\pi$-pulse
  \item readout the atomic state in each cavity. 
\end{enumerate}
At the end of Step 2, the state of the sources and the cavities is $|\psi_2\rangle=|e,e\rangle\otimes(|1,0\rangle+|0,1\rangle)$ where $|n,m\rangle=|n\rangle\otimes|m\rangle$ with each factor being a photon number eigenstate. 

\begin{figure}[h!] 
   \includegraphics[scale =0.7]{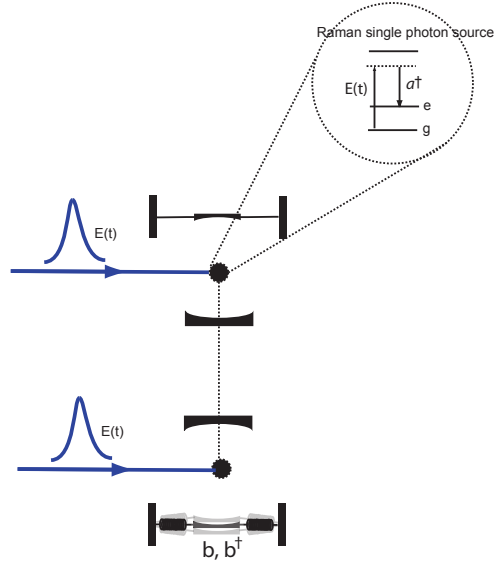} 
      \centering
   \caption{Two cavities each contain a Raman single photon source controlled by an external laser `write field' $E(t)$. The Raman sources are first prepared in an entangled state. Only one cavity contains a mechanical element coupled by radiation rouser to the cavity field. }
   \label{fig1}
\end{figure}

\subsubsection{Gravitational decoherence.}
We will use Diosi's theory of gravitational decoherence\cite{Dio89}. This is equivalent to the decoherence model introduced in Kafri et al. \cite{KTM14}.
One mirror of the opto-mechanical cavity is free to move in a harmonic potential with frequency $\omega_m$. The master equation for a massive particle moving in a harmonic potential, including gravitational decoherence is 
\begin{equation}
\label{grav-me}
\frac{d\rho}{dt}=-i\omega_m[b^\dagger b,\rho]-\Lambda_{\text{grav}}[b+b^\dagger,[b+b^\dagger,\rho]]
\end{equation}
where 
\begin{equation}
b=\sqrt{\frac{m\omega_m}{2\hbar}}\hat{x}+i\frac{1}{\sqrt{2\hbar m\omega_m}}\hat{p}
\end{equation}
with $\hat{x},\hat{p}$ the usual canonical position and momentum operators. The gravitational decoherence rate $\Lambda_{\text{grav}}$ is given by 
\begin{equation}
\Lambda_{\text{grav}}=\frac{2\pi}{3} \frac{G\Delta}{\omega_m}
\end{equation}
with $G$ the Newton gravitational constant and $\Delta$ the density of the mechanical element. As one might expect $\Lambda_{\text{grav}}$ is quite small, of the order of $10^{-8}$ s$^{-1}$ for suspended mirrors ( as in LIGO) with $\omega_m\sim 1$. 

Form a phenomenological perspective the effect of gravitational decoherence is analogous to a Browning heating effect. To see this we note that the average
vibrational quantum number increases diffusively
\begin{equation}
\frac{d\langle b^\dagger b\rangle}{dt} = 2\Lambda_{\text{grav}}
\end{equation}
Indeed, one could simulate this effect by adding a stochastic driving force to the mechanical element via the  stochastic Hamiltonian
\begin{equation}
H_s=\frac{dI}{dt}(b+b^\dagger)
\end{equation}
where  $I(t)$ satisfies an Ito stochastic differential equation,
\begin{equation}
dI(t)=\sqrt{4\Lambda_{\text{grav}}}\ dW(t)
\end{equation}
where $dW(t)$ is the Weiner increment. Averaging over all histories of the stochastic driving force gives the final term in Eq. \ref{grav-me}.  

In the absence of mechanical dissipation, there is no steady state. In reality the mechanical quality factor, $Q=\omega_m/\gamma_m$, is finite leading to a steady state with mean phonon number given by 
\begin{equation}
\label{mean-number}
\langle b^\dagger b\rangle_{ss}=\frac{2\Lambda_{\text{grav}}}{\gamma_m}
\end{equation}
This of course assumes that there is no additional mechanical heating (regular thermodynamic kind): hardly a realistic assumption. This adds a large (comparatively) additional term to $\Lambda_{\text{grav}}$ so that we find (for $k_BT >>\hbar\omega_m$),  
\begin{equation} \label{eq:lambda-goesto}
  \Lambda_{\text{grav}}\rightarrow \Lambda_{\text{grav}}
  + \Lambda_{\text{heat}} \,, \quad \text{where } \Lambda_{\text{heat}} 
  = \frac{k_BT}{\hbar Q} \,.
\end{equation}
Given the incredibly large quality factor of $Q=10^{10}$, one would need to cool the mechanical element to nano-Kelvin for the thermodynamical heating to be of the order of the gravitational heating. 

\subsubsection{Optomechanical probe of gravitational decoherence.}
The optomechanical Hamiltonian in cavity-one is
\begin{equation}
\label{OM-ham}
H_{\text{om}}=\hbar\omega_m b^\dagger b+ \hbar g_0 (b+b^\dagger)
\end{equation}
 $g_0$ is the single photon optomechanical coupling rate. Typically $g_0\sim 1$ s$^{-1}$ for the sorts of cavities we are considering here.    This is about the same order of magnitude as $\omega_m$. In new field OM cavity technologies, $g_0$ can be as high as $10^3$ s$^{-1}$ however in such cases the mechanical frequency is also typically much higher $\sim$ tens of MHz.  The interaction time is $T$ which is short compared to the cavity decay time (which we neglect). 
We will assume that the mechanics starts in a thermal state, the steady state of the system subject to gravitational decoherence, heating and dissipation. This is given by
\begin{equation}
\rho_{\text{om}} =\frac{1}{\pi\bar{n}}\int d^2\alpha\  e^{-\frac{|\alpha|^2}{\bar{n}}}\ |\alpha\rangle_b\langle \alpha|
\end{equation} 
where $\bar{n}=\langle b^\dagger b\rangle_{ss}$ is the steady state mean phonon number given in Eq. \ref{mean-number}. 

It is simplest to work in an interaction picture defined by the mechanical free dynamics,
\begin{equation}
  H_{\text{om,I}}= \hbar g_0 (be^{-i\omega_m t}+b^\dagger e^{i\omega_m t})
  \end{equation}
  The corresponding unitary evolution operator is 
  \begin{equation}
  U(t) = e^{\beta(t)b^\dagger-\beta^*(t) b}
  \end{equation}
  where 
  \begin{equation}
  \beta(t)=\frac{g_0}{\omega_m}(e^{-i\omega_m t}-1)
  \end{equation}
  The initial state for the OM interaction is the state at the end of Step 2 
  \begin{equation}
  \rho_{\text{om}}(0) =\frac{1}{2}\left (  |1,0\rangle\langle 1,0|+|0,1\rangle\langle 0,1| +|1,0\rangle\langle 0,1|+|0,1\rangle\langle 1,0|\right )\otimes\rho_{m}
  \end{equation}
  where $\rho_m$ is the state of the mechanical element at the start of the protocol, a thermal state. 
  We can ignore the state of the atomic sources at this stage as they do not participate in the OM interaction. 
  
  The state of the optomechanical system after an interaction time $T$ is given by
  \begin{equation}
  \rho_{\text{om}}(t)=\frac{1}{2}\left (  |1,0\rangle\langle 1,0| \rho_{m}+|0,1\rangle\langle 0,1| U(t) \rho_{m}U^\dagger(t)
                      +|1,0\rangle\langle 0,1| \rho_{m}U^\dagger(t)+|0,1\rangle\langle 1,0|U(t)\rho_{m}\right )
  \end{equation}
  The reduced state of the cavity fields is given by tracing out the mechanical degree of freedom,
  \begin{eqnarray}
  \rho_f(t) =  \frac{1}{2}\left ( |1,0\rangle\langle 1,0|+|0,1\rangle\langle 0,1| +R^*|1,0\rangle\langle 0,1|+R|0,1\rangle\langle 1,0|\right ) \label{eq:rho-f}
  \end{eqnarray}
  where 
  \begin{equation} \label{eq:r}
  R=e^{-(1+2\bar{n})|\beta(t)|^2/2}
  \end{equation}
  where 
  \begin{eqnarray}
    \bar{n} & = &  \overline{n}_{\text{grav}} + \overline{n}_{\text{heat}} \label{eq:n} \\
    & := & \frac{2\Lambda_{\text{grav}}}{\gamma_m}+\frac{2\Lambda_{\text{heat}}}{\gamma_m}
    \label{eq:lambda-sum}
  \end{eqnarray}
  with (as above)
  \begin{equation}
  \Lambda_{\text{grav}} = \frac{2\pi}{3} \frac{G\Delta}{\omega_m} \,, \quad
  \Lambda_{\text{heat}} = \frac{k_B T}{\hbar Q} 
  \end{equation}
  and 
  \begin{equation}
  |\beta(t)|^2=\frac{4g_0^2}{\omega_m^2} \sin^2(\omega_m t/2)
  \end{equation}
  
  Continuing with the protocol from Step 4, now results in the state of the atom-field system
  \begin{equation}
  \rho_{af}(t) =\rho_a(t)\otimes |00\rangle\langle 00|
  \end{equation} where 
  \begin{equation}
  \rho_a(t)=\frac{1}{2}(|ge\rangle\langle ge|+|eg\rangle\langle eg|+R^*|eg\rangle\langle ge|+R|ge\rangle\langle  eg|)
  \end{equation}
The suppression of coherence due to the thermal state of the mechanics has been transferred to a reduction of entanglement in the atomic sources. A readout of the atomic sources will reveal this through either state tomography or via a reduction in a CHSH correlation for a Bell-type experiment.

The function $R(t)$ is a periodic function of time. At each period of the motion it returns to its initial value of zero and the cavity field state would return to the fully entangled state it was in after Step 2. If we chose $T=2\pi/\omega_m$ then the protocol will return the atomic system to the same entangled state in which it began. This is because we have ignored the heating of the mechanics over the period $T$ so the only way decoherence enters is through the initial thermal excitation of the mechanics. In effect the protocol is a thermometer.  We thus see that for maximum effect we need to ensure $g_0 \gg \omega_m$. On the other hand, gravitational heating requires a small value of $\omega_m$ and typically such OM systems have $g_0/\omega_m \ll 1 $. Perhaps technical advances will enable OM systems with long mechanical periods and large single photon coupling. Of course this will also require sub hertz cavity line widths.  In the (exceptionally) optimistic case we can take $T\sim 1 $ nK,  $\omega_m \sim 1$ s$^{-1}$, $\gamma_m\sim 10^{-10}$ s$^{-1}$ so that $Q\sim 10^{10}$.

\subsection{An experimental test of the model} \label{sec:exp-test}

In \cref{sec:grav-deco-model} above, an optomechanical setting has been described. Making some assumptions about how gravitational decoherence influences the optomechanical system, a model has been given that describes how the state of the optomechanical system changes over time. In this section, we consider this model for the state of the optomechanical system as given and analyze it using our decoherence test formalism. We calculate the amount of decoherence that would be introduced to the optomechanical system if the model was correct. We compare this to the amount of decoherence that one would observe if there was no such gravitational decoherence, determining the difference between the two predictions. We devise an experiment that aims at estimating the actual amount of decoherence at a point in time when this difference is maximal. This turns the optomechanical experiment into a test that allows to falsify the above model for gravitational decoherence if it was wrong. This shows that the decoherence testing formalsim presented in this work can be applied in situations where the physical process is unknown. It allows to subject proposed models of the process to a consistency check.

We first present the predicted values of $\Dec(A|E)_\rho$ of the optomechanical system for the two cases where gravitational decoherence is present or absent, respectively, for some example parameters of the experiment. We then calculate the CHSH value $\beta$ that one would have to measure in order to falsify the model for gravitational decoherence. 

The main quantity of interest in our analysis is the decoherence quantity $\Dec(A|E)_\rho$ for the state $\rho_{AB} = \rho_f(t)$ desribed in equation \eqref{eq:rho-f}. This can be calculated using \cref{lem:max-entropy-bell-diagonal}. It turns out to be
\begin{align}
  \Dec(A|E)_\rho &= \frac{1}{4} \left( 1+\sqrt{1-R^2} \right) \\
  &= \frac{1}{4} \left( 1+\sqrt{1-\exp\left(-4(1+2\overline{n})
  \frac{g_0^2}{\omega_m^2} \sin^2 \left(\frac{\omega_m t}{2}\right) \right)} \right) \,.
\end{align}
If the above model is correct and gravitational decoherence occurs, both the gravitational interaction and the mechanical heating contribute to the average vibrational quantum number, i.e. we have
\begin{align}
  \overline{n} &= \overline{n}_{\text{grav}} + \overline{n}_{\text{heat}} \\
  &= \frac{4\pi G}{3} \frac{1}{\gamma_m \omega_m} \Delta 
  + \frac{2k_B}{\hbar} \frac{1}{\omega_m} T
\end{align}
and thus
\begin{align} \label{eq:gd-included}
  \Dec(A|E)_\rho = \frac{1}{4} \left( 1+\sqrt{1-\exp\left(-4\left(1+2\left(\frac{4\pi G}{3} 
  \frac{1}{\gamma_m \omega_m} \Delta + \frac{2k_B}{\hbar} \frac{1}{\omega_m} T\right)\right)
  \frac{g_0^2}{\omega_m^2} \sin^2 \left(\frac{\omega_m t}{2}\right) \right)} \right)
  \quad \text{(gr.\ dec.\ included).}
\end{align}
If gravitational decoherence is absent, then only the mechanical heating contributes to the average vibrational quantum number, i.e. we have
\begin{align}
  \overline{n} &= \overline{n}_{\text{heat}} \\
  &= \frac{2k_B}{\hbar} \frac{1}{\omega_m} T
\end{align}
and thus
\begin{align} \label{eq:gd-neglected}
  \Dec(A|E)_\rho = \frac{1}{4} \left( 1+\sqrt{1-\exp\left(-4\left(1+2\left(\frac{2k_B}
  {\hbar} \frac{1}{\omega_m} T\right)\right)
  \frac{g_0^2}{\omega_m^2} \sin^2 \left(\frac{\omega_m t}{2}\right) \right)} \right) 
  \quad \text{(gr.\ dec.\ neglected).}
\end{align}
Figure \ref{fig:decs} shows how the decoherence quantity in equation \eqref{eq:gd-included} as a function of time varies for different materials of the mechanical element and different temperatures, compared to the case where there is no gravitational decoherence as in equation \eqref{eq:gd-neglected}.

\begin{figure}[htb]
  \begin{center}
  \includegraphics{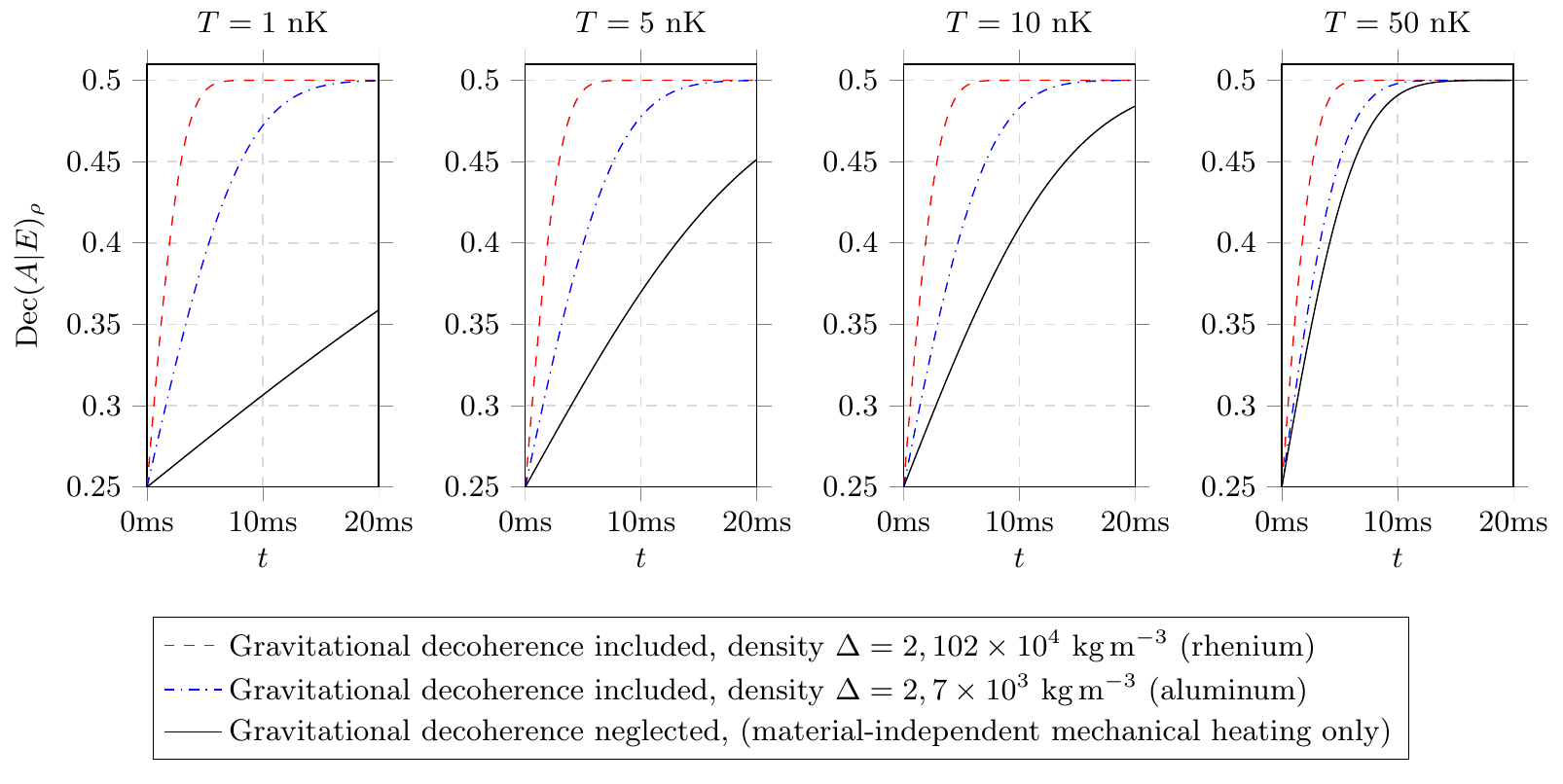}
    \caption{\textbf{Predicted values of the decoherence quantity in the optomechanical 
      experiment.} 
      The decoherence quantity $\Dec(A|E)_\rho$ as in Equation \eqref{eq:gd-included} is 
      plotted as a function of time for different temperatures and two different materials 
      of the mechanical element. In addition, $\Dec(A|E)_\rho$ is plotted for the case 
      where there is no gravitational decoherence, Equation \eqref{eq:gd-neglected}. The calculations have been made for the 
      experimental parameters $g_0 = 1 \ \text{s}^{-1}$, 
      $\omega_m = 1 \ \text{s}^{-1}$ and $\gamma_m = 10^{-10} \ \text{s}^{-1}$.
    }
      \label{fig:decs}
  \end{center}
\end{figure}

In order to rule out the model for gravitational decoherence, one needs to measure a CHSH value $\beta$ which is incompatible with the value of $\Dec(A|E)_\rho$ given in \eqref{eq:gd-included}. The minimal value $\beta_{\text{fals}}$ of $\beta$ that needs to be measured for this falsification can be calculated using \Cref{thm:feasible-region}: Using MATLAB, we numerically evaluated the quantum bound on $\Dec(A|E)_\rho$, which is given as a point-wise maximization problem in \cref{thm:feasible-region}. We inverted the resulting set of data points and interpolated a function from the resulting data using Mathematica. The resulting function takes a value of $\Dec(A|E)_\rho$ as its input and outputs the minimal $\beta$ that needs to be exceeded in a measurement in order to rule out the given value of $\Dec(A|E)_\rho$. Thus, applying this function to the curves of the $\Dec(A|E)_\rho$ values of the gravitational decoherence model in Figure \ref{fig:decs} yields the curves for $\beta_{\text{fals}}$. The results are plotted in \cref{fig:betas} for the same materials and temperatures as above.

\begin{figure}[htb]
  \begin{center}
    \includegraphics{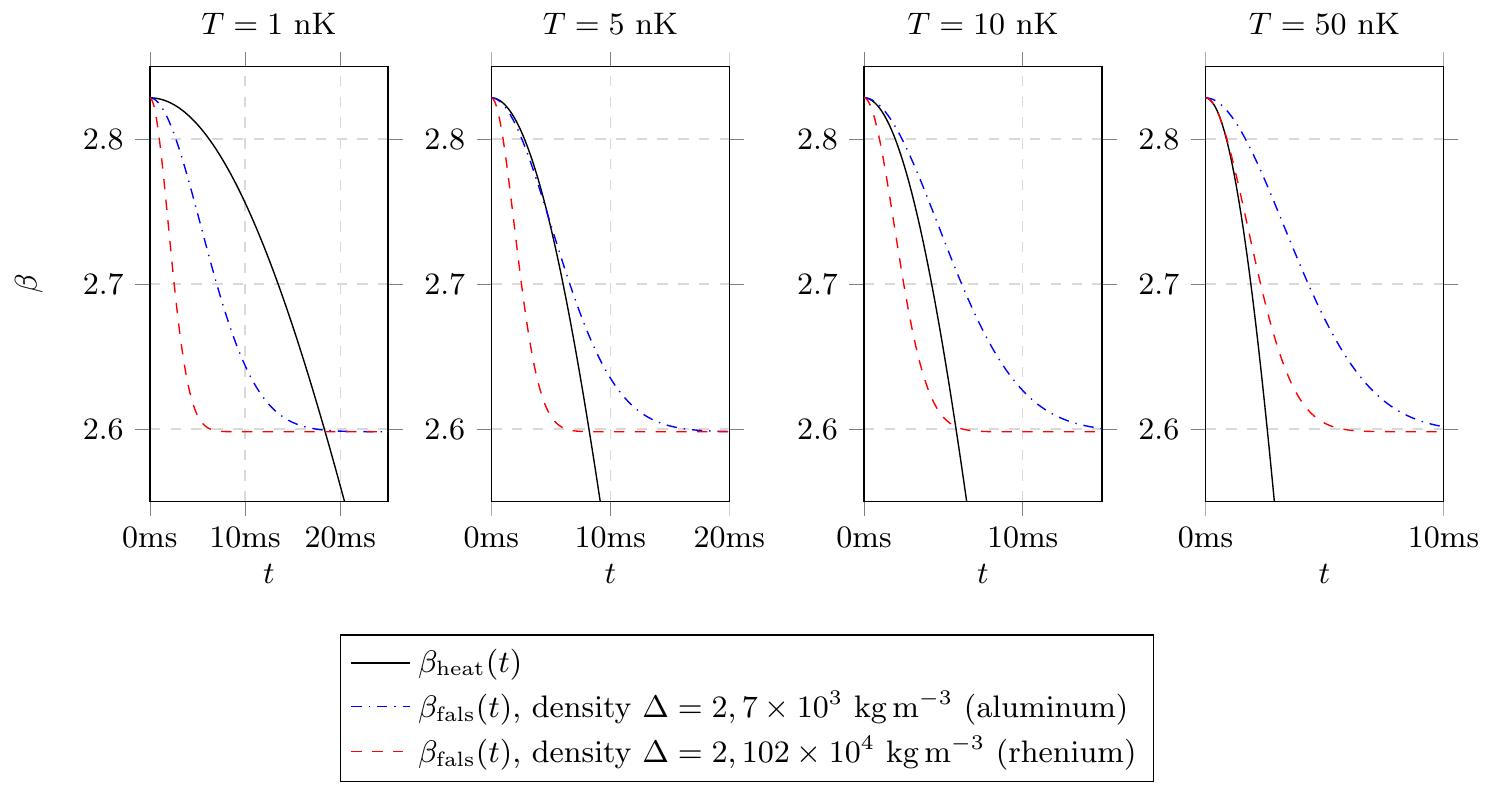}
    \caption{\textbf{Minimal CHSH values for the falsification of the gravitational 
      decoherence model.} 
      The quantity $\beta_{\text{fals}}$, which is the minimal value that needs to be exceeded
      in the measurement of the CHSH value $\beta$ in order to rule out the gravitational 
      decoherence model, is plotted as a function of time for the same materials and 
      temperatures as above. In addition, the value $\beta_{\text{mech}}$ is plotted, which is 
      the CHSH value that can actually be measured using the standard CHSH measurement in the 
      case where gravitational decoherence is absent and only mechanical heating contributes 
      to the decoherence.}
\label{fig:betas}
  \end{center}
\end{figure}

In order to determine whether it is promising to measure a value of $\beta$ that lies above $\beta_{\text{fals}}$, we need to determine the value $\beta_{\text{mech}}$ of $\beta$ which is predicted in the case where gravitational decoherence is absent, i.e. where we only have mechanical heating. We can do that exactly: Equation \eqref{eq:rho-f} gives us an expression for the state, which we consider for the value of $R$ given in the case of mechanical heating only, $\overline{n} = \overline{n}_{\text{mech}}$. Then we calculate the value of $\beta_{\text{mech}}$ for the case where the measurements are taken to be the standard CHSH measurements
\begin{align*}
  &A_0 = \sigma_x \,, &&A_1 = \sigma_z \,, \\
  &B_0 = \frac{\sigma_x - \sigma_z}{\sqrt{2}} \,, &&B_1 = \frac{\sigma_x + \sigma_z}{\sqrt{2}}
  \,,
\end{align*}
where $\sigma_x$, $\sigma_z$ are the Pauli $x$- and $z$-operator, respectively. The resulting curves are shown in \cref{fig:betas} as solid curves. It turns out that for the relevant time interval (where $\beta_{\text{mech}}$ is larger than either of the $\beta_{\text{fals}}$), the curve of $\beta_{\text{mech}}$ for the standard CHSH measurements is almost identical to the curve one would get for the optimal measurements for each time $t$. The latter can be calculated using a formula presented in \cite{HHH95}. This is an experimentally desirable fact: Using a fixed measurement independent of the measurement time is almost optimal.

The most promising measurement time for a falsification of the gravitational decoherence model is given by the time when $\beta_{\text{mech}}$ (that one may hope to actually measure) is high but $\beta_{\text{fals}}$ (which one needs to exceed) is low. Thus, the optimal measurement time can be calculated as the time $t_{\text{max}}$ that maximizes the gap function
\begin{align}
  g(t) := \beta_{\text{mech}}(t) - \beta_{\text{fals}}(t) \,.
\end{align}
This gap function is depends on the density $\Delta$ of the mechanical element and its temperature $T$. One can see that temperatures that look promising for a falsification measurement when looking at the $\Dec(A|E)_\rho$ values in Figure \ref{fig:decs} turn out to be too warm when looking at the experimentally relevant analysis of the $\beta$ values in \cref{fig:betas}. As an example, we have calculated the optimal measurement times for $T = 1 \ \text{nK}$ for the densities of aluminum and rhenium. They are visualized in \cref{fig:gaps}. If there is no gravitational decoherence, one needs to measure values of $\beta$ that are $\sim 0.1$ close (aluminum) or $\sim 0.2$ close (rhenium) to the value that one can maximally measure using the standard CHSH measurements, in order to exclude gravitational decoherence.

\begin{figure}[htb]
  \begin{center}
    \includegraphics{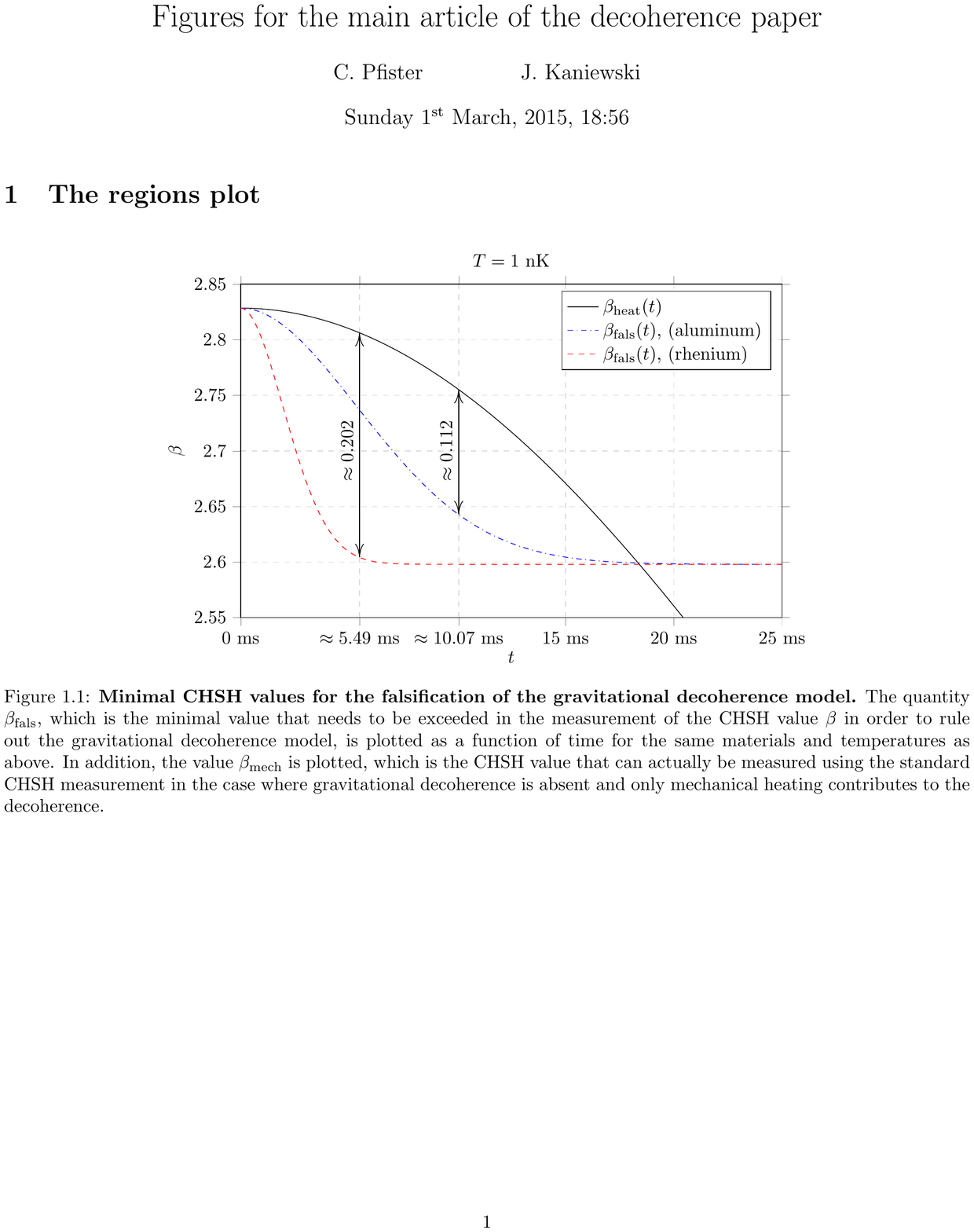}
    \caption{\textbf{Optimal measurement times for ruling out the gravitational decoherence 
      model.} 
      The three plots are identical to the ones in the leftmost box in \cref{fig:betas}, i.e.
      for $T = 1 \ \text{nK}$. In addition, the time $t_{\text{max}}$ at which the gap $g(t)$ 
      between $\beta_{\text{mech}}$ and $\beta_{\text{fals}}$ is maximal is indicated for the 
      two cases where the material of the mechanical element has the density of aluminum or 
      rhenium.}
 \label{fig:gaps}
  \end{center}
\end{figure}

\clearpage

\end{appendix}

\end{document}